\numberwithin{equation}{section}
\newtheorem{theorem}{Theorem}[section]
\newtheorem{lem}{Lemma}[section]
\newtheorem{pro}{Proposition}[section]
\newtheorem{cor}{Corollary}[section]
\newtheorem{rem}{Remark}[section]
\newtheorem{rems}{Remarks}[section]
\newtheorem{ex}{Example}[section]
\newtheorem{defi}{Definition}[section]
\newtheorem{hyp}{Assumption}[section]
\newtheorem{con}{Conjecture}[section]
\newcommand{\bt}{\begin{theorem}}
\newcommand{\et}{\end{theorem}}
\newcommand{\bl}{\begin{lem}}
\newcommand{\el}{\end{lem}}
\newcommand{\bp}{\begin{pro}}
\newcommand{\ep}{\end{pro}}
\newcommand{\bcor}{\begin{cor}}
\newcommand{\ecor}{\end{cor}}
\newcommand{\bcon }{\begin{con} \rm }
\newcommand{\econ }{\end{con}}
\newcommand{\lab }{\label }
\newcommand{\bd}{\begin{defi} \rm }
\newcommand{\ed}{\end{defi}}
\newcommand{\brem }{\begin{rem} \rm }
\newcommand{\erem }{\end{rem}}
\newcommand{\brems }{\begin{rems} \rm }
\newcommand{\erems }{\end{rems}}
\newcommand{\bhyp }{\begin{hyp} \rm }
\newcommand{\ehyp }{\end{hyp}}
\newcommand{\bex}{\begin{ex} \rm }
\newcommand{\eex}{\end{ex}}
\newcommand{\be}{\begin{equation}}
\newcommand{\ee}{\end{equation}}
\newcommand{\bde}{\begin{displaymath}}
\newcommand{\ede}{\end{displaymath}}
\newcommand{\beq}{\begin{eqnarray*}}
\newcommand{\eeq}{\end{eqnarray*}}
\newcommand{\beqa}{\begin{eqnarray}}
\newcommand{\eeqa}{\end{eqnarray}}
\newcommand{\bea}{\begin{align*}}
\newcommand{\eea}{\end{align*}}
\def\proof{\noindent {\it Proof. $\, $}}
\def\endproof{\hfill $\Box$ \vskip 5 pt}
\def\I{\mathbf{1}}
\def\wh{\widehat}
\def\wt{\widetilde}
\def\tilde{\widetilde}
\def\phi{\varphi }
\newcommand{\VCc}{V^C}
\newcommand{\norm}{|\!\!|\!|\!\!|}
\newcommand{\Sst}{S^{s,t}}
\newcommand{\aass}{\mbox{\rm a.s.}}
\newcommand{\aaee}{\mbox{\rm a.e.}}
\newcommand{\zzb}{\bar z^i_t}
\newcommand{\ssx}{s}
\newcommand{\ssy}{s}
\newcommand{\Leb }{\ell }
\newcommand{\cac}{\alpha }
\newcommand{\zzhi}{{\widehat z}^i_t}
\newcommand{\pA}{A}
\newcommand{\pC}{C}
\newcommand{\pCc}{c}
\newcommand{\etab}{\eta^b}
\newcommand{\etal}{\eta^l}
\newcommand{\Blr}{B^{l}}
\newcommand{\Bbr}{B^{b}}
\newcommand{\Bilr}{B^{i,l}}
\newcommand{\Bibr}{B^{i,b}}
\newcommand{\rll}{r^{l}}
\newcommand{\rbb}{r^{b}}
\newcommand{\ribb}{r^{i,b}}
\newcommand{\Vnet}{{V}^{\textrm{net}}}
\newcommand{\whVnet}{\widehat{V}^{\textrm{net}}}
\newcommand{\Vnettl}{{\widetilde{V}}^{l,\textrm{net}}}
\newcommand{\Vnettb}{{\widetilde{V}}^{b,\textrm{net}}}
\def\Xhat{\widehat{x}}
\def\Xtil{\widetilde{x}}
\def\Fhat{\widehat{\phi}}
\def\Ftil{\widetilde{\phi}}
\def\t1{\tau_{(1)}}
\def\rr{\mathbb R}
\def\ff{{\mathbb F}}
\def\gg{{\mathbb G}}
\def\G{{\cal G}}
\def\VLL{V^0}
\def\P{\mathbb P}
\def\PT{\wt {\mathbb P}}
\def\PTb{\wt {\mathbb P}^{\beta}}
\def\EP{{\mathbb E}_{\mathbb P}}
\newcommand{\wHlamo}{\wh{\mathcal{H}}_{\lambda}^{2}}
\newcommand{\wHlamd}{\wh{\mathcal{H}}_{\lambda}^{2,d}}
\newcommand{\wHzero}{\wh{\mathcal{H}}_{0}^{2}}
\newcommand{\wHzerd}{\wh{\mathcal{H}}_{0}^{2,d}}
\newcommand{\sumik}{\textstyle{\sum}}
\newcommand{\Keywords}[1]{\par\noindent{\small{\bf Keywords\/}: #1}}
\newcommand{\Class}[1]{\par\noindent{\small{\bf Mathematics Subjects Classification (2010)\/}: #1}}
\title{{\Large \bf FAIR AND PROFITABLE BILATERAL PRICES UNDER \\ \vskip 1 pt FUNDING COSTS AND COLLATERALIZATION} \vskip 85 pt }
\author{Tianyang Nie and Marek Rutkowski\footnote{The research of Tianyang Nie and Marek Rutkowski
was supported under Australian Research Council's Discovery Projects funding scheme (DP120100895).}
\\ School of Mathematics and Statistics \\ University of Sydney
\\ Sydney, NSW 2006, Australia}
\date{\vskip 60 pt 1 December 2014 \vskip 60 pt}
\begin{document}

\maketitle

\begin{abstract}
Bielecki and Rutkowski \cite{BR-2014} introduced and studied a generic nonlinear market model, which includes several risky assets, multiple funding accounts and margin accounts. In this paper, we examine the pricing and hedging of contract both from the perspective of the hedger and the counterparty with arbitrary initial endowments. We derive inequalities for unilateral prices and we give the range for either fair bilateral prices or bilaterally profitable prices. We also study the monotonicity of a unilateral price with respect to the initial endowment. Our study hinges on results for BSDE driven by continuous martingales obtained
in \cite{NR3}, but we also derive the pricing PDEs for path-independent contingent claims of European style in a Markovian framework.
\vskip 20 pt
\Keywords{hedging, fair prices, funding costs, margin agreement, BSDE, PDE}
\vskip 20 pt
\Class{91G40,$\,$60J28}
\end{abstract}


\newpage

\section{Introduction} \label{sect1}

In Bielecki and Rutkowski \cite{BR-2014}, the authors introduced a generic nonlinear trading model for bilateral collateralized contracts, which includes several risky assets, multiple funding accounts, as well as the margin account. For related recent studies by other authors, see also \cite{BCPP11,BK09,BK11,SC12a,SC12b,PPB12,P10}. Using a suitable version of the no-arbitrage argument, they first discussed the hedger's fair price for a contract in the market model without collateralization (see Section 3.2 in \cite{BR-2014}). Subsequently, for a collateralized contract that can be replicated, they defined the hedger's ex-dividend price (see Section 5 in \cite{BR-2014}). It was also shown in \cite{BR-2014} that the theory of backward stochastic differential equations (BSDEs) is an important tool to compute the ex-dividend price (see, e.g., Propositions 5.2 and 5.4 in \cite{BR-2014}). It is worth mentioning that all the pricing and hedging arguments in \cite{BR-2014} are given from the viewpoint of the hedger and no attempt was made there to derive no-arbitrage bounds for unilateral prices and to examine the existence of fair bilateral prices.
In the present work, we consistently examine the issue of pricing and hedging of an OTC derivative contract from the perspective of the hedger and his counterparty. Since we work within a nonlinear trading set-up, where the nonlinearity may stem from the different cash interest rates, funding costs for risky assets and collateralization, the hedger's and counterparty's price do not necessarily coincide. Therefore, our goal is to compare the hedger's and counterparty's prices and to derive the range for no-arbitrage prices. In the case of different lending and borrowing rates, which is a relatively simple instance of a nonlinear market model, the no-arbitrage price of any contingent claims must belong to an arbitrage band with the lower (resp., upper) bound given by the counterparty's (resp., the hedger's) price of the contract (see Bergman \cite{B-1995}). In a recent paper by Mercurio \cite{M-2013}, the author extended the results from \cite{B-1995} by examining the pricing of European options in a model with different lending and borrowing interest rates and under collateralization.

As emphasized in \cite{BR-2014}, in the nonlinear setup (for instance, in a market model with different borrowing and lending interest rates), the initial endowments of the hedger and the counterparty play an important role in pricing considerations.
Unlike in the classic options pricing model, which enjoys the linearity of the no-arbitrage pricing rule, it is no longer true
that it suffices to consider the case where the initial endowments are null. This is due to the fact that, for instance, the hedger's ex-dividend price may depend on his initial endowment, in general  (see Proposition 5.2 in \cite{BR-2014}).
Note in this regard that the results established in \cite{B-1995} and \cite{M-2013} only cover the case when the initial endowments of the hedger and the counterparty are null. In this paper, one of our main goals is to examine how the initial endowment of each party affects his unilateral price. For the sake of concreteness, we consider the model with partial netting and collateralization which was introduced in \cite{BR-2014}. A similar analysis was also done for the model previously studied by Bergman \cite{B-1995}, but with non-zero initial endowments of counterparties (see Nie and Rutkowski \cite{NR4}). It is clear that the method developed in these papers can be applied to other set-ups.

This work is organized as follows.  In Section \ref{sect2}, we give a brief overview the set-up studied in
\cite{BR-2014} and we describe the main model considered in the foregoing sections, dubbed the market model with partial netting.
In Section \ref{sect3}, we present definitions of no-arbitrage and fair prices, as introduced in \cite{BR-2014}. Some preliminary results from  \cite{BR-2014} are extended to the case of a collateralized contract with an exogenous margin account and we introduce and discuss the concepts of {\it fair bilateral prices} and {\it bilaterally profitable prices}.
In Section \ref{sect4}, we show that the pricing and hedging problems for both parties can be represented
by solutions of some BSDEs and we establish the existence and uniqueness results for these BSDEs. Although
the BSDEs  are well known to be a convenient tool to deal with prices and hedging strategies (see, e.g.,
\cite{SC12a,SC12b,EPQ-1997}), we stress that the BSDEs studied in this work are formally derived using no-arbitrage
arguments under a judiciously chosen martingale measure, whereas in some other papers on funding costs the existence of
a `risk-neutral probability' is postulated a priori, rather than formally justified.
Section \ref{sect5}, which is the main part of this work, deals with the properties of unilateral prices.
Under alternative assumptions on initial endowments of both parties, we establish several inequalities for unilateral prices,
which in turn allow us to obtain the ranges for fair bilateral prices or bilaterally profitable prices. We also examine the monotonicity of prices with respect to initial endowment and we present the PDE approach within a Markovian framework.
Lengthy proofs of some results are gathered in the appendix.



\newpage

\section{Trading under Funding Costs and Collateralization}   \label{sect2}

Let us first recall the following setting of \cite{BR-2014} for the market models.
Throughout the paper, we fix a finite trading horizon date $T>0$ for our model of the financial market.
Let $(\Omega, \G, \gg , \P)$ be a filtered probability space satisfying the usual conditions of right-continuity and completeness, where the filtration $\gg = (\G_t)_{t \in [0,T]}$ models the flow of information available to all traders. For convenience, we assume that the initial $\sigma$-field ${\cal G}_0$ is trivial. Moreover, all processes introduced in what follows are implicitly assumed to be $\gg$-adapted  and any semimartingale is assumed to be c\`adl\`ag.

\noindent {\bf Risky assets.} For $i=1,2, \dots, d$, we denote by $S^i$ the {\it ex-dividend price} of the $i$th risky asset with the {\it cumulative dividend stream} $\pA^i$. $S^i$ is aimed to represent the price of any traded security, such as, stock, stock option, interest rate swap, currency option, cross-currency swap, CDS, CDO, etc.

\noindent {\bf Cash accounts.} The riskless  {\it lending} (resp., {\it borrowing}) {\it cash account} $B^l$ (resp., $B^b$) is used for unsecured lending (resp., borrowing) of cash. When the borrowing and lending cash rates are equal, we denote the {\it cash account} simply by $B^{0}$.

\noindent {\bf Funding accounts.}  We denote by $\Bilr$ (resp., $\Bibr$) the {\it lending} (resp., {\it borrowing}) {\it funding account} associated with the $i$th risky asset. In case when borrowing and lending rates are equal, we simply denote by $B^i$ the {\it funding account} for the $i$th risky asset. Unless explicitly stated otherwise, we work under the assumption:
long and short funding rates for each risky asset $S^i$ are identical, that is, $\Bilr=\Bibr=B^i$ for
$i=1, 2, \dots, d$.

\bhyp \lab{assumption for primary assets}
The price processes of {\it primary assets} are assumed to satisfy: \hfill \break
(i) For each $i=1,2,\dots , d$, the price $S^i$ is semimartingale and the cumulative dividend stream $\pA^i$ is finite variation process with $\pA^i_{0}=0$.\hfill \break
(ii) The riskless account $B^{l}$, $B^{b}$ and $B^{i}$ are strictly positive and continuous finite variation processes with $B^l_0=B^b_0=B^{i}_0=1$, for $i=1,2\dots , d$.
\ehyp

For a {\it bilateral financial contract}, or simply a {\it contract}, we mean an arbitrary c\`adl\`ag process $\pA$ of finite variation. The process $A$ is aimed to represent the {\it cumulative cash flows} of a given contract from time 0 till its maturity date $T$. By convention, we set $\pA_{0-}=0$.

The process $\pA$ is assumed to model all cash flows of a given contract, which are either paid out from the wealth or added to the wealth, as seen from the perspective of the {\it hedger} (recall that the other party is referred to as the {\it counterparty}).
Note that the process $A$ includes the initial cash flow $A_0$ of a contract at its inception date $t_0=0$.
For instance, if a contract has the initial {\it price} $p$ and stipulates that the hedger will receive cash flows  $\bar{\pA}_1,
\bar{\pA}_2, \dots , \bar{\pA}_k$ at times $t_1, t_2, \dots , t_k \in (0,T]$, then we set $A_0=p$ so that
\bde
\pA_t = p + \sum_{l=1}^k \I_{[t_l,T]}(t) \bar{\pA}_l .
\ede
The symbol $p$ is frequently used to emphasize that all future cash flows $\bar{\pA}_l$ for $l=1,2, \dots, k$ are explicitly specified by the contract's covenants, but the initial cash flow $A_0$ is yet to be formally defined and evaluated.
Valuation of a contract $A$ means, in particular, searching for the range of {\it fair values} $p$ at time $0$ from the viewpoint of either the hedger or the counterparty. Although the valuation paradigm will be the same for the two parties, due either to the asymmetry in their trading costs and opportunities, or the non-linearity of the wealth dynamics, they will typically obtain different sets of fair prices for~$A$. This is the main objective of our current work.

\subsection{Collateralization} \label{sect2.1}

In this paper, we examine the situation when the hedger and the counterparty enter a contract and either receive or post collateral with the value formally represented by a stochastic process~$\pC $, which is assumed to be a semimartingale (or, at least, a c\`adl\`ag process). The process $C$ is called the {\it margin account} or the {\it collateral amount}. Let
\be \lab{collss}
\pC_t = \pC_t \I_{\{ \pC_t \geq 0\}} +  \pC_t \I_{\{ \pC_t < 0\}} = \pC^+_t - \pC^-_t.
\ee
By convention, $\pC^+_t$ is the cash value of collateral received at time $t$ by the hedger, whereas $\pC^-_t$ represents the cash value of collateral posted by him. For simplicity of presentation, it is postulated throughout that only cash collateral may be delivered or received (for other conventions, see \cite{BR-2014}). We also make the following natural assumption regarding the
state of the margin account at the contract's maturity date.

\bhyp \lab{assumption for margin account on maturity date}
The $\gg$-adapted collateral amount process $C$ satisfies $C_T=0$.
\ehyp

The equality $C_T=0$ is a convenient way of ensuring that any collateral amount posted is returned in full to its owner when a contract matures, provided that the default event does not occur at $T$. Of course, if the default event is also modeled,
then one needs to specify the closeout payoff.

Let us first make some comments from the perspective of the hedger regarding the crucial features of the margin accounts.
The current financial practice typically requires the collateral amounts to be held in {\it segregated} margin accounts,
so that the hedger, when he is a collateral taker, cannot make use of the collateral amount for trading.
Another collateral convention encountered  in practice is {\it rehypothecation}, which refers to the situation where a bank is allowed to reuse the collateral pledged by its counterparties as collateral for its own borrowing.
Note that if the hedger is a collateral giver, then a particular convention regarding segregation or rehypothecation is immaterial for the wealth dynamics of his portfolio.

We are in a position to introduce trading strategies based on a finite family of primary assets.

\brem \label{how to discuss counterparty}
For simplicity, we discuss from the point view of hedger, unless explicitly stated. The similar discussions hold for the counterparty by changing $(A,C)$ to $(-A,-C)$.
\erem

\bd \lab{tsx2}
A {\it collateralized hedger's trading strategy} is a quadruplet $(x,\phi , \pA , \pC )$ where a portfolio $\phi $, given by
\be \lab{vty}
\phi = \big( \xi^1,\dots , \xi^{d},\psi^{l},\psi^{b},\psi^1,\dots ,\psi^{d+1}, \etab, \etal,\eta^{d+2} \big)
\ee
is composed of the {\it risky assets} $S^i,\, i=1,2,\ldots,d$, the {\it unsecured lending cash account} $B^l$ the {\it unsecured borrowing cash account} $B^b$, the {\it funding accounts} $B^i,\, i=1,2,\ldots,d$, the {\it borrowing account} $B^{d+1}$  for the posted cash collateral, the {\it collateral accounts} $B^{\pCc,b}$ and $B^{\pCc,l}$, and the {\it lending account} $B^{d+2}$ associated with the received cash collateral.
\ed

If $B^{\pCc,b}\neq B^{\pCc,l}$,  for example if the hedger post the collateral, he will receives interest from the counterparty determined by $B^{\pCc,l}$, i.e., the counterparty pays the hedger the interest determined by $B^{\pCc,l}$ not $B^{\pCc,b}$. This creates the non-identical financial environment between the hedger and counterparty.
We make the following standing assumption.

\bhyp \lab{assumption for collateral account}
The collateral accounts $B^{c,l}$, $B^{c,b}$, $B^{d+1}$, $B^{d+2}$  are strictly positive, continuous processes
of finite variation with $B^{c,l}_0=B^{c,b}_0=B^{d+1}_0=B^{d+2}_0=1$.
\ehyp

\brem \label{remark for cash collateral}
The {\it cash collateral} is described by the following postulates: \hfill \break
(i) If the hedger receives at time $t$ the amount $\pC^+_t$ as cash collateral, then he pays
to the counterparty interest determined by the amount $\pC^+_t$ and the account $B^{\pCc,b}$.
Under segregation, he receives interest determined by the amount $\pC^+_t$ and the account $B^{d+2}$ and thus $\eta^{d+2}_t B^{d+2}_t = C^+_t$.  When rehypothecation is considered,  the hedger may temporarily (that is, before the contract's maturity date or the default time, whichever comes first) utilize the cash amount $\pC^+_t$ for his trading purposes, then $\eta^{d+2}= 0$.
\hfill \break (ii) If the hedger posts a cash collateral at time $t$, then the collateral amount is borrowed from
 the dedicated {\it collateral borrowing account} $B^{d+1}$. He receives interest determined by the amount $\pC^-_t$ and the collateral account $B^{\pCc,l}$. We postulate that
\be \lab{565656}
\psi^{d+1}_t B^{d+1}_t = - C^-_t .
\ee
\erem

\subsection{Self-Financing Trading Strategies} \label{sect2.2}

In the context of a collateralized contract, we find it convenient to introduce the following three processes:
\hfill \break (i) the process $V_t(x,\phi , \pA ,\pC)$ representing the hedger's wealth at time $t$,
\hfill \break (ii) the process $V_t^p(x,\phi , \pA ,\pC)$ representing the value of hedger's portfolio at time $t$,
\hfill \break (iii) the {\it adjustment process} $\VCc_t(x,\phi , \pA ,\pC) := V_t(x,\phi , \pA ,\pC) - V^p_t(x,\phi , \pA ,\pC)$, which is aimed to quantify the impact of the margin account on trading strategy.

\bd \lab{ts2x}
The hedger's {\it portfolio's value} $V^p(x,\phi , \pA ,\pC)$ is given by
\be \lab{poutf1}
V^p_t (x,\phi , \pA ,\pC) =  \sum_{i=1}^{d} \xi^i_t S^i_t + \sum_{j=1}^{d+1} \psi^j_t B^j_t+\psi^l_t B^l_t+\psi^b_t B^b_t.
\ee
The hedger's {\it wealth} $V(x,\phi , \pA ,\pC)$ equals
\be \lab{portf1}
V_t (x,\phi , \pA ,\pC) =  \sum_{i=1}^{d} \xi^i_t S^i_t + \sum_{j=1}^{d+1} \psi^j_t B^j_t+\psi^l_t B^l_t+\psi^b_t B^b_t
+  \etab_t B^{\pCc,b}_t+ \etal_t B^{\pCc,l}_t +  \eta^{d+2}_t B^{d+2}_t  .
\ee
\ed

It is clear that the adjustment process $\VCc(x,\phi , \pA ,\pC)$ equals
\be \lab{portf1b}
 \VCc_t (x,\phi , \pA ,\pC) = \etab_t B^{\pCc,b}_t+ \etal_t B^{\pCc,l}_t +  \eta^{d+2}_t B^{d+2}_t
 = - C_t +  \eta^{d+2}_t B^{d+2}_t
\ee
where $\etab_t =-  (B^{\pCc,b}_t)^{-1}\pC_t^+$ and $\etal_t =(B^{\pCc,l}_t)^{-1} \pC_t^-$.

The self-financing property of the hedger's strategy is defined in terms of the dynamics of the value process.
Note that we use here the process $V^p(x,\phi , \pA ,\pC)$, and not $V(x,\phi , \pA ,\pC)$,
to emphasize the role of $V^p(x,\phi , \pA ,\pC)$ as the value of the hedger's portfolio of traded assets.
Observe also that the equality  $V^p(x,\phi , \pA ,\pC) = V(x,\phi , \pA ,\pC)$ holds when the process $\pC$ vanishes, that is, $C=0$, since then $\eta^{d+2}=0$ as well. Let $x$ stand for the {\it initial endowment} of the hedger.

\bd \label{definition for self financing}
A collateralized hedger's trading strategy $(x,\phi , \pA , \pC )$ with $\phi $ given by \eqref{vty} is {\it self-financing} whenever
the {\it portfolio's value} $V^p(x,\phi , \pA ,\pC)$, which is given by \eqref{poutf1}, satisfies, for every $t \in [0,T]$,
\begin{align} \lab{porfx}
V^p_t (x,\phi , \pA ,\pC)  = \, \, &x + \sum_{i=1}^{d} \int_{(0,t]} \xi^i_u \, d(S^i_u + \pA^i_u )
+  \sum_{j=1}^{d+1} \int_0^t \psi^j_u \, dB^j_u+\int_0^t \psi^l_u \, dB^l_u+\int_0^t \psi^b_u \, dB^b_u + \pA_t \\
&+ \int_0^t \etab_u\, dB^{\pCc,b}_u + \int_0^t \etal_u \, dB^{\pCc,l}_u + \int_0^t \eta^{d+2}_u\, dB^{d+2}_u
  - \VCc_t (x,\phi , \pA ,\pC). \nonumber
\end{align}
\ed

The terms $\int_0^t \etab_u\, dB^{\pCc,b}_u$, $ \int_0^t \etal_u \, dB^{\pCc,l}_u$ and $\int_0^t \eta^{d+2}_u\, dB^{d+2}_u$ represent the accrued interest generated by the margin account. The first two processes are given uniquely in terms
of $C$ since  $\etab_t =-  (B^{\pCc,b}_t)^{-1}\pC_t^+$ and $\etal_t =(B^{\pCc,l}_t)^{-1} \pC_t^-$, whereas the last one depends on the collateral convention.

\subsection{Market Model with Partial Netting} \label{sect2.3}

In this section, we consider a specific model with partial netting and collateralization with rehypothecation, which was previously studied in \cite{BR-2014}. Besides postulating that the accounts $\Blr$ and $\Bbr$ may differ, we also allow for the inequality $B^{i,l}\ne B^{i,b},\, i=1,2 \dots , d$ to hold, in general. We also make the following simplifying assumption.

\bhyp \lab{assumption for collateral borrowing account}
The collateral borrowing account $B^{d+1}$ coincides with $B^{b}$.
\ehyp

We follow here the offsetting/netting terminology adopted in \cite{BR-2014}.
Hence by {\it offsetting} we mean the compensation of long and short positions either for a given risky asset or for the non-risky asset. This compensation is not relevant, unless the borrowing and lending rates are different for at least one risky asset or for the cash account. By {\it netting}, we mean the aggregation of long or short cash positions across various risky assets, which share some funding accounts. Obviously, several alternative models with netting can be studied, for more details, see \cite{BR-2014}.

In this paper, we focus on the case of partial netting positions across risky assets, which means that the offsetting of long/short positions for every risky asset combined with some form of netting of long/short cash positions
for all risky assets that are funded from common funding accounts. More precisely, we postulate
that all short cash positions in risky assets $S^1, S^2, \dots , S^d$ are aggregated and invested in the
common lending account $B^{l}$, which means that we assume that $B^{i,l}=B^{l}$ for $i=1,\ldots, d$. This means
that all positive cash flows, inclusive of proceeds from short-selling of risky assets,
are transferred to the cash account $\Blr$. By contrast, long cash positions in risky assets $S^i$ are assumed to be funded
from their respective funding accounts $\Bibr$. For brevity, the trading framework described in this subsection
will be henceforth referred to as the {\it market model with partial netting.}

Accordingly, we consider a trading portfolio (note that $\eta^{d+2}=0$ in case of rehypothecation,
as was explained in Remark \ref{remark for cash collateral})
\[
\phi = \big(\xi^1,\dots ,\xi^d, \psi^{l}, \psi^{b},\psi^{1,b},\dots ,\psi^{d,b}, \etab, \etal\big)
\]
and the corresponding wealth process for the hedger
\bde  
V_t(x, \phi , \pA, C ) = \psi^{l}_t \Blr_t+\psi^{b}_t\Bbr_t+  \sum_{i=1}^d ( \xi^i_tS^i_t+\psi^{i,b}_t B^{i,b}_t)+ \etab_t B^{\pCc,b}_t+ \etal_t B^{\pCc,l}_t.
\ede
It follows that,  for all $t\in[0,T]$,
\begin{equation}\label{portfolio choose}
\begin{array}
[c]{l}
\etab_t =-  (B^{\pCc,b}_t)^{-1}\pC_t^+, \quad \etal_t =(B^{\pCc,l}_t)^{-1} \pC_t^- , \quad \psi^{i,b}_t =  -(\Bibr_t)^{-1} (\xi^i_t S^i_t)^+.
\end{array}
\end{equation}
In the present set-up, the hedger's trading strategy $(x, \phi , \pA, C )$ is self-financing  whenever
the process $V^p(x,\phi , \pA ,\pC)$, which is given by
\be  \label{value of strategy 1}
V^p_t (x,\phi , \pA ,\pC) = \psi^l_t B^l_t+\psi^b_t B^b_t+\sum_{i=1}^{d} \big(\xi^i_t S^i_t + \psi^{i,b}_t B^{i,b}_t \big),
\ee
satisfies
\begin{align} \label{value of strategy 2}
V^p_t (x,\phi , \pA ,\pC)  = \, \, &x + \sum_{i=1}^{d} \int_{(0,t]} \xi^i_u \, d(S^i_u + \pA^i_u )
+  \sum_{j=1}^{d} \int_0^t \psi^{i,b}_u \, dB^{i,b}_u+\int_0^t \psi^l_u \, dB^l_u+\int_0^t \psi^b_u \, dB^b_u  \nonumber\\
&+ \int_0^t \etab_u\, dB^{\pCc,b}_u + \int_0^t \etal_u \, dB^{\pCc,l}_u
  - \VCc_t (x,\phi , \pA ,\pC) + \pA_t
\end{align}
where in turn
\bde
\VCc_t (x,\phi , \pA ,\pC) = \etab_t B^{\pCc,b}_t+ \etal_t B^{\pCc,l}_t=-C_{t}.
\ede
From equations (\ref{portfolio choose}) and (\ref{value of strategy 1}), we get
\bde  
V^p_t (x,\phi , \pA ,\pC) = \psi^{l}_t \Blr_t + \psi^{b}_t \Bbr_t - \sum_{i=1}^d ( \xi^i_t S^i_t )^-.
\ede
Since we postulate that $ \psi^{l}_t\ge 0$, $\psi^{b}_t\leq0$ and $\psi^{l}_t\psi^{b}_t=0$ for all $t\in[0,T]$, we also obtain
\bde 
\psi^{l}_t = (\Blr_t)^{-1} \Big( V^p_t (x,\phi , \pA ,\pC) + \sum_{i=1}^d ( \xi^i_t S^i_t )^- \Big)^+
\ede
and
\bde
\psi^{b}_t = - (\Bbr_t)^{-1} \Big( V^p_t (x,\phi , \pA ,\pC) + \sum_{i=1}^d ( \xi^i_t S^i_t )^- \Big)^-.
\ede
Finally, the self-financing condition for the trading strategy $(x,\phi , \pA ,\pC)$ can be represented as follows
\begin{align} \label{value of strategy 3}
dV^p_t (x,\phi , \pA ,\pC)  = \, \, & \sum_{i=1}^d \xi^i_t \, (dS^i_t + d\pA^i_t)
- \sum_{i=1}^d ( \xi^i_t S^i_t )^+  (\Bibr_t)^{-1} \, d\Bibr_t + d\pA_t^{C} \\
&+  \Big(V^p_t (x,\phi , \pA ,\pC) + \sum_{i=1}^d ( \xi^i_t S^i_t )^- \Big)^+ (\Blr_t)^{-1} \, d\Blr_t\nonumber\\
&-  \Big( V^p_t (x,\phi , \pA ,\pC)+ \sum_{i=1}^d ( \xi^i_t S^i_t )^- \Big)^- (\Bbr_t)^{-1}\, d\Bbr_t \nonumber
\end{align}
where $A^{C}:=A+C+F^{C}$ and, in view of Assumption \ref{additional assumption for collateral account},
\begin{align} \label{definition for FC}
F^{C}_t&:=\int_0^t \etab_u\, dB^{\pCc,b}_u + \int_0^t \etal_u \, dB^{\pCc,l}_u \nonumber \\
&=-\int_0^t \pC_u^+ (B^{\pCc,b}_u)^{-1}\, dB^{\pCc,b}_u + \int_0^t \pC_u^- (B^{\pCc,l}_u)^{-1}\, dB^{\pCc,l}_u \\
&=-\int_0^t  \pC_u (B^{\pCc}_u)^{-1} \, dB^{\pCc}_u. \nonumber
\end{align}

In general, one may consider the situation where the hedger and the counterparty are exposed to a different financial environment, which means that their respective hedging strategies for the same contract are based on different risky assets, cash accounts, funding accounts and collateral accounts. To make the analysis less cumbersome, we henceforth assume that the hedger and counterparty face exactly the same market conditions, but they may have different initial endowments. In particular, we make the following
standing assumption.

\bhyp \label{additional assumption for collateral account}
The collateral accounts $B^{c,l}$ and $B^{c,b}$ satisfy $B^{\pCc,l}=B^{\pCc,b}=B^{c}$.
\ehyp

\brem
Suppose that Assumption \ref{additional assumption for collateral account} is not postulated, so that the accounts
 $B^{c,b}$ and $B^{c,l}$ may be different and thus the hedger and the counterparty
may be subject to different financial conditions with respect to the margin account.
Then we define the process $\Theta $ by setting
\bde
\Theta_{t} := (-A)_t^{-C}+A_t^{C}=-\int_0^t |\pC_u| (B^{\pCc,b}_u)^{-1}\, dB^{\pCc,b}_u
+ \int_0^t |\pC_u| (B^{\pCc,l}_u)^{-1}\, dB^{\pCc,l}_u.
\ede
Let us postulate, in addition, that the processes $B^{\pCc,b}$ and $B^{\pCc,l}$ are absolutely
continuous, so that
\begin{align*}
dB^{\pCc,b}_t = r^{\pCc,b}_t B^{\pCc,b}_t \, dt, \quad dB^{\pCc,l}_t = r^{\pCc,l}_t B^{\pCc,l}_t \, dt,
\end{align*}
for some non-negative processes $r^{\pCc,b}$ and $r^{\pCc,l}$ satisfying $r^{\pCc,l}\leq r^{\pCc,b}$. The additional assumption that $r^{\pCc,l}\leq r^{\pCc,b}$ means that the counterparty has the advantage over the hedger in regard to the margin account.
Indeed, when posting (resp., receiving) the collateral, the counterparty obtains a higher (resp., lower) interest than the hedger.

Under the assumption that $r^{\pCc,l}\leq r^{\pCc,b}$, the process $\Theta$ is decreasing and thus $\Theta_{t}\leq0$ for all $t\in[0,T]$. Then, in all foregoing considerations in the paper, the process $A^{C}$ should be replaced by $A^{C}-\Theta$. For example, in Lemma \ref{nettled wealth formula} or in Section \ref{sect4} when we consider the counterparty's BSDE of the contract $(A,C)$, we should replace $A^{C}$ by $A^{C}-\Theta$ or, equivalently, replace $F^{C}$ by $F^{C}-\Theta$. Since $\Theta $ is a decreasing process, we claim that all the results will still hold, except for Theorem \ref{stability property of price}.

Let us finally mention that if  $r^{\pCc,l}\ge r^{\pCc,b}$, which means that the hedger has the advantage over
the counterparty in regard to the margin account, then the process $\Theta$ is increasing and thus most results
established in what follows will no longer be valid.
\erem

The following commonly standard assumption will allow us to derive more explicit formulae for the wealth dynamics
and thus also to compute the so-called {\it generator} (or {\it driver}) for the associated BSDEs.

\bhyp \label{assumption for absolutely continuous}
The riskless accounts are absolutely continuous, so that they can be represented as follows:
\be \label{absol}
dB^{l}_{t}=r^{l}_{t}B^{l}_{t}\, dt, \quad dB^{b}_{t}=r^{b}_{t}B^{b}_{t}\, dt, \quad dB^{i,b}_{t}=r^{i,b}_{t}B^{i,b}_{t}\, dt,
\ee
for some $\mathbb{G}$-adapted processes $r^{l}$, $r^{b}$ and $r^{i,b}$ for $i=1,2,\ldots,d$. Moreover, we assume $0 \leq \rll \leq \rbb$ and $\rll \leq \ribb$ for $i=1,2, \dots , d$.
\ehyp

Let the processes $\wt S^{i,l,\textrm{cld}}$ and $\wt S^{i,b,\textrm{cld}}$ for $i=1,2, \dots , d$ be given by the following expressions
\bde 
\wt S^{i,l,{\textrm{cld}}}_t :=  (\Blr_t)^{-1}S^i_t + \int_{(0,t]} (\Blr_u)^{-1} \, d\pA^i_u
\ede
and
\bde
\wt
S^{i,b,{\textrm{cld}}}_t :=  (\Bbr_t)^{-1}S^i_t + \int_{(0,t]} (\Bbr_u)^{-1} \, d\pA^i_u
\ede
so that
\be \label{cumulative dividend risk asset price1}
d\wt S^{i,l,{\textrm{cld}}}_t=(\Blr_t)^{-1}\left(dS^i_t - \rll_t S^i_t \, dt + d\pA^i_t\right)
\ee
and
\be  \label{cumulative dividend risk asset price2}
d\wt S^{i,b,{\textrm{cld}}}_t=(\Bbr_t)^{-1}\left(dS^i_t - \rbb_t S^i_t \, dt + d\pA^i_t\right).
\ee
We also denote
\bde
A^{C,l}_t := \int_{(0,t]}(\Blr_{u})^{-1}\, dA^C_{u}, \quad A^{C,b}_t := \int_{(0,t]}(\Bbr_{u})^{-1}\, dA^C_{u}.
\ede
In view (\ref{value of strategy 3}), the following lemmas are straightforward (see also Lemma 5.1 and Remark 5.3 in \cite{BR-2014}).

\bl \label{discounted wealth of strategy lending}
The discounted wealth  $Y^{l} := \wt V^{p,l}(x,\phi , \pA ,\pC)= (\Blr)^{-1} V^p(x,\phi , \pA ,\pC)$ satisfies
\bde 
dY^{l}_t  =\sum_{i=1}^{d} \xi^{i}_t \, d \wt S^{i,l,{\textrm{cld}}}_t
+ \wt{f}_l(t, Y ^{l}_t, \xi_t )\, dt + dA^{C,l}_t
\ede
where the mapping $\wt{f}_l : \Omega \times [0,T] \times \mathbb{R}\times\mathbb{R}^{d} \to \mathbb{R}$ is given by
\be \label{drift function lending}
\wt{f}_l( t, y ,z ): = (\Blr_t)^{-1} f_l(t,\Blr_t y ,z ) -  \rll_t y
\ee
and $f_l : \Omega \times [0,T] \times \mathbb{R}\times\mathbb{R}^{d} \to \mathbb{R}$ equals
\bde
f_l(t, y,z)  :=   \sum_{i=1}^d \rll_t z^i S^i_t
- \sum_{i=1}^d \ribb_t( z^i S^i_t )^+
+   \rll_t \Big( y  + \sum_{i=1}^d ( z^i S^i_t )^- \Big)^+
 - \rbb_t \Big( y + \sum_{i=1}^d ( z^i S^i_t )^- \Big)^-  .
\ede
\el


\bl \label{discounted wealth of strategy borrowing}
The discounted wealth  $Y^{b} := \wt V^{p,b}(x,\phi , \pA ,\pC)= (\Bbr)^{-1} V^p (x,\phi , \pA ,\pC)$ satisfies
\bde 
dY^{b}_t  =\sum_{i=1}^{d} \xi^{i}_t \, d \wt S^{i,b,{\textrm{cld}}}_t
+ \wt{f}_b(t, Y^{b}_t, \xi_t )\, dt + dA^{C,b}_t
\ede
where the mapping $\wt{f}_b : \Omega \times [0,T] \times \mathbb{R}\times\mathbb{R}^{d} \to \mathbb{R}$ is given by
\be \label{drift function borrowing}
\wt{f}_b( t, y ,z ): = (\Bbr_t)^{-1} f_b(t,\Bbr_t y ,z ) -  \rbb_t y
\ee
and $f_b : \Omega \times [0,T] \times \mathbb{R}\times\mathbb{R}^{d} \to \mathbb{R}$ equals
\bde
f_b(t, y,z)  :=   \sum_{i=1}^d \rbb_t z^i S^i_t
- \sum_{i=1}^d \ribb_t( z^i S^i_t )^+
+   \rll_t \Big( y  + \sum_{i=1}^d ( z^i S^i_t )^- \Big)^+
 - \rbb_t \Big( y + \sum_{i=1}^d ( z^i S^i_t )^- \Big)^-  .
\ede
\el


\newpage

\section{Arbitrage Opportunities and Ex-Dividend Prices} \label{sect3}

We consider throughout the hedger's self-financing trading strategies $(x, \phi , \pA , \pC )$, as specified by Definition \ref{definition for self financing}, where $x$ is the hedger's initial endowment. We set $V_{T}^{0}(x):=x\Blr_T\I_{\{x\ge0\}}+x\Bbr_T\I_{\{x<0\}}$ and we
define the {\it discounted wealth process} $\wh V(x, \phi,  A, C )$ by the following expression, for all $t \in [0,T]$,
\[
\wh V_t(x, \phi, A, C ):=(\Blr_t)^{-1} V_t (x,\phi , \pA ,\pC)\I_{\{x\ge0\}}+(\Bbr_t)^{-1} V_t (x,\phi , \pA ,\pC)\I_{\{x<0\}}.
\]

\subsection{Netted Wealth and Arbitrage Opportunities}   \label{sect3.1}

We first extend the results obtained in Section 3 of \cite{BR-2014} to the case of a collateralized contract.
For the financial interpretation of the {\it netted wealth}, the reader is referred to \cite{BR-2014}.
Let  us only mention that $A_0= p^{A,C} \in \mathbb{R}$ stands here for a generic price of a contract at time 0,
as seen from the perspective of the hedger.

\bd \label{nettled wealth}
The {\it netted wealth} $\Vnet(x, \phi , \pA, C)$ of a trading strategy $(x, \phi, \pA, C)$ is given by
$\Vnet(x, \phi , \pA, C):= V(x, \phi , \pA, C) + V(0, \wt \phi  , -\pA, -C)$ where $(0,  \wt \phi ,-A, -C)$
is the unique self-financing strategy satisfying the following conditions: \hfill \break
(i) $V_0(0, \wt \phi  , -\pA ) = - A_0 $, \hfill \break
(ii) $\widetilde{\xi}^i_t=0$ (hence $\widetilde{\psi}^{i,b}_t = 0$ in view of (\ref{portfolio choose}))
for all $i=1,2,\dots ,d$ and $t \in [0,T]$, \hfill \break
(iii) $\wt{\psi}^l_t \geq 0 ,\, \wt{\psi}^b_t \leq 0$ and  $\wt{\psi}^l_t \wt{\psi}^b_t =0$ for all $t \in [0,T]$.
\ed

We note that
\bde
\Vnet_0 (x, \phi , \pA, C) = V_0(x, \phi , \pA, C) + V_0(0, \wt \phi  , -\pA, -C)= x+ A_0+C_{0} - A_0-C_{0}=x ,
\ede
so that the initial netted wealth $\Vnet_0 (x, \phi , \pA, C)$ is independent of $(A_{0},C_{0})$ and it simply equals
the hedger's initial endowment.

\bd \label{admissible strategy}
A self-financing trading strategy $(x,\phi ,A,C)$ is {\it admissible for the hedger} whenever the discounted netted wealth process $\whVnet (x, \phi ,A,C)$ is bounded from below by a constant.
\ed

\bd \label{arbitrage using nettled wealth}
An admissible trading strategy $(x, \phi ,A,C)$ is an {\it arbitrage opportunity for the hedger} with respect to $(A,C)$ whenever
\be
\P ( \Vnet_T(x, \phi , A, C) \geq \VLL_T (x))=1\quad \text{ and }\quad \P ( \Vnet_T (x, \phi , A, C) > \VLL_T (x) ) > 0. \nonumber
\ee
A market model is {\it  arbitrage-free} for the hedger if no arbitrage opportunities for the hedger exist in regard to
any contract $(A,C)$.
\ed

The condition that the discounted netted wealth process $\whVnet (x, \phi , A, C)$ is bounded
from below by a constant is a commonly used criterion of {\it admissibility}, which ensures that, if
the process $\whVnet (x, \phi , A, C)$ a local martingale under some equivalent probability measure, then it is also
a supermartingale. It is well known that some technical assumption of this nature cannot be avoided even in the classic case of
the Black and Scholes model.

\bl \label{nettled wealth formula}
We have $\Vnet (x, \phi , \pA, C) =  V(x, \phi , \pA, C) + U(A,C)$,
where the $\gg$-adapted process of finite variation $U(A,C)=U$ is the unique solution to the following equation
\be \lab{pyy2}
U_t = \int_0^t (\Blr_u)^{-1} ( U_u-C_{u})^+\, d\Blr_u - \int_0^t (\Bbr_u)^{-1} ( U_u-C_{u})^-\, d\Bbr_u - \pA_t-F^{C}_{t}
\ee
where $F^{C}$ is defined by (\ref{definition for FC}).
\el

\proof
We set $\widetilde{\xi}^i=\widetilde{\psi}^{i,b}=0$ in (\ref{value of strategy 1}) and (\ref{value of strategy 2}). Then the process $V^{p} := V^{p}(0, \wt{\psi}^l, \wt{\psi}^b ,\etab, \etal, -\pA,-C)$ satisfies $V_t^{p} = \wt{\psi}^{l}_t \Blr_t + \wt{\psi}^{b}_t \Bbr_t$ for every $t \in [0,T]$. Noting that $V^{c}:=V^{c}(0, \wt{\psi}^l, \wt{\psi}^b ,\etab, \etal, -\pA,-C)=C$ and recalling the definition of $F^{C}$ and $A^{C}$, we obtain
\begin{align*}
V_t^{p}&=\int_0^t (\Blr_u)^{-1} ( V_u^{p})^+\, d\Blr_u - \int_0^t (\Bbr_u)^{-1} ( V_u^{p})^- \, d\Bbr_u - \pA_t+F^{-C}_{t}
-V_t^{c}(0, \wt{\psi}^l, \wt{\psi}^b ,\etab, \etal, -\pA,-C)\\
&=\int_0^t (\Blr_u)^{-1} ( V_u^{p})^+\, d\Blr_u - \int_0^t (\Bbr_u)^{-1} ( V_u^{p})^- \, d\Bbr_u - \pA_t-F^{C}_{t}-C_{t}\\
&=\int_0^t (\Blr_u)^{-1} ( V_u^{p})^+\, d\Blr_u - \int_0^t (\Bbr_u)^{-1} ( V_u^{p})^- \, d\Bbr_u -\pA_t^{C}.
\end{align*}
Consequently, the process $V:=V(0, \wt{\psi}^l, \wt{\psi}^b ,\etab, \etal, -\pA,-C)=V^{p}+V^{c}$ satisfies
\bde
V_t=\int_0^t (\Blr_u)^{-1} ( V_u-C_{u})^+\, d\Blr_u - \int_0^t (\Bbr_u)^{-1} ( V_u-C_{u})^- \, d\Bbr_u - \pA_t-F^{C}_{t}
\ede
and thus the assertion of the lemma follows.
\endproof

\bhyp \label{assumption for lending cumulative dividend price}
There exists a probability measure $\PT^l $ equivalent to $\P $ such that the
processes $\wt S^{i,l,\textrm{cld}},\, i=1,2, \dots ,d$ are $(\PT^l , \gg)$-local martingales.
\ehyp

\bp \label{proposition for arbitrage free}
Under Assumptions \ref{assumption for absolutely continuous} and \ref{assumption for lending cumulative dividend price}, if $x \geq 0$, then the market model of Section \ref{sect2.3} is arbitrage-free for the hedger in regard to
any contract $(A,C)$.
\ep

\proof
In view of (\ref{value of strategy 3}) and the postulated inequalities: $\rll \leq \rbb$ and $\rll \leq \ribb$ for all $i$, the process $V^{p}:=V^{p}(x, \phi , \pA, C )$  satisfies
\begin{align*}
dV_t^{p}= \, \, & \sum_{i=1}^d \xi^i_t \big(dS^i_t + d\pA^i_t  \big)
- \sum_{i=1}^d \ribb_t( \xi^i_t S^i_t )^+  \, dt + d \pA_t^{C}
 \\ &+   \rll_t \Big( V_t^{p}+ \sum_{i=1}^d ( \xi^i_t S^i_t )^- \Big)^+ \, dt
- \rbb_t \Big( V_t^{p}+ \sum_{i=1}^d ( \xi^i_t S^i_t )^- \Big)^- \, dt \\
\leq \, \, & \sum_{i=1}^d \xi^i_t \big(dS^i_t + d\pA^i_t  \big)
- \sum_{i=1}^d \ribb_t( \xi^i_t S^i_t )^+  \, dt + d\pA_t^{C}
 \\ &+   \rll_t \Big( V_t^{p}  + \sum_{i=1}^d ( \xi^i_t S^i_t )^- \Big)^+ \, dt
- \rll_t \Big( V_t^{p}  + \sum_{i=1}^d ( \xi^i_t S^i_t )^- \Big)^- \, dt \\
= \, \, &\rll_t  V_t^{p} \, dt + \sum_{i=1}^d \xi^i_t \big(dS^i_t + d\pA^i_t  \big) + d\pA_t^{C}
- \sum_{i=1}^d \ribb_t( \xi^i_t S^i_t )^+  \, dt +   \rll_t \sum_{i=1}^d ( \xi^i_t S^i_t )^- \, dt
 \\ \leq \, \, &\rll_t  V_t^{p} \, dt + \sum_{i=1}^d \xi^i_t \big(dS^i_t - \rll_t S^i_t \, dt + d\pA^i_t \big)
  + d\pA_t^{C}.
\end{align*}
Consequently, the discounted wealth $V^{l,p} := (\Blr)^{-1} V^{p}$ satisfies
\begin{align*}
d V^{l,p}_t \leq \sum_{i=1}^d \xi^i_t (\Blr_t)^{-1} \big( dS^i_t - \rll_t S^i_t \, dt + d\pA^i_t  \big) + (\Blr_t)^{-1} \, d\pA_t^{C}
=\sum_{i=1}^d \xi^i_t \, d\wt S^{i,l,{\textrm{cld}}}_t +  dA^{C,l}_t.
\end{align*}
Furthermore, the netted wealth is given by the following expression  (see Lemma \ref{nettled wealth formula})
\bde
\Vnet_t (x,\phi , \pA, C) =  V_t (x,\phi , \pA, C) + U_t(A,C)=V_{t}^{p}-C_{t}+ U_t(A,C)
\ede
where the $\gg$-adapted process of finite variation $U(A,C)$ is given by \eqref{pyy2}.
Hence the discounted netted wealth, which is given by
$$
\Vnettl_t:= (\Blr_t)^{-1}\Vnet_{t}(x,\phi , \pA, C)=V^{l,p}_t-(\Blr_t)^{-1}C_{t}+ (\Blr_t)^{-1}U_t(A,C),
$$
satisfies (for brevity, we write $U(A,C)=U$)
\begin{align*}
&d\Vnettl_t=  dV^{l,p}_t-d((\Blr_t)^{-1}C_{t})+  d(  (\Blr_t)^{-1} U_t) \\
&\leq   \sum_{i=1}^d \xi^i_t \, d\wt S^{i,l,{\textrm{cld}}}_t
+ (\Blr_t)^{-2} ( U_t-C_{t})^+\, d\Blr_t - (\Blr_t)^{-1} (\Bbr_t)^{-1} ( U_t-C_{t})^- \, d\Bbr_t \medskip\\
&\qquad + U_t \, d(\Blr_t)^{-1} +(\Blr_t)^{-1}\, dA_{t}^{C}-(\Blr_t)^{-1}\, dA_{t}-(\Blr_t)^{-1}\, dF^{C}_{t}-d((\Blr_t)^{-1}C_{t})\medskip\\
& = \sum_{i=1}^d \xi^i_t \, d\wt S^{i,l,{\textrm{cld}}}_t + (r^l_t - r^b_t )(\Blr_t)^{-1} ( U_t-C_{t})^- \, dt
\end{align*}
and thus
\be \label{inequality for proving arbitrage free}
\Vnettl_t- \Vnettl_0\leq \sum_{i=1}^d \int_{(0,t]} \xi^i_u \, d\wt S^{i,l,{\textrm{cld}}}_u .
\ee
First, the assumption that the process $\Vnettl$ is bounded from below, implies that the right-hand side in \eqref{inequality for proving arbitrage free} is a $(\PT^l,\gg)$-supermartingale, which is null at $t=0$. Next, $\VLL_T(x) = \Blr_T x$ (since $x\ge0$). From \eqref{inequality for proving arbitrage free}, we thus obtain
\bde
(\Blr_T)^{-1} \big( \Vnet_T (x,\phi , A ) - \VLL_T (x) \big) \leq \sum_{i=1}^d \int_{(0,T]} \xi^i_t \, d\wt S^{i,l,{\textrm{cld}}}_t.
\ede
Since $\PT^l $ is equivalent to $\P$, we conclude that either $\Vnet_T (x,\phi , A,C) = \VLL_T(x)$
or $\P ( \Vnet_T (x,\phi , A,C) < \VLL_T(x))>0$. This means that an arbitrage opportunity may not arise and thus the market model with partial netting is arbitrage-free for the hedger in regard to any contract $(A,C)$.
\endproof

\bhyp \label{assumption for borrowing cumulative dividend price}
There exists a probability measure $\PT^b $ equivalent to $\P $ such that the
processes $\wt S^{i,b,\textrm{cld}},\, i=1,2, \dots ,d$ are $(\PT^b , \gg)$-local martingales.
\ehyp

\brem  \label{rates assumption for negative wealth}
Similarly as in Remark 3.2 of \cite{BR-2014}, we observe that the statement of Proposition \ref{proposition for arbitrage free} is also true for $x\leq 0$, provided that Assumption \ref{assumption for borrowing cumulative dividend price} is valid and  $r^b \leq r^{i,b}$ for $i=1,2,\ldots,d$. For the hedger, one can then show that
\bde
d\Vnettb_t \leq \sum_{i=1}^d \xi^i_t \, d\wt S^{i,b,{\textrm{cld}}}_t + (r^l_t - r^b_t )(\Bbr_t)^{-1} ( U_t(A,C)-C_{t})^+ \, dt
\ede
and thus, using similar arguments as above, we conclude that there is no arbitrage for the hedger in regard to any
contract $(A,C)$.
\erem

\brem
Let Assumption \ref{additional assumption for collateral account} be valid. Then Definition \ref{arbitrage using nettled wealth}, Proposition \ref{proposition for arbitrage free} and Remark \ref{rates assumption for negative wealth} apply not only to the hedger, but also to the counterparty. Therefore, if both parties have non-negative initial endowments (resp., both have non-positive initial endowments), Assumption \ref{assumption for lending cumulative dividend price} (resp., Assumption \ref{assumption for borrowing cumulative dividend price}) holds, and  $r^l \leq r^{i,b}$ (resp., $r^b \leq r^{i,b}$) for all $i$, then the model is arbitrage-free for both parties. When the initial endowments have opposite signs then if Assumptions \ref{assumption for lending cumulative dividend price} and \ref{assumption for borrowing cumulative dividend price} are valid and $r^b \leq r^{i,b}$) for all $i$, then the model is arbitrage-free for both parties.
\erem

\subsection{Extended Arbitrage Opportunities}   \label{sect3.1b}

Results of Section \ref{sect3.1}  give only a partial answer to the question whether a market model
with partial netting is arbitrage-free.  We will now attempt to give a deeper analysis of the arbitrage-free
property for all contracts under specific assumptions on prices of risky assets.
To this end, we introduce the following definition (see Remark 3.1 in \cite{BR-2014}).

\bd {\rm An {\it extended arbitrage opportunity} with respect to the contract $(A,C)$ for the hedger
with the initial endowment $x$ is a pair $(\Xhat, \Fhat , A)$ and $(\Xtil , \Ftil , -A )$ of admissible strategies
such that $x = \Xhat+\Xtil $ and
\bde
\P ( \Vnet_T \geq \VLL_T (x))=1\quad \text{ and }\quad \P ( \Vnet_T  > \VLL_T (x) ) > 0
\ede
where the {\it netted wealth} $\Vnet = \Vnet ( \Xhat , \Xtil , \Fhat , \Ftil , A ,C)$ is given by}
\bde
\Vnet:=  V(\Xhat,\Fhat , \pA, C) + V(\Xtil,\Ftil , -\pA, -C).
\ede
\ed

The next result gives sufficient conditions for non-existence of extended arbitrage opportunities for the hedger.

\bp \label{remark for non-arbitrage model}
Assume that there exist some $\gg$-adapted processes $\beta^{i}$ satisfying $r^{b}\leq\beta^{i}\leq r^{i,b}$
and a probability measure $\PTb $ equivalent to $\P$ such that the auxiliary processes $\wt S^{i,\textrm{cld}},\,
i=1,2, \dots ,d$, which are given by
\be \label{frrt}
d\wt S^{i,\textrm{cld}}_t = dS^i_t + d\pA^i_t - \beta^{i}_{t}S_{t}^{i}\,dt ,
\ee
are continuous, square-integrable, $(\PTb , \gg)$-martingales. Then no extended arbitrage opportunity exists
for the hedger in respect of any contract $(A,C)$ and any initial endowment $x\in\mathbb{R}$.
\ep

\proof
Note that the process $\widehat{V}^{p}:=V^{p}(\Xhat, \Fhat , \pA, C )$ is governed by
\begin{align*}
d\widehat{V}_t^{p}= \, \, & \sum_{i=1}^d \widehat{\xi}^i_t \big(dS^i_t + d\pA^i_t  \big)
- \sum_{i=1}^d \ribb_t( \widehat{\xi}^i_t S^i_t )^+  \, dt + d \pA_t^{C}
 \\ &+   \rll_t \Big( \widehat{V}_t^p+ \sum_{i=1}^d ( \widehat{\xi}^i_t S^i_t )^- \Big)^+ \, dt
- \rbb_t \Big( \widehat{V}_t^p+ \sum_{i=1}^d ( \widehat{\xi}^i_t S^i_t )^- \Big)^- \, dt.
\end{align*}
and $\widetilde{V}^{p}:=V^{p}(\Xtil, \Ftil , -\pA, -C )$  satisfies
\begin{align*}
d\widetilde{V}_t^{p}= \, \, & \sum_{i=1}^d \widetilde{\xi}^i_t \big(dS^i_t + d\pA^i_t  \big)
- \sum_{i=1}^d \ribb_t( \widetilde{\xi}^i_t S^i_t )^+  \, dt -d \pA_t^{C}
 \\ &+   \rll_t \Big( \widetilde{V}_t^{p}+ \sum_{i=1}^d ( \widetilde{\xi}^i_t S^i_t )^- \Big)^+ \, dt
- \rbb_t \Big( \widetilde{V}_t^{p}+ \sum_{i=1}^d ( \widetilde{\xi}^i_t S^i_t )^- \Big)^- \, dt.
\end{align*}
We observe that the netted wealth satisfies
\bde
\Vnet:=  V(\Xhat,\Fhat , \pA, C) + V(\Xtil,\Ftil , -\pA, -C)=\widehat{V}^{p}-C+ \widetilde{V}^{p}+C=\widehat{V}^{p}+\widetilde{V}^{p}
\ede
and thus
\begin{align*}
d\Vnet_t= \, \, & \sum_{i=1}^d (\widehat{\xi}^i_t+\widetilde{\xi}^i_t)\big(dS^i_t + d\pA^i_t  \big)
- \sum_{i=1}^d \ribb_t( \widehat{\xi}^i_t S^i_t )^+  \, dt - \sum_{i=1}^d \ribb_t(\widetilde{ \xi}^i_t S^i_t )^+  \, dt
 \\ &+   \rll_t \Big( \widehat{V}_t^p+ \sum_{i=1}^d ( \widehat{\xi}^i_t S^i_t )^- \Big)^+ \, dt
- \rbb_t \Big( \widehat{V}_t^p+ \sum_{i=1}^d ( \widehat{\xi}^i_t S^i_t )^- \Big)^- \, dt
 \\ &+   \rll_t \Big( \widetilde{V}_t^{p}+ \sum_{i=1}^d ( \widetilde{\xi}^i_t S^i_t )^- \Big)^+ \, dt
- \rbb_t \Big( \widetilde{V}_t^{p}+ \sum_{i=1}^d ( \widetilde{\xi}^i_t S^i_t )^- \Big)^- \, dt.
\end{align*}
Since $r^{l}\leq r^{b}$, we obtain
\begin{align}\label{1 netted weath inequality}
d\Vnet_t\leq & \sum_{i=1}^d (\widehat{\xi}^i_t+\widetilde{\xi}^i_t) \big(dS^i_t + d\pA^i_t  \big)
- \sum_{i=1}^d \ribb_t( \widehat{\xi}^i_t S^i_t )^+  \, dt - \sum_{i=1}^d \ribb_t(\widetilde{ \xi}^i_t S^i_t )^+  \, dt \nonumber
\\ &\mbox{}+   \rll_t \Big( \widehat{V}_t^p+\widetilde{V}_t^{p}+ \sum_{i=1}^d ( \widehat{\xi}^i_t S^i_t )^-+ \sum_{i=1}^d ( \widetilde{\xi}^i_t S^i_t )^-  \Big) \, dt
\end{align}
and
\begin{align} \label{2 netted weath inequality}
d\Vnet_t\leq & \sum_{i=1}^d (\widehat{\xi}^i_t+\widetilde{\xi}^i_t) \big(dS^i_t + d\pA^i_t  \big)
- \sum_{i=1}^d \ribb_t( \widehat{\xi}^i_t S^i_t )^+  \, dt - \sum_{i=1}^d \ribb_t(\widetilde{ \xi}^i_t S^i_t )^+  \, dt \nonumber
\\ &\mbox{}+   \rbb_t \Big( \widehat{V}_t^p+\widetilde{V}_t^{p}+ \sum_{i=1}^d ( \widehat{\xi}^i_t S^i_t )^-+ \sum_{i=1}^d ( \widetilde{\xi}^i_t S^i_t )^-  \Big) \, dt .
\end{align}
Using \eqref{1 netted weath inequality} and the equality $\Vnet=\widehat{V}^{p}+\widetilde{V}^{p}$, we obtain
for the process $\Vnettl:= (\Blr)^{-1}\Vnet$
\begin{align*}
&d\Vnettl_t=  (\Blr_t)^{-1}d\Vnet_{t}-r^{l}_{t}(\Blr_t)^{-1}\Vnet_{t}\, dt\\
&\leq   (\Blr_t)^{-1}\bigg(\sum_{i=1}^d \widehat{\xi}^i_t \big(dS^i_t + d\pA^i_t  \big)
- \sum_{i=1}^d \ribb_t( \widehat{\xi}^i_t S^i_t )^+  \, dt +  \sum_{i=1}^d r^{l}_{t}( \widehat{\xi}^i_t S^i_t )^-\, dt\bigg)\\
&\quad + (\Blr_t)^{-1}\bigg(\sum_{i=1}^d \widetilde{\xi}^i_t \big(dS^i_t + d\pA^i_t  \big)
- \sum_{i=1}^d \ribb_t( \widetilde{\xi}^i_t S^i_t )^+  \, dt +  \sum_{i=1}^d r^{l}_{t}( \widetilde{\xi}^i_t S^i_t )^-\, dt\bigg)\\
&=(\Blr_t)^{-1}\bigg(\sum_{i=1}^d \widehat{\xi}^i_t \big(dS^i_t + d\pA^i_t-\beta_{t}^{i}S_{t}^{i}\, dt  \big)
- \sum_{i=1}^d \ribb_t( \widehat{\xi}^i_t S^i_t )^+  \, dt +  \sum_{i=1}^d r^{l}_{t}( \widehat{\xi}^i_t S^i_t )^-\, dt+\sum_{i=1}^d \beta_{t}^{i}\widehat{\xi}^i_tS_{t}^{i} \, dt\bigg)\\
&\quad +(\Blr_t)^{-1}\bigg(\sum_{i=1}^d \widetilde{\xi}^i_t \big(dS^i_t + d\pA^i_t-\beta_{t}^{i}S_{t}^{i}\, dt  \big)
- \sum_{i=1}^d \ribb_t( \widetilde{\xi}^i_t S^i_t )^+  \, dt +  \sum_{i=1}^d r^{l}_{t}( \widetilde{\xi}^i_t S^i_t )^-\, dt+\sum_{i=1}^d \beta_{t}^{i}\widetilde{\xi}^i_tS_{t}^{i} \, dt\bigg).
\end{align*}
Similarly, in view \eqref{2 netted weath inequality}, the process $\Vnettb := (\Bbr)^{-1}\Vnet $ satisfies
\begin{align*}
&d\Vnettb_t=  (\Bbr_t)^{-1}d\Vnet_{t}-r^{b}_{t}(\Bbr_t)^{-1}\Vnet_{t}\, dt\\
&=(\Bbr_t)^{-1}\bigg(\sum_{i=1}^d \widehat{\xi}^i_t \big(dS^i_t + d\pA^i_t-\beta_{t}^{i}S_{t}^{i}\, dt  \big)
- \sum_{i=1}^d \ribb_t( \widehat{\xi}^i_t S^i_t )^+  \, dt +  \sum_{i=1}^d r^{b}_{t}( \widehat{\xi}^i_t S^i_t )^-\, dt+\sum_{i=1}^d \beta_{t}^{i}\widehat{\xi}^i_tS_{t}^{i}\, dt\bigg)\\
&\quad +(\Blr_t)^{-1}\bigg(\sum_{i=1}^d \widetilde{\xi}^i_t \big(dS^i_t + d\pA^i_t-\beta_{t}^{i}S_{t}^{i}\, dt  \big)
- \sum_{i=1}^d \ribb_t( \widetilde{\xi}^i_t S^i_t )^+  \, dt +  \sum_{i=1}^d r^{b}_{t}( \widetilde{\xi}^i_t S^i_t )^-\, dt+\sum_{i=1}^d \beta_{t}^{i}\widetilde{\xi}^i_tS_{t}^{i} \, dt\bigg).
\end{align*}
Since the process $\beta^i$ satisfies $r^{l}\leq\beta^{i}\leq r^{i,b}$ for every $i=1,2,\dots ,d$, we obtain
\bde
\sum_{i=1}^d \beta_{t}^{i}\widehat{\xi}^i_tS_{t}^{i}\leq \sum_{i=1}^d r_{t}^{i,b}(\widehat{\xi}^i_tS_{t}^{i})^{+}-\sum_{i=1}^d r_{t}^{l}(\widehat{\xi}^i_tS_{t}^{i})^{-}.
\ede
Under the stronger condition that $r^{b}\leq\beta^{i}\leq r^{i,b}$ is satisfied for every $i=1,2,\dots ,d$, we also have that
\bde
\sum_{i=1}^d \beta_{t}^{i}\widehat{\xi}^i_tS_{t}^{i}\leq \sum_{i=1}^d r_{t}^{i,b}(\widehat{\xi}^i_tS_{t}^{i})^{+}-\sum_{i=1}^d r_{t}^{b}(\widehat{\xi}^i_tS_{t}^{i})^{-}.
\ede
By assumption, there exists a probability measure $\PTb $ equivalent to $\P$ such that the
processes $\wt S^{i,\textrm{cld}},\, i=1,2, \dots ,d$ are continuous, square-integrable, $(\PTb , \gg)$-martingales, where
$\wt S^{i,\textrm{cld}}$ is given by \eqref{frrt} for some $\gg$-adapted processes $\beta^{i}$ satisfying
$r^{b}\leq\beta^{i}\leq r^{i,b}$. Then
\be \label{inequality for proving arbitrage freex}
\Vnettl_t- \Vnettl_0\leq \sum_{i=1}^d \int_{(0,t]} (\widehat{\xi}^i_u+\widetilde{\xi}^i_u) \, d\wt S^{i,{\textrm{cld}}}_u
\ee
and
\be\label{2 inequality for proving arbitrage free}
\Vnettb_t- \Vnettb_0\leq \sum_{i=1}^d \int_{(0,t]} (\widehat{\xi}^i_u+\widetilde{\xi}^i_u) \, d\wt S^{i,{\textrm{cld}}}_u .
\ee
Using standard arguments (see the proof of Proposition \ref{proposition for arbitrage free}), we deduce that the market model with partial netting is arbitrage-free for the hedger in respect of any contract $(A,C)$ and any initial endowment $x\in\mathbb{R}$.
\endproof

Let us now discuss various alternative martingale conditions, which were introduced to analyze the non-existence of (extended) arbitrage
opportunities in the present set-up. First, is easy to see that Proposition \ref{remark for non-arbitrage model} furnishes sufficient conditions ensuring that the market model with partial netting is arbitrage-free with respect to any contract for both parties with arbitrary initial endowments $x_{1},x_{2}\in\mathbb{R}$. This motivates the introduction of Assumptions  \ref{assumption for artifical cumulative dividend price} and \ref{changed assumption for artifical cumulative dividend price} in Section \ref{sect5.3}. It is fair to acknowledge that the condition $r^{b}\leq\beta^{i}\leq r^{i,b}$ is restrictive and thus this result is not fully satisfactory. However, we argue below that the condition $r^{b}\leq\beta^{i}\leq r^{i,b}$ is needed in the abstract set-up where `risky' assets are left
unspecified, so their prices may in fact be modeled through continuous processes of finite variation.

It is also worth noting that the condition $r^b \leq r^{i,b}$ in Remark  \ref{rates assumption for negative wealth}  was not due to the fact that we considered there the case when $x \leq 0$, but rather to the choice of $\Bbr$ as a discount factor. Specifically, we decided to search
for a sufficient condition for arbitrage-free property in terms of a martingale measure for processes
$\wt S^{i,b,{\textrm{cld}}}$. Since we do not make any a priori assumptions about the price processes for risky assets,
it may happen that, for instance, $S^1= S^1_0 \Bbr$ and $A^1=0$. Of course, a martingale measure for the process
$\wt S^{1,b,{\textrm{cld}}}$ exists, but the sub-model $(\Blr ,\Bbr , B^{1,b}, S^1)$ is not arbitrage-free unless
$r^b \leq r^{1,b}$. Indeed, in the present set-up, the rate $r^{1,b}$ (resp. $r^b$) can be seen as a borrowing (resp.,
lending) rate, since the non-risky return $r^b$ can be generated by the hedger by purchasing the stock $S^1$.
This argument shows that the inequality $r^b \leq r^{i,b}$ is necessary to avoid arbitrage if we do not make any other assumptions
about risky asset except for postulating the existence of a martingale measure for $\wt S^{1,b,{\textrm{cld}}}$ (by contrast,
if the stock price $S^1$ is given, say, by the Black and Scholes model then there is no need to
postulate that $r^b \leq r^{i,b}$ since any investment in $S^1$ is risky).

The condition that a martingale measure for $\wt S^{1,b,{\textrm{cld}}}$ exists is, in some sense,
weaker that the postulate that a martingale measure for $\wt S^{1,l,{\textrm{cld}}}$ exists. Indeed, in the
latter case, when the asset price is of finite variation, it equals to $S_0 \Blr$ (rather than $S_0 \Bbr$)
and thus one could conjecture that the condition $r^l \leq r^{1,b}$ is sufficient to preclude arbitrage in
the sub-model $(\Blr ,\Bbr , B^{1,b}, S^1)$. This is indeed true when $x \geq 0$, but when $x <0$ and
$r^l < r^b$, there still exists an arbitrage opportunity, since the hedger may sell stock and reduce
interest payments on his debt.

Finally, one could postulate that the process $B^{i,b}$ could be chosen as a discount factor for the $i$th risky
asset. In that case, to preclude an arbitrage opportunity of the same kind as above when $x<0$, one would need to postulate
that $r^{i,b} \geq r^b$.

In our opinion, the condition that a martingale measure for $\wt S^{i,l,{\textrm{cld}}}$ exists is more natural but,
as was explained above, it is not a sufficient condition for no-arbitrage if a `risky' asset may in fact by non-risky.
Of course, in a non-trivial model where the prices of risky assets have non-vanishing volatilities, the above-mentioned
martingale conditions will be equivalent, under mild technical assumptions, and conditions $r^l \leq r^b$
and $r^l \leq r^{i,b}$ that underpin Proposition \ref{proposition for arbitrage free} should suffice to ensure that a model is arbitrage-free for both parties with arbitrary initial endowments.

\subsection{Fair and Profitable Bilateral Prices}   \label{sect3.2}

Our next goal is to describe the range of arbitrage prices of a contract with cash flows $A$ and collateral $C$. It is rather clear from the next definition that a {\it hedger's fair price} may depend on the hedger's initial endowment $x$ and it may fail to be unique, in general.

\bd \label{definition for hedger's fair price}
We say that a real number $p^{A,C} = A_0$ is a {\it hedger's fair price} for $(A,C)$ at time 0 whenever for any self-financing trading strategy $(x, \phi ,A, C)$, such that the discounted wealth process $\wh V(x , \phi , A, C)$ is bounded from below, we have that
\be \label{non arbitrage condition}
\P \big( V_T (x, \phi, A, C) = \VLL_T (x) \big) = 1\quad \text{ or }\quad \P \big( V_T (x, \phi , A, C) < \VLL_T (x) \big) > 0 .
\ee
\ed

One may observe that the two conditions in Definition \ref{definition for hedger's fair price} are analogous to conditions of Definition \ref{arbitrage using nettled wealth}, although they have different financial meaning. Recall that Definition \ref{arbitrage using nettled wealth} deals with a possibility of offsetting a dynamically hedged contract $(A,C)$ with an
arbitrary market price by an unhedged contract $(-A,-C)$, whereas Definition \ref{definition for hedger's fair price} is concerned with finding a unilateral fair price for $(A,C)$ from the perspective of the hedger. For a more detailed discussion, the interested reader may consult \cite{BR-2014}.

Let us recall the generic definition of replication of a contract on $[t,T]$ (see Definition 5.1 in \cite{BR-2014}).

\bd \lab{def:replicate}
For a fixed $t \in [0,T]$, a self-financing trading strategy $(\VLL_{t}(x)+p_t^{A,C}, \phi , A- A_t , \pC )$,
where  $p_t^{A,C}$ is a ${\cal G}_{t}$-measurable random variable, is said to {\it replicate the collateralized
contract} $(A,C)$ on $[t,T]$ whenever $V_T(\VLL_{t}(x)+p_t^{A,C}, \phi , A-A_t, C) = \VLL_T (x)$.
\ed

We henceforth assume that the initial endowment of the hedger (resp., the counterparty) is $x_{1}$ (resp., $x_{2}$)
where $x_{1},x_{2}\in\mathbb{R}$. We consider the situation when the hedger with the initial endowment $x_1$ at
time 0 enters the contract $A$ at time $t$ and the contract can be replicated by the hedger.

\bd \label{definition of ex-dividend price}
Any ${\cal G}_{t}$-measurable random variable for which a replicating strategy for $A$ over $[t,T]$ exists is called the {\it hedger's ex-dividend price} at time $t$ for the contract $(A,C)$ and it is denoted by $P^{h}_t(x_1, A, C)$, so that for some
$\phi $ replicating $(A,C)$
\bde
V_T(\VLL_{t}(x_1)+P^{h}_t(x_1, A, C), \phi , A -A_t, C) = \VLL_T (x_1).
\ede
\ed


\bd \label{remark for counterparty's ex-dividend price}
For an arbitrary level $x_2$ of the counterparty's initial endowment and a strategy $\phi$ replicating $(-A,-C)$,
the {\it counterparty's ex-dividend price} $P^{c}_t(x_2, -A, -C)$ at time $t$ for the contract $(-A,-C)$ is given by the equality
\bde
V_T(\VLL_{t}(x_2)-P^{c}_t(x_2, -A, -C), \phi , -A+A_t, -C) = \VLL_T (x_2).
\ede
\ed

It is clear that in Definitions \ref{definition of ex-dividend price} and  \ref{remark for counterparty's ex-dividend price},
we deal with unilateral prices, as evaluated by the hedger and the counterparty, respectively.
Note that if $x_1=x_2=x$, then $P^{h}_t(x, A, C)=p_{t}^{A,C}$ and $P^{c}_t(x, -A, -C)=-p_{t}^{-A,-C}$. Due to this convention,
the equality $P^{h}_t(x_1, A, C) = P^{c}_t(x_1, -A, -C)$ will be satisfied when Definitions \ref{definition of ex-dividend price} and \ref{remark for counterparty's ex-dividend price} applied to a standard market model with a single cash account in which the
prices are known to be independent of initial endowments $x_1$ and $x_2$. The next definition is consistent with this convention.
Note that Definition \ref{defbbg} is based on an implicit assumption that prices are uniquely defined; we address this important issue in the foregoing section.

\bd \label{defbbg}
The hedger is willing to {\it sell} (resp., to {\it buy}) a contract $(A,C)$ if $P^{h}_t(x_1, A, C) \geq 0 $
(resp., $P^{h}_t(x_1, A, C) \leq 0 $). The counterparty is willing to {\it sell} (resp., to {\it buy}) a contract
$(-A,-C)$ if $P^{c}_t(x_2, -A, -C) \leq 0$ (resp., $P^{c}_t(x_2, -A, -C)\geq 0$).
\ed

Since we place ourselves in a nonlinear framework, a natural asymmetry arises between the hedger and his counterparty, so that the price discrepancy may occur, meaning that it may happen that $P^{h}_t(x_1, A, C) \ne P^{c}_t(x_2, -A, -C)$.  However, it is expected that the two prices will typically yield a no-arbitrage range determined by the (higher) seller's price and the (lower) buyer's price, though it may also happen that both parties are willing to be sellers (or both are willing to be buyers) of a given contract. In addition, since a positive excess cash generated by one contract may be offset (partly or totally) by a negative excess cash associated with another contract, we expect that the seller's (resp., buyer's) price
for the combination of two contracts should be lower (resp., higher) than the sum of the seller's (resp., buyer's)
prices of individual contracts.

\bex \label{European call option}
Let us consider a contract $(A,C)$ with $C=0$ and $A_t = p \, \I_{[0,T]}(t) + X \I_{[T]}(t).$
If $X = - (S^i_T-K)^+$, then we deal with a European call option written by the hedger.
A natural guess is that the prices $P^{h}_0(x_{1},A,C)$ and $P^{c}_0 (x_{2},-A,-C)$ should be positive. Similarly,
if $X = (S^i_T-K)^+$, that the counterparty is the option's writer, it is natural to expect that $P^{h}_0 (x_{1},A,C)$
and $P^{c}_0 (x_{2},-A,-C)$ should be negative.  Furthermore, if $C=0$ and
$A_t = p \, \I_{[0,T]}(t) - (S^i_T-K)^+\I_{[T]}(t)$,
then we guess that the price $P^{h}_0(x_{1},A,C)$ should be independent of $x_{1}$, provided that $x_1 \ge 0$.
Indeed, as a consequence of the last constraint in (\ref{portfolio choose}), the hedger cannot use his initial endowment to buy shares for the purpose of hedging. In view of this constraint, the postulated model does not cover the standard case of different borrowing and lending rates when $r^{i,b}=r^b > r^l$ and trading is assumed to be unrestricted, so that the hedger's initial endowment can be used for hedging.

In the standard case, it is natural to expect that the hedger's price of the call option will depend on the hedger's initial endowment $x_1$. To sum up, for each particular market circumstances, the properties of ex-dividend prices may be quite different.
Nevertheless, we will argue that most of their properties can be analyzed using general results on BSDEs as a convenient tool.
\eex

Recall that $x_{1}$ and $x_{2}$ stand for the initial endowments of the hedger and the counterparty, respectively.
Due to a generic nature of a contract $(A,C)$, it is impossible to make any plausible a priori conjectures about
relative sizes and/or signs of prices. The equality $P^{h}_t(x, A, C) = P^{c}_t(x, -A, -C)$ means that both parties
agree on a common price for the contract. Otherwise, that is, if the equality $P^{h}_t(x, A, C) = P^{c}_t(x, -A, -C)$
fails to hold, then the following situations may arise:

\noindent {\bf (H.1)} $\ 0 \le P^{c}_t(x_2, -A, -C) < P^{h}_t(x_1, A, C)$,

\noindent {\bf (H.2)} $\ P^{c}_t(x_1, A, C) \le 0 < P^{h}_t(x_2, -A, -C)$,

\noindent {\bf (H.3)} $\ P^{c}_t(x_2, -A, -C) < P^{h}_t(x_1, A, C) \le 0$,

\noindent and, symmetrically,

\noindent {\bf (C.1)} $\ 0 \le P^{h}_t(x_1, A, C) < P^{c}_t(x_2, -A, -C)$,

\noindent {\bf (C.2)} $\ P^{h}_t(x_1, A, C) \le 0 < P^{c}_t(x_2, -A, -C)$,

\noindent {\bf (C.3)} $\ P^{h}_t(x_1, A, C) < P^{c}_t(x_2, -A, -C) \le 0$.


Before analyzing each situation, let us recall that the cash flows of a contract $(A,C)$ are invariably considered
from the perspective of the hedger, so it makes sense to observe that the counterparty faces the cash flows given by
$(-A,-C)$. Consequently, in case (H.1), we may say that the hedger is the seller of $(A,C)$ and the counterparty is the buyer of
$(-A,-C)$,  but the counterparty is not willing to pay the amount demanded by the hedger.
In case (H.2), both parties are willing to be sellers of the contract, meaning in practice that the hedger is ready to sell $(A,C)$
and the counterparty is willing to sell $(-A,-C)$.
Finally, case (H.3) refers to the situation the counterparty is willing to be the seller of $(-A,-C)$, whereas the hedger can now be seen as a buyer of $(A,C)$, but he is not willing to pay the price that is needed by the counterparty to replicate the contract.

Assume that the market model is arbitrage-free for both parties in the sense of Definition \ref{arbitrage using nettled wealth}.
Then in all three cases, (H.1)--(H.3), any $\G_t$-measurable random variable $P^f_t$ satisfying
\be
P^f_t \in \big[ P^{c}_t(x_{2},-A,-C), P^{h}_t (x_{1},A,C) \big]
\ee
can be considered to be a {\it fair price} for both the hedger and his counterparty, in the sense that
a  bilateral transaction done at the price $P^f_t$ will not generate an arbitrage opportunity for neither of them.
Hence the interval $[P^{c}(x_{2},-A,-C),P^{h}_t (x_{1},A,C)]$ represents the range of fair prices of the
contract $(A,C)$ for both parties, as seen from the perspective of the hedger (a special case of this interval was dubbed the {\it arbitrage-band} by Bergman \cite{B-1995}).

\bd
The  $\G_t$-measurable interval ${\cal R}^f_t (x_1,x_2) := \big[ P^{c}_t(x_{2},-A,-C), P^{h}_t (x_{1},A,C) \big]$
is called the {\it range of fair bilateral prices} at time $t$ of an OTC contract $(A,C)$ between the hedger and the counterparty.
\ed

Although the analysis for the cases (C.1)--(C.3) can be done analogously, the financial interpretation and conclusions are quite different. In case (C.1), the hedger is willing to be the seller of $(A,C)$ and the counterparty is willing to be the buyer
and he is ready to pay even more than it is requested by the hedger. In case (C.2), both parties are disposed to be buyers
at their respective prices, meaning that each party is ready to pay a positive premium to another.  Finally, in case (C.3), the counterparty is willing to be the seller, whereas the hedger can now be seen as a buyer of $(A,C)$
and he is ready to pay more than it is demanded by the counterparty. Hence for any $\G_t$-measurable random variable $P^p_t$ satisfying
\be \label{inequality to prove}
P^p_t \in \big[ P^{h}_t(x_{1},A,C), P^{c}_t (x_{2},-A,-C) \big]
\ee
can be seen as a price at which both parties would be disposed to make the deal with each other. Note that, unless $P^{h}_t(x_{1},A,C)=P^{c}_t (x_{2},-A,-C)$, the price $P^p_t$ is not a fair bilateral price, in the sense explained above, since an arbitrage opportunity arises for at least one party involved when an OTC contract $(A,C)$ is traded between them at the price $P^p_t$. This simple observation motivates the following definition.

\bd  \label{defarba}
Assume that the inequality $P^{h}_t(x_{1},A,C) \ne P^{c}_t (x_{2},-A,-C)$ holds. Then the $\G_t$-measurable interval
${\cal R}^p_t (x_1,x_2) := \big[ P^{h}_t(x_{1},A,C), P^{c}_t (x_{2},-A,-C) \big]$
is called the {\it range of bilaterally profitable prices} at time $t$ of an OTC contract $(A,C)$ between the hedger
and the counterparty.
\ed

Note that in our discussion, we dealt in fact with at least three different concepts of arbitrage:

\noindent {\bf (A.1)} the classic definition of an arbitrage opportunity that may arise by trading in primary assets,

\noindent {\bf (A.2)} an arbitrage opportunity associated with a long hedged position in some contract combined with
a short unhedged position in the same contract; in that case, the contract's price at time 0 is considered
to be exogenously given by the market, that is, it is driven by the law of demand and supply (see Definition \ref{arbitrage using nettled wealth} of an arbitrage in regard to a given contract),

\noindent {\bf (A.3)} an arbitrage opportunity related to the fact that the hedger and the counterparty may require different
premia to implement their respective replicating strategies; if this kind of an arbitrage opportunity arises, then
it is simultaneously available to both parties involved in an OTC contract with a price negotiated between them (as in Definition \ref{defarba}).


Note that in case (C.2) an immediate {\it reselling arbitrage opportunity} arises for a third party, that is, a trader who could simultaneously `purchase' a contract from one party and `resell' to the other. Specifically, if $P^{h}_t (x_{1},A,C) \leq 0$ and $P^{c}_t (x_{2},-A,-C)> 0$, then a third party can make a deal with the hedger to face $(-A,-C)$ and receive $-P^{h}_t (x_{1},A,C)\ge 0$  and, at the same time, enter the contract with the counterparty to face $(A,C)$ and get $P^{c}_t (x_{2},-A,-C) > 0$. This offsetting strategy produces an immediate profit of $P^{c}_t (x_{2},-A,-C)-P^{h}_t (x_{1},A,C)>0$ for the third party.

\section{Pricing BSDEs and Replicating Strategies}    \label{sect4}

Our next aim is to show that the hedger's and counterparty's prices and their replicating strategies can be found
by solving suitable BSDEs. For this purpose, we will use some auxiliary
results on BSDEs driven by multi-dimensional continuous martingales (see \cite{NR3} and the references therein).
In Propositions \ref{hedger ex-dividend price} and \ref{counterparty ex-dividend price}, we will show that
if $x_{1}x_{2}\ge0$, then the prices $P^{h}_t (x_{1},A,C)$ and $P^{c}_t (x_{2},-A,-C)$ are given by the
solutions of two BSDEs that are driven by either the common $(\PT^l , \gg)$-local martingale $\wt S^{l,\textrm{cld}}$
(when $x_1 \geq 0,\, x_2 \geq 0$) or the common $(\PT^b , \gg)$-local martingale $\wt S^{b,\textrm{cld}}$
(when $x_1 \leq 0,\, x_2 \leq 0$).
By contrast, when the inequality $x_{1}x_{2}<0$ holds, say $x_1>0$ and $x_2<0$,
then the prices are associated with solutions to the two BSDEs driven by $\wt S^{l,\textrm{cld}}$ and $\wt S^{b,\textrm{cld}}$, respectively. Therefore, to find the range of fair (or profitable) bilateral prices using the comparison theorem for BSDEs, we will first need to find a suitable variant of the pricing BSDE for both parties, which will be driven by a common continuous local martingale (see Section \ref{sect5.3}).

\subsection{Modeling of Risky Assets}        \label{sect4.1}

To show the existence of a solution to the pricing BSDE, we need to complement Assumptions \ref{assumption for lending cumulative dividend price} and \ref{assumption for borrowing cumulative dividend price} by imposing specific conditions on the underlying market model. For any $d\times d$ matrix $m$, the norm of $m$  is given by $\norm m \norm^2 :=\text{Tr}(m m^{\ast})$.
In the next assumption, the superscript $k$ stands for either $l$ or $b$.

\bhyp \label{additional assumption for lending cumulative dividend price}
We postulate that: \hfill \break
(i) the process $\wt S^{,\textrm{cld}}$ is a continuous, square-integrable, $(\PT^k , \gg)$-martingale and has the predictable representation property (PRP) with respect to the filtration $\gg$ under~$\PT^k$, \hfill \break
(ii) there exists an $\mathbb{R}^{d\times d}$-valued, $\gg$-adapted process $m^{k}$ such that
\be \label{vfvf1}
\langle \wt S^{k,\textrm{cld}}\rangle_{t}=\int_{0}^{t}m^{k}_{u}(m_{u}^{k})^{\ast}\,du
\ee
where $m^{k}(m^{k})^{\ast}$ is invertible and there exists a constant $K_m>0$ such that, for all $t\in[0,T]$,
\be \label{mmc2}
\norm m^k_{t}\norm+\norm(m^k_{t}(m^k_{t})^{\ast})^{-\frac{1}{2}}\norm\leq K_m,
\ee
(iii) the price processes $S^{i},\, i=1,2,\ldots,d$ of risky assets are bounded.
\ehyp

Note that condition \eqref{mmc2} means that the process $m^l$ satisfies Assumption 5.2 in \cite{NR3}.
Although the postulate that the prices $S^{i},\, i=1,2,\ldots,d$ are bounded can be seen as a quite reasonable real-world requirement, it is rarely satisfied in commonly used financial models, including the classic Black and Scholes model. It is also worth noting that condition \eqref{mmc2} could appear to be too restrictive. In order to relax this condition, we will need to impose stronger conditions on the process $\langle \wt S^{l,\textrm{cld}}\rangle$. Specifically, we define the matrix-valued process
$\mathbb{S}$
\[
\mathbb{S}_{t}:=
\begin{pmatrix}
S^{1}_{t} & 0 & \ldots & 0 \\
0 & S^{2}_{t} & \ldots & 0 \\
\vdots & \vdots & \ddots & \vdots\\
0 & 0 & \ldots & S^{d}_{t}
\end{pmatrix}.
\]
\bd \label{ellipticity}
We say that $\gamma $ satisfies the {\it ellipticity} condition if there exists a constant $\Lambda>0$
\be \label{elli}
\sum_{i,j=1}^{d}\left(\gamma_{t}\gamma^{\ast}_{t}\right)_{ij}a_{i}a_{j}\ge \Lambda|a|^{2}=\Lambda a^{\ast}a,\ \text{ for all } a\in\mathbb{R}^{d} \text{ and } t\in[0,T].
\ee
\ed

We consider the following assumption, which should be seen as an alternative to Assumption
\ref{additional assumption for lending cumulative dividend price}. Once again, the superscript $k$
is equal either to $l$ or $b$.

\bhyp \label{changed assumption for lending cumulative dividend price}
We postulate that: \hfill \break
(i) the process $\wt S^{k,\textrm{cld}}$ is a continuous, square-integrable, $(\PT^k , \gg)$-martingale
and has the PRP with respect to the filtration $\gg$ under~$\PT^k$, \hfill \break
(ii) equality \eqref{vfvf1} holds with the $\gg$-adapted process $m^{k}$ such that $m^{k}(m^{k})^{\ast}$ is invertible and satisfies $m^{k}(m^{k})^{\ast}=\mathbb{S}\gamma\gamma^{\ast}\mathbb{S}$ where a $d$-dimensional square matrix $\gamma$ of $\gg$-adapted processes satisfies the ellipticity condition~\eqref{elli}.
\ehyp

\brem \label{remark for diffusion type market model 1}
We will show that Assumption \ref{additional assumption for lending cumulative dividend price} or \ref{changed assumption for lending cumulative dividend price} can be easily met when the prices of risky assets are given by the diffusion-type model.
For example, we may assume that each risky asset $S^i,\, i=1, 2, \dots  ,d$ has the ex-dividend price dynamics under $\P$ given by
\[
dS^i_t = S^i_t \bigg( \mu^i_t \, dt + \sum_{j=1}^d \sigma^{ij}_t \, dW^j_t \bigg), \quad S^i_0>0 ,
\]
or, equivalently, the $d$-dimensional process $S=(S^{1},\ldots,S^{d})^{\ast}$ satisfies
\[
dS_{t}=\mathbb{S}_{t}(\mu_{t} \, dt+\sigma_{t} \, dW_{t})
\]
where $W = (W^1, \dots , W^d)^{\ast}$ is the $d$-dimensional Brownian motion, $\mu = (\mu^1, \dots , \mu^d)^{\ast}$ is an $\mathbb{R}^d$-valued, $\ff^W$-adapted process, $\sigma = [\sigma^{ij}]$ is a $d$-dimensional square matrix of $\ff^W$-adapted processes satisfying the {\it ellipticity} condition.
We now set $\gg = \ff^W$ and we recall that the $d$-dimensional Brownian motion $W$ enjoys the predictable representation
property with respect to its natural filtration $\ff^W$; this property is shared by the process $\wt{W}$ defined
 \eqref{wtw2}.

Assuming that the corresponding dividend processes are given by $\pA^i_t = \int_0^t \kappa^i_{u} S^i_{u} \, du $,
we obtain
\bde
d\wt S_t^{i,l,\textrm{cld}}=(B_{t}^{l})^{-1}\big(dS^i_t+ d\pA^i_t-r_{t}^{l}S_{t}^{i}\, dt\big)
=(B_{t}^{l})^{-1}S^{i}_{t}\bigg( \big(\mu^i_t+\kappa^i_t-r_{t}^{l}\big)\,dt+ \sum_{j=1}^d \sigma^{ij}_t \, dW^j_t \bigg).
\ede
If we denote $S^{l,\textrm{cld}}=(S^{1,l,\textrm{cld}},\ldots,S^{d,l,\textrm{cld}})^{\ast}$ and
$\mu+\kappa-r^{l}= (\mu^1+\kappa^{1}-r^{l}, \dots , \mu^d+\kappa^{d}-r^{l})^{\ast}$, then
\bde
d\wt S_t^{l,\textrm{cld}}=(B_{t}^{l})^{-1}\mathbb{S}_{t}\Big( \big(\mu_t+\kappa_t-r_{t}^{l}\big)\,dt+ \sigma_t \, dW_t \Big).
\ede
We set $a_{t}:=\sigma_t^{-1}(\mu_t+\kappa_t-r^{l}_{t})$ for all $t \in [0,T]$ and we define the probability
measure $\PT^{l}$ on $(\Omega , {\cal F}^W_T)$ by
\be
\frac{d\PT^{l}}{d\P}=\exp\bigg\{-\int_{0}^{T}a_{t}\, dW_{t}-\frac{1}{2}\int_{0}^{T}|a_{t}|^{2}\, dt\bigg\}.
\ee
Then $\PT^{l}$ is equivalent to $\P$ and, from the Girsanov theorem, the process $\widetilde{W}:=(\widetilde{W}^{1},\widetilde{W}^{2},\ldots,\widetilde{W}^{d})^{\ast}$ is a Brownian motion under $\PT^{l}$, where
\be \lab{wtw2}
d\widetilde{W}_{t} := dW_{t}+a_{t}\, dt = dW_{t}+\sigma_t^{-1}(\mu_t+\kappa_t-r^{l}_{t})\,dt.
\ee
It is clear that under $\PT^{l}$
\bde
d\wt S_t^{l,\textrm{cld}} = (B_{t}^{l})^{-1}\mathbb{S}_{t} \sigma_t \, d\wt{W}_t .
\ede
Therefore, if the processes $\mu,\, \sigma$ and $\kappa$ are bounded, then the processes
$\wt S^{i,l,\textrm{cld}},\, i=1,2, \dots ,d$ are continuous, square-integrable, $(\PT^{l}, \gg)$-martingales. Furthermore,
the quadratic variation of $\wt S^{l,\textrm{cld}}$ equals
\bde
\langle \wt S^{l,\textrm{cld}}\rangle_{t}=\int_{0}^{t}m^{l}_{u}(m_{u}^{l})^{\ast}\,du
\ede
where $m^{l}(m^{l})^{\ast}=\mathbb{S} \gamma \gamma \mathbb{S}$ and $\gamma :=(B^{l})^{-1}\sigma$. Obviously, $m^{l}(m^{l})^{\ast}$ is invertible and thus Assumption \ref{changed assumption for lending cumulative dividend price} is satisfied.
Moreover, if the processes $S^{i}$, $i=1,2, \dots ,d$ are bounded, then the process $m^{l}$ satisfies condition \eqref{mmc2}
and thus Assumption \ref{additional assumption for lending cumulative dividend price} is valid with $k=l$.
\erem

\subsection{Hedgers's Prices and Replicating Strategies}    \label{sect4.3}

From now on, we work under the standing assumption that $Q_t = t$ for every $t \in [0,T]$ in Assumption 3.1 in \cite{NR3}
and thus also in all results in Sections 3--5 of \cite{NR3} (in particular, in the definition
of the norm for the space $\wHlamd$). Note that this postulate is consistent with either of Assumptions  \ref{additional assumption for lending cumulative dividend price} and \ref{changed assumption for lending cumulative dividend price}.
Moreover, we henceforth postulate that the processes $r^{l},\,r^{b}$ and $r^{i,b}$ for $i=1,2,\ldots,d$
are nonnegative and bounded.

The following result describes the prices and hedging strategies for the hedger.
Recall that  $A^{C}:=A+C+F^{C}$ and
\bde
F^{C}_t =-\int_0^t (B^{\pCc,b}_{u})^{-1}\pC_{u}^+\, dB^{\pCc,b}_{u} + \int_0^t (B^{\pCc,l}_{u})^{-1} \pC_{u}^- \, dB^{\pCc,l}_{u}.
\ede
Following \cite{NR3}, but with $Q_t=t$, we denote by $\wHzerd $ the subspace of all $\mathbb{R}^{d}$-valued, $\gg$-adapted processes $X$ with
\be \label{defhh}
|X|_{\wHzerd}^{2}:=\EP \bigg[ \int_{0}^{T}\|X_{t}\|^{2}\,dt \bigg] <\infty .
\ee
Also, let $\widehat{L}^{2}_{0}$ stand for the space of all real-valued, $\mathcal{G}_{T}$-measurable
random variables $\eta$ such that $|\eta|_{\widehat{L}^2_{0}}^{2}=\EP (\eta^{2})<\infty $.

\bd
A contract $(A,C)$ is {\it admissible under} $\PT^l$ if the process $A^{C,l}$ belongs to $\wHzero$
and the random variable $A^{C,l}_{T}$ belongs to $\widehat{L}^{2}_{0}$ under~$\PT^l$.
A contract $(A,C)$ is {\it admissible under} $\PT^b$ if the process $A^{C,b}$ belongs to $\wHzero$
and the random variable $A^{C,b}_{T}$ belongs to $\widehat{L}^{2}_{0}$ under~$\PT^b$.
\ed

\bp \label{hedger ex-dividend price}
(i) Let either Assumption \ref{additional assumption for lending cumulative dividend price} or Assumption \ref{changed assumption for lending cumulative dividend price} with $k=l$ be satisfied. Then for any real number $x \geq 0$ and any
contract $(A,C)$ admissible under $\PT^l$, the hedger's ex-dividend price satisfies $P^{h}(x,A,C) =  \Blr (Y^{h,l,x} - x)  - C$ where $(Y^{h,l,x}, Z^{h,l,x})$ is the unique solution to the BSDE
\begin{equation}\label{BSDE with positive x for hedger}
\left\{
\begin{array}
[c]{l}
dY^{h,l,x}_t = Z^{h,l,x,\ast}_t \, d \wt S^{l,{\textrm{cld}}}_t
+\wt{f}_l \big(t, Y^{h,l,x}_t, Z^{h,l,x}_t \big)\, dt + dA^{C,l}_t, \medskip\\
Y^{h,l,x}_T=x.
\end{array}
\right.
\end{equation}
The unique replicating strategy equals
$\phi = \big(\xi^1,\dots ,\xi^d, \psi^{l}, \psi^{b},\psi^{1,b},\dots ,\psi^{d,b}, \etab, \etal\big)$
where, for every $t\in[0,T]$ and $i=1,2,\ldots,d,$
\bde
\xi^i_{t}= Z^{h,l,x,i}_{t},  \quad
\psi^{i,b}_t =  -(\Bibr_t)^{-1} (\xi^i_t S^i_t)^+, \quad
\etab_t =-  (B^{\pCc,b}_t)^{-1}\pC_t^+, \quad
\etal_t =(B^{\pCc,l}_t)^{-1} \pC_t^-,
\ede
and
\begin{align*}
&\psi^{l}_t = (\Blr_t)^{-1} \Big( \Blr_tY^{h,l,x}_{t} + \sumik_{i=1}^d ( \xi^i_t S^i_t )^- \Big)^+, \\
&\psi^{b}_t = - (\Bbr_t)^{-1} \Big(\Blr_tY^{h,l,x}_{t}+ \sumik_{i=1}^d ( \xi^i_t S^i_t )^- \Big)^-.
\end{align*}
(ii) Let either Assumption \ref{additional assumption for lending cumulative dividend price} or Assumption \ref{changed assumption for lending cumulative dividend price} with $k=b$ be satisfied. Then for any real number $x\leq0$ and any contract $(A,C)$ admissible under $\PT^b$, the hedger's ex-dividend price satisfies $P^{h}(x,A,C) =  \Bbr (Y^{h,b,x} - x)- C$ where $(Y^{h,b,x}, Z^{h,b,x})$ is the unique solution to the BSDE
\begin{equation}\label{BSDE with negative x for hedger}
\left\{
\begin{array}
[c]{l}
dY^{h,b,x}_t = Z^{h,b,x,\ast}_t \, d \wt S^{b,{\textrm{cld}}}_t
+\wt{f}_b \big(t, Y^{h,b,x}_t, Z^{h,b,x}_t \big)\, dt + dA^{C,b}_t, \medskip\\
Y^{h,b,x}_T=x.
\end{array}
\right.
\end{equation}
The unique replicating strategy equals $\phi = \big(\xi^1,\dots ,\xi^d, \psi^{l},
\psi^{b},\psi^{1,b},\dots ,\psi^{d,b}, \etab, \etal\big)$ where, for every $t\in[0,T]$ and $i=1,2,\ldots,d,$
\bde
\xi^i_{t}= Z^{h,b,x,i}_{t},  \quad
\psi^{i,b}_t =  -(\Bibr_t)^{-1} (\xi^i_t S^i_t)^+, \quad
\etab_t =-  (B^{\pCc,b}_t)^{-1}\pC_t^+, \quad
\etal_t =(B^{\pCc,l}_t)^{-1} \pC_t^-,
\ede
and
\begin{align*}
&\psi^{l}_t = (\Blr_t)^{-1} \Big( \Bbr_tY^{h,b,x}_{t} + \sumik_{i=1}^d ( \xi^i_t S^i_t )^- \Big)^+, \\
&\psi^{b}_t = - (\Bbr_t)^{-1} \Big(\Bbr_tY^{h,b,x}_{t}+ \sumik_{i=1}^d ( \xi^i_t S^i_t )^- \Big)^-.
\end{align*}
\ep

\begin{proof}
Assume first that $x\ge0$. Then, from Theorems 4.1 and 5.1 in \cite{NR3}, we know that if either Assumption \ref{additional assumption for lending cumulative dividend price} or Assumption \ref{changed assumption for lending cumulative dividend price} with
$k=l$ is satisfied and $A^{C,l}\in\wHzero $ and $A^{C,l}_{T}\in \widehat{L}^{2}_{0}$ under $\PT^l$, then BSDE (\ref{BSDE with positive x for hedger}) has a unique solution $(Y^{h,l,x}, Z^{h,l,x})$. Thus, from Proposition 5.2 in \cite{BR-2014} we obtain $P^{h}(x,A,C) =  \Blr(Y^{h,l,x} - x)  - C$.  Moreover, the replicating strategy $\varphi$ can be constructed uniquely, as was explained in Section \ref{sect2.3}. In view of Remark 5.3 in \cite{BR-2014}, an analogous analysis can be done when the initial endowment satisfies $x\leq0$ .
\end{proof}

\brem
Let us give some comments on the uniqueness of a replicating strategy in Proposition \ref{hedger ex-dividend price}. We only consider the case when $x\ge0$, since similar arguments apply to the case $x\leq0$. The uniqueness of the solution of BSDE (\ref{BSDE with positive x for hedger}) means that if $(Y^{1},Z^{1})$ and $(Y^{2},Z^{2})$ are two solutions of BSDE (\ref{BSDE with positive x for hedger}), then
\be \label{lol}
\mathbb{E}_{\wt{\mathbb{P}}_{l}}\bigg[\int_{0}^{T}|Y^{1}_{t}-Y^{2}_{t}|^{2}\, dt
+\int_{0}^{T}\|(m^{l}_{t})^{\ast}Z^{1}_{t}-(m^{l}_{t})^{\ast}Z^{2}_{t}\|^{2}\, dt\bigg]=0.
\ee
Under Assumption \ref{additional assumption for lending cumulative dividend price}, there exists a constant $k_m$ such that $\norm (m^{l}_{t})^{\ast}\norm \ge k_m$ 
and a constant $K$ such that $\norm \mathbb{S}_{t}\norm\leq K$. Therefore, from \eqref{lol} we deduce that
\be \label{lol1}
\mathbb{E}_{\wt{\mathbb{P}}_{l}}\bigg[\int_{0}^{T}\|\mathbb{S}_{t}Z^{1}_{t}-\mathbb{S}_{t}Z^{2}_{t}\|^{2}\, dt\bigg]=0.
\ee
Under Assumption \ref{changed assumption for lending cumulative dividend price} with $k=l$, we have that $m^{l}(m^{l})^{\ast}=\mathbb{S}\gamma\gamma^{\ast}\mathbb{S}$ and thus
\[
\mathbb{E}_{\wt{\mathbb{P}}_{l}}\bigg[\int_{0}^{T}\|(m^{l}_{t})^{\ast}(Z^{1}_{t}-Z^{2}_{t}) \|^{2}\, dt\bigg]=
\mathbb{E}_{\wt{\mathbb{P}}_{l}}\bigg[\int_{0}^{T}(Z^{1}_{t}
-Z^{2}_{t})^{\ast}\mathbb{S}\gamma\gamma^{\ast}\mathbb{S}(Z^{1}_{t}-Z^{2}_{t})\, dt\bigg].
\]
Since $\gamma$ satisfies the ellipticity condition, there exists a constant $\Lambda>0$ such that
\[
\mathbb{E}_{\wt{\mathbb{P}}_{l}}\bigg[\int_{0}^{T}(Z^{1}_{t}-Z^{2}_{t})^{\ast}\mathbb{S}_{t}\gamma
\gamma^{\ast}\mathbb{S}_{t}(Z^{1}_{t}-Z^{2}_{t})\, dt\bigg]
\ge \Lambda \, \mathbb{E}_{\wt{\mathbb{P}}_{l}}\bigg[\int_{0}^{T}\|\mathbb{S}_{t}Z^{1}_{t}-\mathbb{S}_{t}Z^{2}_{t}\|^{2}\, dt\bigg].
\]
We conclude that under either of Assumptions \ref{additional assumption for lending cumulative dividend price} and
\ref{changed assumption for lending cumulative dividend price} with $k=l$, equality \eqref{lol1} is satisfied by any two
solutions of BSDE (\ref{BSDE with positive x for hedger}).

From the above arguments and the structure of the replicating strategy (see Proposition \ref{hedger ex-dividend price})
\[
\phi = \big(\xi^1,\dots ,\xi^d, \psi^{l}, \psi^{b},\psi^{1,b},\dots ,\psi^{d,b}, \etab, \etal\big),
\]
we know that the uniqueness is in the sense of equivalence with respect to $\P_{l}\otimes \Leb$.
Moreover,  for $\xi=(\xi^1,\dots ,\xi^d)^{\ast}$, $\psi^{l}$, $\psi^{b}$ and $\psi=(\psi^{1,b},\dots ,\psi^{d,b})^{\ast}$ the uniqueness holds in the following norm
\[
\|\varphi\|:=\mathbb{E}_{\wt{\mathbb{P}}_{l}}\bigg[\int_{0}^{T}\|\mathbb{S}_{t}\xi_{t}\|^{2}dt+\int_{0}^{T}(|\psi^{l}_{t}|^{2}+|\psi^{b}_{t}|^{2})dt
+\int_{0}^{T}\|\psi_{t}\|^{2}dt\bigg].
\]
\erem

\subsection{Counterparty's Prices and Replicating Strategies}   \label{sect4.4}

Let us first observe that, in view of Assumption \ref{additional assumption for collateral account}, we have $(-A)^{-C}=-A^{C}$.
Using Definition \ref{remark for counterparty's ex-dividend price}, one can prove the following result for the counterparty.

\bp \label{counterparty ex-dividend price}
Let the assumptions of part (i) or (ii) in Proposition \ref{hedger ex-dividend price} be satisfied for $x \ge 0$
and $x \le 0$, respectively.
Then the counterparty's ex-dividend price satisfies, for every $t \in [0,T)$,
\bde
P^{c}_t (x,-A,-C) =-\left(\Blr_t (Y^{c,l,x}_t - x)+C_t\right)\I_{\{x\ge0\}}-\left(\Bbr_t (Y^{c,b,x}_t - x)+C_t\right)\I_{\{x\leq0\}}
\ede
where $(Y^{c,l,x}, Z^{c,l,x})$ and $(Y^{c,b,x}, Z^{c,b,x})$ is respectively the unique solution to the BSDE
\begin{equation}\label{BSDE with positive x for counterparty}
\left\{
\begin{array}
[c]{l}
dY^{c,l,x}_t = Z^{c,l,x,\ast}_t \, d \wt S^{l,{\textrm{cld}}}_t
+\wt{f}_l \big(t, Y^{c,l,x}_t, Z^{c,l,x}_t \big)\, dt - dA^{C,l}_t, \medskip\\
Y^{c,l,x}_T=x,
\end{array}
\right.
\end{equation}
and
\begin{equation}\label{BSDE with negative x for counterparty}
\left\{
\begin{array}
[c]{l}
dY^{c,b,x}_t = Z^{c,b,x,\ast}_t \, d \wt S^{b,{\textrm{cld}}}_t
+\wt{f}_b \big(t, Y^{c,b,x}_t, Z^{c,b,x}_t \big)\, dt - dA^{C,b}_t, \medskip\\
Y^{c,b,x}_T=x.
\end{array}
\right.
\end{equation}
The unique replicating strategy equals
$\phi = \big(\xi^1,\dots ,\xi^d, \psi^{l}, \psi^{b},\psi^{1,b},\dots ,\psi^{d,b}, \etab, \etal\big)$
where, for every $t\in[0,T]$ and $i=1,2,\ldots,d,$
\bde
\xi_{t}= Z^{c,l,x}_{t}\I_{\{x\ge0\}}+Z^{c,b,x}_{t}\I_{\{x\leq0\}},\
\psi^{i,b}_t =  -(\Bibr_t)^{-1} (\xi^i_t S^i_t)^+, \
\etab_t =-  (B^{\pCc,b}_t)^{-1}\pC_t^-, \
\etal_t =(B^{\pCc,l}_t)^{-1} \pC_t^+ ,
\ede
and
\begin{align*}
&\psi^{l}_t = (\Blr_t)^{-1} \Big( \Blr_tY^{c,l,x}_{t}\I_{\{x\ge0\}}+\Bbr_tY^{c,b,x}_{t}\I_{\{x\leq0\}}+ \sumik_{i=1}^d ( \xi^i_t S^i_t )^- \Big)^+, \\
&\psi^{b}_t = - (\Bbr_t)^{-1} \Big(\Blr_tY^{c,l,x}_{t}\I_{\{x\ge0\}}+\Bbr_tY^{c,b,x}_{t}\I_{\{x\leq0\}}+ \sumik_{i=1}^d ( \xi^i_t S^i_t )^- \Big)^-.
\end{align*}
\ep

\section{Properties of Arbitrage Prices} \label{sect5}

Recall that we consider the special case of an exogenous margin account with rehypothecated cash collateral.
The exogenous property implies that $C$ does not depend on a strategy $\varphi$ and the value of the strategy.
We denote the initial endowment of the hedger (resp., counterparty) by $x_{1}$ (resp., $x_{2}$). We will examine
 the pricing and hedging problems for both parties in the following situations:

\noindent  -- the initial endowments satisfy $x_{1}\ge0$ and $x_{2}\ge0$,

\noindent  -- the initial endowments satisfy $x_{1}\leq0$ and $x_{2}\leq0$,

\noindent -- the initial endowments satisfy $x_{1} x_{2}\leq 0$.

%
%

Our goal is to establish inequalities for unilateral prices for each of the three above-mentioned cases
(see Propositions \ref{inequality proposition for both positive initial wealth}, \ref{inequality proposition for both negative initial wealth} and \ref{inequality proposition for positive negative initial wealth}, respectively) and thus also to derive the ranges of fair bilateral prices. In the last case, that is, when $x_{1} x_{2}\leq 0$ we also examine the properties of the class of {\it monotone} contracts (see Section \ref{sect5.3.2}). Finally, we study the monotonicity of unilateral prices with respect to the initial endowment and we derive the pricing PDE in the Markovian framework.

\subsection{Initial Endowments of Equal Signs}  \label{sect5.1}

We first assume that both parties have positive initial endowments, that is,
$x_{1}\ge0$ and $x_{2}\ge0$. The proof of Proposition \ref{inequality proposition for both positive initial wealth}
is provided in Nie and Rutkowski \cite{NR3} (see Theorem 5.2 in \cite{NR3}), where a suitable comparison
theorem for BSDEs driven by a multi-dimensional martingale is also proven.

\bp \label{inequality proposition for both positive initial wealth}
Let either Assumption \ref{additional assumption for lending cumulative dividend price} or Assumption \ref{changed assumption for lending cumulative dividend price} with $k=l$ hold. If $x_{1}\ge0$ and $x_{2}\ge0$, then for any contract $(A,C)$ admissible under $\PT^l$ we have, for all $t\in[0,T]$,
\be \label{eqq1}
P^{c}_t (x_{2},-A,-C)\leq P^{h}_t (x_{1},A,C),  \quad \PT^l-\aass ,
\ee
so that the range of fair bilateral prices ${\cal R}^f_t (x_1,x_2)$ is non-empty almost surely.
\ep


In the second step, we postulate that both parties have positive initial endowments, that is,
$x_{1}\leq0$ and $x_{2}\leq0$. As was explained in Remark \ref{rates assumption for negative wealth}, we now need assume that $\rbb_t\leq\ribb_t$ for $i=1,2,\ldots,d$. The proof of Proposition \ref{inequality proposition for both negative initial wealth}  is postponed to the appendix.

\bp \label{inequality proposition for both negative initial wealth}
Let either Assumption \ref{additional assumption for lending cumulative dividend price} or Assumption
\ref{changed assumption for lending cumulative dividend price} with $k=b$ hold.
If $x_{1}\leq 0,\, x_{2}\leq 0$ and $r^b \leq r^{i,b}$ for $i=1,2,\ldots,d$,
then for any contract $(A,C)$ admissible under $\PT^b$ we have, for all $t\in[0,T]$,
\be \label{eqq2}
P^{c}_t (x_{2},-A,-C)\leq P^{h}_t (x_{1},A,C),\quad \PT^b-\aass ,
\ee
so that the range of fair bilateral prices ${\cal R}^f_t (x_1,x_2)$ is non-empty almost surely.
\ep

\subsection{Initial Endowments of Opposite Signs}  \label{sect5.3}

We now consider the case when the initial endowments of the two parties have opposite signs, specifically, we postulate that $x_{1}\ge0$ and $x_{2}\leq0$. From Propositions \ref{hedger ex-dividend price} and \ref{counterparty ex-dividend price}, it follows that $P^{h} (x_{1},A,C) =  \Blr (Y^{h,l,x_{1}}- x_{1})  - C$ where $(Y^{h,l,x_{1}}, Z^{h,l,x_{1}})$ is the unique solution of the BSDE
\be
\left\{
\begin{array}
[c]{l}
dY^{h,l,x_{1}}_t = Z^{h,l,x_{1},\ast}_t \, d \wt S^{l,{\textrm{cld}}}_t
+\wt{f}_l \big(t, Y^{h,l,x_{1}}_t, Z^{h,l,x_{1}}_t \big)\, dt + dA^{C,l}_t, \medskip\\
Y^{h,l,x_{1}}_T=x_{1},\nonumber
\end{array}
\right.
\ee
and $P^{c} (x_{2},-A,-C) =-(\Bbr (Y^{c,b,x_{2}}- x_{2})+C)$ where $(Y^{c,b,x_{2}}, Z^{c,b,x_{2}})$ is the unique solution of the BSDE
\be
\left\{
\begin{array}
[c]{l}
dY^{c,b,x_{2}}_t = Z^{c,b,x_{2},\ast}_t \, d \wt S^{b,{\textrm{cld}}}_t
+\wt{f}_b \big(t, Y^{c,b,x_{2}}_t, Z^{c,b,x_{2}}_t \big)\, dt - dA^{C,b}_t, \medskip\\
Y^{c,b,x_{2}}_T=x_{2}.\nonumber
\end{array}
\right.
\ee
Note that the BSDE for $Y^{h,l,x_{1}}$ is driven by the $(\PT^l , \gg)$-local martingale $\wt S^{l,\textrm{cld}}$, but the BSDE for $Y^{c,b,x_{2}}$ is driven by the $(\PT^b , \gg)$-local martingale $\wt S^{b,\textrm{cld}}$. We will now attempt to find another probability measure $\PT $ equivalent to $\mathbb{P}$ and a $(\PT , \gg)$-local martingale $\wt S^{\textrm{cld}}$ such that the BSDEs related to $P^{h}(x_{1},A,C)$ and $P^{c}(x_{2},A,C)$ are both driven by a common $(\PT^b , \gg)$-local martingale $\wt S^{\textrm{cld}}$.
If we denote $\widetilde{Y}^{h,l,x_{1}}=  \Blr (Y^{h,l,x_{1}} - x_{1})$, then $P^{h} (x_{1},A,C) = \widetilde{Y}^{h,l,x_{1}}-C$. In view of (\ref{cumulative dividend risk asset price1}) and (\ref{drift function lending}),
we obtain
\be
\begin{array}
[c]{ll}
&d\widetilde{Y}^{h,l,x_{1}}_t=-x_{1}\,d\Blr_t+Y^{h,l,x_{1}}_t\,d\Blr_t+\Blr_t\,dY^{h,l,x_{1}}_t\medskip\\
&=\mbox{} -x_{1}\rll_t\Blr_t\,dt+\rll_t\Blr_tY^{h,l,x_{1}}_t\,dt+\Blr_tZ^{h,l,x_{1},\ast}_t\,d \wt S^{l,{\textrm{cld}}}_t
+\Blr_t\wt{f}_l \big(t, Y^{h,l,x_{1}}_t, Z^{h,l,x_{1}}_t \big)\,dt +dA^C_t\medskip\\
&=\mbox{} -x_{1}\rll_t\Blr_t\,dt+\rll_t\Blr_tY^{h,l,x_{1}}_t\,dt+\sum_{i=1}^dZ^{h,l,x_{1},i}_t\left(dS^i_t - \rll_t S^i_t \, dt + d\pA^i_t\right)
+f_l \big(t, \Blr_tY^{h,l,x_{1}}_t, Z^{h,l,x_{1}}_t \big)\,dt\medskip\\
&\quad \mbox{} -\rll_t\Blr_tY^{h,l,x_{1}}_t\,dt +dA^C_t\medskip\\
&=\mbox{} -x_{1}\rll_t\Blr_t\,dt+\sum_{i=1}^dZ^{h,l,x_{1},i}_t\left(dS^i_t - \rll_t S^i_t \, dt + d\pA^i_t\right)\medskip\\
&\quad \mbox{} + \sum_{i=1}^d \rll_t Z^{h,l,x_{1},i}_{t} S^i_t\,dt
- \sum_{i=1}^d \ribb_t(Z^{h,l,x_{1},i}_{t}S^i_t )^+\, dt +\rll_t \Big( \Blr_t Y^{h,l,x_{1}}_t + \sum_{i=1}^d ( Z^{h,l,x_{1},i}_{t} S^i_t )^- \Big)^+dt\medskip\\
&\quad \mbox{} - \rbb_t \Big( \Blr_t Y^{h,l,x_{1}}_t+ \sum_{i=1}^d ( Z^{h,l,x_{1},i}_{t} S^i_t )^- \Big)^-dt+dA^C_t \medskip\\
&= \mbox{} - x_{1}\rll_t\Blr_t\,dt+\sum_{i=1}^dZ^{h,l,x_{1},i}_t\left(dS^i_t+ d\pA^i_t\right)+g(t,\Blr_t Y^{h,l,x_{1}}_t,Z^{h,l,x_{1}}_t)\,dt+dA^C_t
\nonumber
\end{array}
\ee
where
\be\label{drift driver for positive and negative initial wealth}
g(t,y,z)=-\sumik_{i=1}^d \ribb_t(z^{i}S^i_t )^++\rll_t \Big(y+ \sumik_{i=1}^d ( z^{i}S^i_t )^- \Big)^+
- \rbb_t \Big( y+ \sumik_{i=1}^d ( z^{i}S^i_t )^- \Big)^-.
\ee
Upon denoting $\widetilde{Z}^{h,l,x_{1}}= Z^{h,l,x_{1}}$, we obtain
\bde
d\widetilde{Y}^{h,l,x_{1}}_t =\sumik_{i=1}^d\widetilde{Z}^{h,l,x_{1},i}_t\left(dS^i_t+ d\pA^i_t\right)-x_{1}\rll_t\Blr_tdt+g\big(t, \widetilde{Y}^{h,l,x_{1}}_t+x_{1}\Blr_t,\widetilde{Z}^{h,l,x_{1}}_t\big)\,dt+dA^C_t .
\ede
Similarly, if we denote $\widetilde{Y}^{c,b,x_{2}}= -\Bbr (Y^{c,b,x_{2}} - x_{2})$ and  $\widetilde{Z}^{c,b,x_{2}}= -Z^{c,b,x_{2}} $, then the counterparty's price equals $P^{c}(x_{2},A,C) = \widetilde{Y}^{c,b,x_{2}}-C$. In view of (\ref{cumulative dividend risk asset price2}), (\ref{drift function borrowing}) and (\ref{drift driver for positive and negative initial wealth}), we obtain
\be
\begin{array}
[c]{ll}
&d\widetilde{Y}^{c,b,x_{2}}_t=x_{2}\, d\Bbr_t-Y^{c,b,x_{2}}_t\,d\Bbr_t-\Bbr_t\, dY^{c,b,x_{2}}_t\medskip\\
&=x_{2}\rbb_t\Bbr_t\,dt-\rbb_t\Bbr_tY^{c,b,x_{2}}_t\,dt-\Bbr_tZ^{c,b,x_{2},\ast}_t\,d \wt S^{b,{\textrm{cld}}}_t
-\Blr_t\wt{f}_b \big(t, Y^{c,b,x_{2}}_t, Z^{c,b,x_{2}}_t \big)\,dt +dA^C_t\medskip\\
&=x_{2}\rbb_t\Bbr_t\,dt-\rbb_t\Bbr_tY^{c,b,x_{2}}_t\,dt-\sum_{i=1}^dZ^{c,b,x_{2},i}_t\left(dS^i_t - \rbb_t S^i_t \, dt + d\pA^i_t\right)
-f_b \big(t, \Bbr_tY^{c,b,x_{2}}_t, Z^{c,b,x_{2}}_t \big)\,dt\medskip\\
&\quad\mbox{} +\rbb_t\Bbr_tY^{h,l,x_{1}}_t\,dt+dA^C_t\medskip\\
&=x_{2}\rbb_t\Bbr_t\,dt-\sum_{i=1}^dZ^{c,b,x_{2},i}_t\left(dS^i_t+ d\pA^i_t\right)-g(t,\Blr_t Y^{c,b,x_{2}}_t,Z^{c,b,x_{2}}_t)\,dt+dA^C_t\medskip\\
&=\sum_{i=1}^d\widetilde{Z}^{c,b,x_{2},i}_t\left(dS^i_t+ d\pA^i_t\right)+x_{2}\rbb_t\Bbr_t\,dt-g\big(t, -\widetilde{Y}^{c,b,x_{2}}_t+x_{2}\Bbr_t,-\widetilde{Z}^{c,b,x_{2}}_t\big)\,dt+dA^C_t.\nonumber
\nonumber
\end{array}
\ee
The following assumptions are motivated by Assumptions \ref{additional assumption for lending cumulative dividend price}
and \ref{changed assumption for lending cumulative dividend price}, respectively.

\bhyp \label{assumption for artifical cumulative dividend price}
We postulate that: \hfill \break
(i) there exists a probability measure $\PTb $ equivalent to $\P$ such that the
processes $\wt S^{i,\textrm{cld}},\, i=1,2, \dots ,d$
given by \be\label{auxiliary processes}
d\wt S^{i,\textrm{cld}}_t = dS^i_t + d\pA^i_t - \beta^{i}_{t}S_{t}^{i}\,dt
\ee
for some $\gg$-adapted processes $\beta^{i}$ satisfying $r^{b}\leq\beta^{i}\leq r^{i,b}$,
are continuous, square-integrable $(\PTb , \gg)$-martingales, and have the PRP with respect to the filtration $\gg$ under $\PTb$,\hfill \break
(ii) there exists an $\mathbb{R}^{d\times d}$-valued, $\gg$-adapted process $m$ such that
\be\label{auxiliary processes quadratic variation}
\langle \wt S^{\textrm{cld}}\rangle_{t}=\int_{0}^{t}m_{u}m_{u}^{\ast}\,du,
\ee
where $m(m)^{\ast}$ is invertible and there exists a constant $K_m>0$ such that, for all $t\in[0,T]$,
\be \label{mmc2x}
\norm m_{t}\norm+\norm(m_{t}(m_{t})^{\ast})^{-\frac{1}{2}}\norm\leq K_m,
\ee
(iii) the price processes $S^{i},\, i=1,2,\ldots,d$ of risky assets are bounded.
\ehyp

\bhyp \label{changed assumption for artifical cumulative dividend price}
We postulate that: \hfill \break
(i) there exists a probability measure $\PTb $ equivalent to $\P$ such that the processes $\wt S^{i,\textrm{cld}},\, i=1,2, \dots ,d$ given by (\ref{auxiliary processes}) are $(\PTb , \gg)$-continuous square integrable martingales, and have the PRP with respect to the filtration $\gg$ under $\PTb$,\hfill \break
(ii) condition (\ref{auxiliary processes quadratic variation}) holds with the $\gg$-adapted process $m$ such that $mm^{\ast}$ is invertible and given by $mm^{\ast}= \mathbb{S}\gamma\gamma^{\ast}\mathbb{S}$
where a $d$-dimensional square matrix $\gamma$ of $\gg$-adapted processes satisfies the ellipticity condition~\eqref{elli}.
\ehyp

\brem \label{remark for different assumptions for cumulative dividend price}
Recall that
\bde
d\wt S^{i,l,{\textrm{cld}}}_t=(\Blr_t)^{-1}\left(dS^i_t - \rll_t S^i_t \, dt + d\pA^i_t\right), \quad
d\wt S^{i,b,{\textrm{cld}}}_t=(\Bbr_t)^{-1}\left(dS^i_t - \rbb_t S^i_t \, dt + d\pA^i_t\right).
\ede
Since $r^l$ is non-negative and bounded, we obtain $C_{0}<(\Blr)^{-1}<1$ for some constant $C_{0}$.
Then Assumption \ref{assumption for artifical cumulative dividend price} with $\beta^{i}=\rll$ is equivalent to Assumption \ref{assumption for lending cumulative dividend price}. Similar comments apply to other assumptions. We mention that, from
Proposition \ref{remark for non-arbitrage model} and $r^{b}\leq\beta^{i}\leq r^{i,b}$, we know that under Assumption \ref{assumption for artifical cumulative dividend price} or Assumption \ref{changed assumption for artifical cumulative dividend price}, our partial netting model is arbitrage-free with respect to any contract for both the hedger and counterparty with $x_{1},x_{2}\in\mathbb{R}$.
\erem

\brem
The above assumptions can be easily satisfied for the diffusion-type market model similarly to the one in Remark \ref{remark for diffusion type market model 1}; the details are left to the reader.
\erem

\bd
We say that $(A,C)$ is {\it admissible under} $\PTb$ when $A^C \in\wHzero $ and $A^C_T\in \widehat{L}^{2}_{0}$ under $\PTb$.
\ed

For $g$ given by (\ref{drift driver for positive and negative initial wealth}),  let us define
\bde
g^{h}(t,x,y,z):=\sum_{i=1}^dz^{i}_t\beta^{i}_{t}S_{t}^{i}+(-x\rll_t\Blr_t+g(t,y+x\Blr_t,z))\I_{\{x\ge0\}}+(-x\rbb_t\Bbr_t+g(t, y+x\Bbr_t,z))\I_{\{x\leq0\}}
\ede
and
\bde
g^{c}(t,x,y,z):=\sum_{i=1}^dz^{i}_t\beta^{i}_{t}S_{t}^{i}+(x\rll_t\Blr_t-g(t,-y+x\Blr_t,-z))\I_{\{x\ge0\}}+(x\rbb_t\Bbr_t-g(t, -y+x\Bbr_t,-z))\I_{\{x\leq0\}}.
\ede

The next result is a counterpart of Propositions  \ref{hedger ex-dividend price} and \ref{counterparty ex-dividend price}.
In view of the discussion at the beginning of this subsection, Proposition \ref{general pricing proposition} is a rather
straightforward consequence of Theorem 4.1 in \cite{NR3} and thus its proof is omitted.

\bp \label{general pricing proposition}
Let either Assumption \ref{assumption for artifical cumulative dividend price} or Assumption \ref{changed assumption for artifical cumulative dividend price} be valid. Consider an arbitrary contract $(A,C)$ admissible
under $\PTb$. Then $P^{h}(x_{1},A,C)=\widetilde{Y}^{h,x_{1}}-C$ and $P^{c}(x_{2},-A,-C)=\widetilde{Y}^{c,x_{2}}-C$
where  $(\widetilde{Y}^{h,x_{1}},\widetilde{Z}^{h,x_{1}})$ is the unique solution of the BSDE
\be\label{artifical BSDE for hedger}
\left\{
\begin{array}
[c]{ll}
d\widetilde{Y}^{h,x_{1}}_t =\widetilde{Z}^{h,x_{1},\ast}_t\, d\wt S_t^{\textrm{cld}}+g^{h}(t,x_{1}, \widetilde{Y}^{h,x_{1}}_t,\widetilde{Z}^{h,x_{1}}_t)\,dt+dA^C_t,\medskip\\
\widetilde{Y}^{h,x_{1}}_T=0,
\end{array}
\right.
\ee
and $(\widetilde{Y}^{c,x_{2}},\widetilde{Z}^{c,x_{2}})$ is the unique solution of the BSDE
\be\label{artifical BSDE for counterparty}
\left\{
\begin{array}
[c]{ll}
d\widetilde{Y}^{c,x_{2}}_t=\widetilde{Z}^{c,x_{2},\ast}_t\, d\wt S_t^{\textrm{cld}}+g^{c}(t, x_{2}, \widetilde{Y}^{c,x_{2}}_t,\widetilde{Z}^{c,x_{2}}_t)\,dt+dA^C_t,\medskip\\
\widetilde{Y}^{c,x_{2}}_T=0,
\end{array}
\right.
\ee
Moreover, the unique replicating strategy for the hedger equals
$\phi = \big(\xi^1,\dots ,\xi^d, \psi^{l}, \psi^{b},\psi^{1,b},\dots ,\psi^{d,b}, \etab, \etal\big)$ where for every $t\in[0,T]$ and $i=1,2,\ldots,d$
\bde
\xi^i_{t}= \widetilde{Z}^{h,x_{1},i}_{t},  \quad \psi^{i,b}_t =  -(\Bibr_t)^{-1} (\xi^i_t S^i_t)^+, \quad \etab_t =-  (B^{\pCc,b}_t)^{-1}\pC_t^+, \quad \etal_t =(B^{\pCc,l}_t)^{-1} \pC_t^-,
\ede
and
\bde
\begin{array}
[c]{ll}
&\psi^{l}_t = (\Blr_t)^{-1} \Big( \widetilde{Y}^{h,x_{1}}_{t}+x_{1}\Blr_t\I_{\{x_{1}\ge0\}}+x_{1}\Bbr_t\I_{\{x_{1}\leq0\}}+ \sum_{i=1}^d ( \xi^i_t S^i_t )^- \Big)^+, \medskip\\
&\psi^{b}_t = - (\Bbr_t)^{-1} \Big( \widetilde{Y}^{h,x_{1}}_{t}+x_{1}\Blr_t\I_{\{x_{1}\ge0\}}+x_{1}\Bbr_t\I_{\{x_{1}\leq0\}}+ \sum_{i=1}^d ( \xi^i_t S^i_t )^- \Big)^-.
\end{array}
\ede
The unique replicating strategy for the counterparty equals
$\phi = \big(\xi^1,\dots ,\xi^d, \psi^{l}, \psi^{b},\psi^{1,b},\dots ,\psi^{d,b}, \etab, \etal\big)$ where for every $t\in[0,T]$ and $i=1,2,\ldots,d$
\bde
\xi^i_{t}=-\widetilde{Z}^{c,x_{2},i}_{t},  \quad \psi^{i,b}_t =  -(\Bibr_t)^{-1} (\xi^i_t S^i_t)^+, \quad \etab_t =-  (B^{\pCc,b}_t)^{-1}\pC_t^-, \quad \etal_t =(B^{\pCc,l}_t)^{-1} \pC_t^+.
\ede
and
\bde
\begin{array}
[c]{ll}
&\psi^{l}_t = (\Blr_t)^{-1} \Big(-\widetilde{Y}^{c,x_{2}}_{t}+x_{2}\Blr_t\I_{\{x_{2}\ge0\}}+x_{2}\Bbr_t\I_{\{x_{2}\leq0\}}+ \sum_{i=1}^d ( \xi^i_t S^i_t )^- \Big)^+, \medskip\\
&\psi^{b}_t = - (\Bbr_t)^{-1} \Big( -\widetilde{Y}^{c,x_{2}}_{t}+x_{2}\Blr_t\I_{\{x_{2}\ge0\}}+x_{2}\Bbr_t\I_{\{x_{2}\leq0\}}+ \sum_{i=1}^d ( \xi^i_t S^i_t )^- \Big)^-.
\end{array}
\ede
\ep

One can check that $g^{h}(t,x,0,0)=g^{c}(t,x,0,0)=0$ for all  $x\in\mathbb{R}$.
Consider any contract $(A,C)$ admissible under $\PTb$. If, in addition, $A^{C}$ is a decreasing process, then for any $x_{1}, x_{2}\in\mathbb{R}$, $\widetilde{Y}^{h,x_{1}}\ge0$ and $\widetilde{Y}^{c,x_{2}}\ge0$, where  $(\widetilde{Y}^{h,x_{1}},\widetilde{Z}^{h,x_{1}})$ is the unique solution of BSDE (\ref{artifical BSDE for hedger}) and $(\widetilde{Y}^{c,x_{2}},\widetilde{Z}^{c,x_{2}})$ is the unique solution of BSDE
(\ref{artifical BSDE for counterparty}). Consequently, $P^{h}(x_{1},A,C) \ge -C$ and $P^{c}(x_{2},-A,-C) \ge -C$.
If the process $A^{C}$ is increasing, then for any $x_{1}, x_{2}\in\mathbb{R}$ we have $\widetilde{Y}^{h,x_{1}}\leq0$ and $\widetilde{Y}^{c,x_{2}}\leq0$, so that $P^{h}(x_{1},A,C) \leq -C$ and $P^{c}(x_{2},-A,-C) \leq -C$.

\bex
In Example \ref{European call option}, we considered a contract $(A,C)$ with $A_t = p \, \I_{[0,T]}(t) + X \I_{[T]}(t)$  and $C=0$. Let us first assume that $X\leq0$; for instance, for a European call option $X = - (S^i_T-K)^+$ and for a European put option $X = - (K-S^i_T)^+$. Then, obviously, the process $A^{C}-A_0 =A - A_0$ is decreasing. Then, for any $x\in\mathbb{R}$, both $P^{h}_t (x,A,C)$ and $P^{c}_t (x,-A,-C)$ are positive, meaning that the hedger is the seller and the counterparty is the buyer.
Similarly, if $X\ge0$, for instance, for a European call option $X = (S^i_T-K)^+$  and for European put option $X = (K-S^i_T)^+$,
then, for any $x\in\mathbb{R}$, both $P^{h}_t (x,A,C)$ and $P^{c}_t (x,-A,-C)$ are negative, meaning that the counterparty
is the seller and the hedger is the buyer. Needless to say that such properties of unilateral options prices were expected.
\eex

\subsubsection{General Contracts}       \label{sect5.3.1}

Since $\wt S^{i,\textrm{cld}},\, i=1,2, \dots ,d$ are $(\PTb , \gg)$-local martingales under Assumption \ref{assumption for artifical cumulative dividend price}, we can apply the comparison theorem to BSDEs (\ref{artifical BSDE for hedger}) and (\ref{artifical BSDE for counterparty}) in order to establish the following proposition (for the proof, see Section \ref{sect7}).

\bp \label{inequality proposition for positive negative initial wealth}
Let either Assumption \ref{assumption for artifical cumulative dividend price} or Assumption \ref{changed assumption for artifical cumulative dividend price} be valid. Assume that $x_{1}\ge0,\, x_{2}\leq0$ and $r^b \leq r^{i,b}$ for $i=1,2,\ldots,d$.
Then the following statement are valid. \hfill \break
(i) If $x_{1}x_{2}=0$, then for an arbitrary contract $(A,C)$ admissible under $\PTb$ and all $t\in[0,T]$,
\be \label{eqnew2x}
P^{c}_t (x_{2},-A,-C)\leq P^{h}_t (x_{1},A,C),  \quad \PTb-\aass ,
\ee
so that the range of fair bilateral prices ${\cal R}^f_t (x_1,x_2)$ is non-empty almost surely. \hfill \break
(ii) Let $r^{l}$ and $r^{b}$ be deterministic and satisfy $r^{l}_{t}<r^{b}_{t}$ for all $t\in[0,T]$.
Then inequality \eqref{eqnew2x} holds for all contracts $(A,C)$ admissible under $\PTb$ and all $t\in[0,T]$
if and only if $x_{1}x_{2}=0$.
\ep

Notice that if $x_{1}x_{2}=0$ then, from Propositions \ref{inequality proposition for both positive initial wealth} and \ref{inequality proposition for both negative initial wealth}, we know that the desired inequality holds under respective assumptions. However, in the current proposition, we are working under Assumption \ref{assumption for artifical cumulative dividend price}, so that it is not clear whether the pricing inequality still holds.

Finally, for the case $x_{1}\leq0,\, x_{2}\ge0$, one can show show that the following
result is valid. The proof of Proposition \ref{inequality proposition for negative positive initial wealth}
is similar to that of Proposition \ref{inequality proposition for positive negative initial wealth}
and thus it is omitted.

\bp \label{inequality proposition for negative positive initial wealth}
Let either Assumption \ref{assumption for artifical cumulative dividend price} or
Assumption \ref{changed assumption for artifical cumulative dividend price} be valid.
Assume that $x_{1}\leq0,\, x_{2}\ge0$ and $r^b \leq r^{i,b}$ for $i=1,2,\ldots,d$. Then the following statements
are valid.  \hfill \break
(i) If $x_{1}x_{2}=0$, then for an arbitrary contract $(A,C)$ admissible
under $\PTb$ and all $t\in[0,T]$
\be \label{eqnne1}
P^{c}_t (x_{2},-A,-C)\leq P^{h}_t (x_{1},A,C),  \quad \PTb-\aass ,
\ee
so that the range of fair bilateral prices ${\cal R}^f_t (x_1,x_2)$ is non-empty almost surely. \hfill \break
(ii) Let $r^{l}$ and $r^{b}$ be deterministic and satisfy $r^{l}_{t}<r^{b}_{t}$ for all $t\in[0,T]$.
Then  inequality \eqref{eqnne1} holds for all contracts $(A,C)$ admissible under $\PTb$ and all $t\in[0,T]$
if and only if $x_{1}x_{2}=0$.
\ep

\brem \label{remark for the proof under different assumptions for cumulative dividend price}
Under the assumptions of Proposition \ref{inequality proposition for positive negative initial wealth}, one can prove that Propositions \ref{inequality proposition for both positive initial wealth} and \ref{inequality proposition for both negative initial wealth} hold under $\PTb$, that is, if $x_{1}x_{2}\ge0$, then for any $t\in[0,T]$,
\be \label{desew}
P^{c}_t (x_{2},-A,-C)\leq P^{h}_t (x_{1},A,C),  \quad \PTb-\aass
\ee
Indeed, using similar arguments as in the proof of Proposition \ref{inequality proposition for positive negative initial wealth},
one can show that
\bde
g^{h}(t,x_{1},y,z)-g^{c}(t,x_{2},y,z)
\leq \sumik_{i=1}^{d} |z^{i}S_{t}^{i}| \big( (r^{l}_{t}-r^{i,b}_{t})\I_{\{x_{1}\ge0,x_{2}\ge0\}}+
(r^{b}_{t}-r^{i,b}_{t})\I_{\{x_{1}\leq0,x_{2}\leq0\}} \big)
\leq 0.
\ede
Consequently, using Proposition \ref{general pricing proposition} and the comparison theorem for BSDEs, we obtain the desired
inequality \eqref{desew}.
\erem

\subsubsection{Contracts with Monotone Cash Flows}   \label{sect5.3.2}

If $x_{1}x_{2}< 0$ then, from the proof of Proposition \ref{inequality proposition for positive negative initial wealth}, we know that for some contracts $(A,C)$ we have $P^{c}_{\wh{t}} (x_{2},-A,-C)\ge P^{h}_{\wh{t}} (x_{1},A,C)$ for some $\wh{t}\in[0,T]$. The next theorem show that, for some special classes of contracts $(A,C)$, the inequality  $P^{c}_t (x_{2},-A,-C)\leq P^{h}_t (x_{1},A,C)$ is satisfied for all $t\in[0,T]$.

\begin{theorem} \label{special contract pricing}
Let either Assumption \ref{assumption for artifical cumulative dividend price} or
Assumption \ref{changed assumption for artifical cumulative dividend price} be valid.
If $x_{1}\ge0,\, x_{2}\leq0$, then for an arbitrary contract $(A,C)$ admissible under $\PTb$ and such that the process $A^{C}$ is decreasing on $(0,T]$ we have, for every $t\in[0,T]$,
\bde
P^{c}_t (x_{2},-A,-C)\leq P^{h}_t (x_{1},A,C),  \quad \PTb-\aass ,
\ede
so that the range of fair bilateral prices ${\cal R}^f_t (x_1,x_2)$ is non-empty almost surely.
\end{theorem}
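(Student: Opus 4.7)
My plan is to reduce the desired price inequality to a BSDE comparison and then solve that comparison by a linearization argument whose residual driver is non-positive precisely because $A^{C}$ is decreasing. By Proposition~\ref{general pricing proposition}, $P^{h}_t(x_{1},A,C) = \widetilde{Y}^{h,x_{1}}_t - C_t$ and $P^{c}_t(x_{2},-A,-C) = \widetilde{Y}^{c,x_{2}}_t - C_t$, and the two BSDEs~\eqref{artifical BSDE for hedger} and~\eqref{artifical BSDE for counterparty} share the driving martingale $\wt S^{\textrm{cld}}$, the $dA^{C}$ term, and the terminal value $0$. Setting $\delta Y := \widetilde{Y}^{h,x_{1}} - \widetilde{Y}^{c,x_{2}}$ and $\delta Z := \widetilde{Z}^{h,x_{1}} - \widetilde{Z}^{c,x_{2}}$, the $dA^{C}$ contributions cancel and
\begin{equation*}
d(\delta Y)_{t} = (\delta Z)^{\ast}_{t}\, d\wt S^{\textrm{cld}}_{t} + \bigl[g^{h}(t,x_{1},\widetilde{Y}^{h,x_{1}}_{t},\widetilde{Z}^{h,x_{1}}_{t}) - g^{c}(t,x_{2},\widetilde{Y}^{c,x_{2}}_{t},\widetilde{Z}^{c,x_{2}}_{t})\bigr]\, dt, \qquad \delta Y_{T}=0.
\end{equation*}

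I would next linearize in the counterparty's variables by inserting $\pm g^{c}(t,x_{2},\widetilde{Y}^{h,x_{1}}_{t},\widetilde{Z}^{h,x_{1}}_{t})$. Since $g^{c}(t,x_{2},\cdot,\cdot)$ is Lipschitz in $(y,z)$ (with bounded slopes under Assumption~\ref{assumption for artifical cumulative dividend price}, and with the ellipticity of $\gamma$ supplying integrability under Assumption~\ref{changed assumption for artifical cumulative dividend price}), the driver rewrites as $\alpha_{t}\delta Y_{t} + \gamma^{\ast}_{t}\delta Z_{t} + \Phi_{t}$ with bounded $\gg$-adapted $\alpha,\gamma$ and residual
\begin{equation*}
\Phi_{t} := g^{h}\bigl(t,x_{1},\widetilde{Y}^{h,x_{1}}_{t},\widetilde{Z}^{h,x_{1}}_{t}\bigr) - g^{c}\bigl(t,x_{2},\widetilde{Y}^{h,x_{1}}_{t},\widetilde{Z}^{h,x_{1}}_{t}\bigr).
\end{equation*}
The linear BSDE is then solved explicitly by a Girsanov change of measure absorbing $\gamma$ and discounting by $e^{-\int\alpha\,du}$, yielding $\delta Y_{t} = \mathbb{E}^{\mathbb{Q}}\!\left[\int_{t}^{T} e^{-\int_{t}^{s}\alpha_{u}\,du}(-\Phi_{s})\, ds\,\big|\,\G_{t}\right]$, so $\delta Y\geq 0$ reduces to $\Phi\leq 0$. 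At this step the hypothesis that $A^{C}$ is decreasing enters crucially: as noted immediately after Proposition~\ref{general pricing proposition}, it forces $\widetilde{Y}^{h,x_{1}}\geq 0$ pathwise, so it suffices to verify the pointwise gap $g^{h}(t,x_{1},y,z)\leq g^{c}(t,x_{2},y,z)$ for all $z\in\mathbb{R}^{d}$ and all $y\geq 0$.

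The main obstacle is this pointwise inequality, which cannot hold for arbitrary $y\in\mathbb{R}$ (this is exactly the content of the ``only if'' part of Proposition~\ref{inequality proposition for positive negative initial wealth}(ii)). Unwinding~\eqref{drift driver for positive and negative initial wealth} on the regime $x_{1}\geq 0,\,x_{2}\leq 0$ yields
\begin{equation*}
g^{h}-g^{c} = -x_{1}\rll_{t}\Blr_{t} - x_{2}\rbb_{t}\Bbr_{t} + g(t,y+x_{1}\Blr_{t},z) + g(t,-y+x_{2}\Bbr_{t},-z).
\end{equation*}
I would set $P := y + x_{1}\Blr_{t} + \sumik_{i=1}^{d}(z^{i}S^{i}_{t})^{-}$, which is automatically $\geq 0$, and $Q := -y + x_{2}\Bbr_{t} + \sumik_{i=1}^{d}(z^{i}S^{i}_{t})^{+}$, and split on the sign of $Q$. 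When $Q<0$, the direct expansion of the two $g$ terms collapses to $g^{h}-g^{c} = (\rll_{t}-\rbb_{t})y + \sumik_{i}(\rll_{t}-\ribb_{t})(z^{i}S^{i}_{t})^{-} + \sumik_{i}(\rbb_{t}-\ribb_{t})(z^{i}S^{i}_{t})^{+}$, every term of which is $\leq 0$ under $\rll\leq\rbb\leq\ribb$ and $y\geq 0$. The delicate case is $Q\geq 0$, where the analogous collapse gives $g^{h}-g^{c} = (\rbb_{t}-\rll_{t})|x_{2}|\Bbr_{t} - \sumik_{i}(\ribb_{t}-\rll_{t})|z^{i}S^{i}_{t}|$, whose sign is not evident; here I would exploit the defining inequality $Q\geq 0$ itself, which (using $y\geq 0$) forces $\sumik_{i}(z^{i}S^{i}_{t})^{+}\geq y+|x_{2}|\Bbr_{t}\geq |x_{2}|\Bbr_{t}$, hence $\sumik_{i}|z^{i}S^{i}_{t}|\geq |x_{2}|\Bbr_{t}$; combining with $\ribb_{t}-\rll_{t}\geq \rbb_{t}-\rll_{t}$ (built into Assumptions~\ref{assumption for artifical cumulative dividend price}/\ref{changed assumption for artifical cumulative dividend price} via $\rbb\leq\beta^{i}\leq\ribb$) gives $\sumik_{i}(\ribb_{t}-\rll_{t})|z^{i}S^{i}_{t}|\geq (\rbb_{t}-\rll_{t})|x_{2}|\Bbr_{t}$, closing $g^{h}-g^{c}\leq 0$. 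The self-limiting nature of the regime $Q\geq 0$—where the very constraint that threatens the comparison also supplies the bound that saves it—is what the decreasing hypothesis on $A^{C}$ unlocks by ensuring $y=\widetilde{Y}^{h,x_{1}}_{t}\geq 0$ throughout.
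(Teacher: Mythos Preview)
Your proof is correct and follows the same overall architecture as the paper: reduce to a BSDE driver comparison via Proposition~\ref{general pricing proposition}, use the decreasing hypothesis on $A^{C}$ (together with $g^{h}(t,x_{1},0,0)=g^{c}(t,x_{2},0,0)=0$) to get $\widetilde{Y}^{h,x_{1}}\geq 0$, and then verify the pointwise driver inequality at the hedger's solution. The tactical handling of that inequality differs. The paper first observes that, once $\widetilde{Y}^{h,l,x_{1}}\geq 0$ and $x_{1}\geq 0$, the hedger's generator simplifies to an $x_{1}$-free $\tilde{g}^{h,l}$, and then bounds $g^{c,b}$ from below uniformly via the elementary inequality $-r^{l}_{t}(\cdot)^{+}+r^{b}_{t}(\cdot)^{-}\geq -r^{b}_{t}(\cdot)$; subtracting yields
\[
\tilde{g}^{h,l}-g^{c,b}\leq (r^{l}_{t}-r^{b}_{t})\widetilde{Y}^{h,l,x_{1}}_{t}+\sumik_{i}(r^{l}_{t}-r^{i,b}_{t})(\bar z^{i}_{t})^{-}+\sumik_{i}(r^{b}_{t}-r^{i,b}_{t})(\bar z^{i}_{t})^{+}\leq 0
\]
in one stroke, with no case distinction. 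You instead keep the exact expression and split on the sign of $Q$; in the regime $Q\geq 0$ you recover the bound by feeding the constraint $Q\geq 0$ back into the estimate. Your route is slightly longer but makes explicit exactly where each hypothesis bites (in particular, the self-limiting feature you highlight in the $Q\geq 0$ case). The paper also cites the comparison theorem from \cite{NR3} directly rather than re-deriving it through your linearization/Girsanov argument, but that is purely expository.
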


\begin{proof}
From Proposition \ref{general pricing proposition} and the inequalities $x_{1}\ge0,\, x_{2}\leq0$, we know
that for any contract $(A,C)$ admissible under $\PTb$ we have $P^{h}(x_{1},A,C)=\widetilde{Y}^{h,l,x_{1}}-C$
and $P^{c}(x_{2},-A,-C)=\widetilde{Y}^{c,b,x_{2}}-C$
where  $(\widetilde{Y}^{h,l,x_{1}},\widetilde{Z}^{h,l,x_{1}})$ is the unique solution of BSDE (\ref{artifical BSDE for positive hedger}) and $(\widetilde{Y}^{c,b,x_{2}},\widetilde{Z}^{c,b,x_{2}})$ is the unique solution of BSDE (\ref{artifical BSDE for negative counterparty}). Since
\bde
g^{h,l}(t,x_{1},0,0)=g^{c,b}(t,x_{2},0,0)=0,
\ede
and $A^{C}$ is a decreasing process then, from the comparison theorem for BSDEs, we have $\widetilde{Y}^{h,l,x_{1}}\ge 0$ and $\widetilde{Y}^{c,b,x_{2}}\ge0$. Since $x_{1}\ge0$, BSDE (\ref{artifical BSDE for hedger}) becomes
\be\label{artifical BSDE 1 for positive hedger}
\left\{
\begin{array}
[c]{ll}
d\widetilde{Y}^{h,l,x_{1}}_t =\widetilde{Z}^{h,l,x_{1},\ast}_td\wt S_t^{\textrm{cld}}+\widetilde{g}^{h,l}(t, x_{1}, \widetilde{Y}^{h,l,x_{1}}_t,\widetilde{Z}^{h,l,x_{1}}_t)\,dt+dA^C_t,\medskip\\
\widetilde{Y}^{h,l,x_{1}}_T=0,
\end{array}
\right.
\ee
where the generator $\tilde{g}^{h,l}(t,x,y,z)$ does not depend on $x$ and it is given by (recall that $\zzb = z^i S^i_t$)
\bde
\tilde{g}^{h,l}(t,x,y,z):=\sumik_{i=1}^d \beta^{i}_{t} \zzb
-\sumik_{i=1}^d \ribb_t( \zzb )^++\rll_ty+ \rll_t\sumik_{i=1}^d ( \zzb )^- .
\ede
Since
\bde
\begin{array}
[c]{ll}
g^{c,b}(t,x,y,z)&=\sum_{i=1}^d \beta^{i}_{t} \zzb
+\sum_{i=1}^d \ribb_t(- \zzb )^++xr^{b}_{t}B_{t}^{b}\medskip\\
&\quad-\rll_t \Big(-y+xB_{t}^{b}+\sum_{i=1}^d (- \zzb )^-\Big)^{+}+\rbb_t \Big(-y+xB_{t}^{b}+\sum_{i=1}^d (-\zzb )^- \Big)^{-}\medskip\\
&\ge \sum_{i=1}^d \beta^{i}_{t} \zzb
+\sum_{i=1}^d \ribb_t(- \zzb )^++xr^{b}_{t}B_{t}^{b}-\rbb_t \Big(-y+xB_{t}^{b}+\sum_{i=1}^d (-\zzb )^- \Big)\medskip\\
&=\sum_{i=1}^d \beta^{i}_{t} \zzb
+\sum_{i=1}^d \ribb_t(- \zzb )^++\rbb_ty-\rbb_t\sum_{i=1}^d (- \zzb  )^-,
\end{array}
\ede
we obtain
\bde
\begin{array}
[c]{ll}
&\tilde{g}^{h,l}(t,x,\widetilde{Y}_t^{h,l,x_{1}},\widetilde{Z}_t^{h,l,x_{1}})
-g^{c,b}(t,x,\widetilde{Y}_t^{h,l,x_{1}},\widetilde{Z}_t^{h,l,x_{1}})\medskip\\
&\leq(r_t^{l}-r_t^{b})\widetilde{Y}_t^{h,l,x_{1}}-\sum_{i=1}^d \ribb_t|\widetilde{Z}_t^{h,l,x_{1},i}S^i_t|
+\rll_t \sum_{i=1}^d (\widetilde{Z}_t^{h,l,x_{1},i}S^i_t )^- +\rbb_t \sum_{i=1}^d (-\widetilde{Z}_t^{h,l,x_{1},i}S^i_t )^-\medskip\\
&=(r_t^{l}-r_t^{b})\widetilde{Y}_t^{h,l,x_{1}}
+ \sum_{i=1}^d(\rll_t-\ribb_t)(\widetilde{Z}_t^{h,l,x_{1},i}S^i_t )^-
+\sum_{i=1}^d (\rbb_t-\ribb_t)(-\widetilde{Z}_t^{h,l,x_{1},i}S^i_t )^-\leq 0.
\end{array}
\ede
The comparison theorem for BSDEs gives $\widetilde{Y}_t^{h,l,x_{1}}\ge\widetilde{Y}_t^{c,b,x_{2}}$ and thus $P^{c}_t (x_{2},-A,-C)\leq P^{h}_t (x_{1},A,C),$ $\PTb$-a.s. for every $t\in[0,T]$.
\end{proof}

\begin{theorem} \label{special contract pricing 1}
Let either Assumption \ref{assumption for artifical cumulative dividend price} or
Assumption \ref{changed assumption for artifical cumulative dividend price} be valid.
If $x_{1}\le 0,\, x_{2}\ge 0$, then for an arbitrary contract $(A,C)$ admissible under $\PTb$ and such that the process $A^{C}$ is increasing on $(0,T]$ we have, for every $t\in[0,T]$,
\bde
P^{c}_t (x_{2},-A,-C)\leq P^{h}_t (x_{1},A,C),  \quad \PTb-\aass ,
\ede
so that the range of fair bilateral prices ${\cal R}^f_t (x_1,x_2)$ is non-empty almost surely.
\end{theorem}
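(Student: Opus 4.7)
The plan is to mirror the proof of Theorem \ref{special contract pricing} with the roles of the hedger and counterparty reversed, exploiting the symmetry between the case $(x_{1}\ge 0,\,x_{2}\le 0,\,A^{C}\text{ decreasing})$ treated there and the present case. By Proposition \ref{general pricing proposition}, it suffices to show $\widetilde{Y}^{c,x_{2}}_t \le \widetilde{Y}^{h,x_{1}}_t$, $\PTb$-\aass, where $(\widetilde{Y}^{h,x_{1}},\widetilde{Z}^{h,x_{1}})$ and $(\widetilde{Y}^{c,x_{2}},\widetilde{Z}^{c,x_{2}})$ are the unique solutions of BSDEs \eqref{artifical BSDE for hedger} and \eqref{artifical BSDE for counterparty}. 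A direct computation confirms $g^{h}(t,x_{1},0,0) = g^{c}(t,x_{2},0,0) = 0$ under the sign assumptions $x_{1}\le 0,\,x_{2}\ge 0$, so with $A^{C}$ nondecreasing on $(0,T]$ and zero terminal condition, the BSDE comparison theorem gives $\widetilde{Y}^{h,x_{1}}\le 0$ and $\widetilde{Y}^{c,x_{2}}\le 0$; these play the role of the nonnegativity bounds used in Theorem \ref{special contract pricing}.

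Because $x_{2}\ge 0$ and $-y+x_{2}\Blr_{t}+\sumik_{i=1}^{d}(\zzb)^{+}\ge 0$ whenever $y\le 0$, the counterparty's generator admits the exact simplification
\bde
g^{c}(t,x_{2},y,z) = \widetilde{g}^{c,l}(t,y,z) := \sumik_{i=1}^{d}\beta^{i}_{t}\zzb + \sumik_{i=1}^{d}\ribb_{t}(\zzb)^{-} + \rll_{t}y - \rll_{t}\sumik_{i=1}^{d}(\zzb)^{+}
\ede
for every $y\le 0$. For the hedger's generator with $x_{1}\le 0$, the argument $y+x_{1}\Bbr_{t}+\sumik_{i=1}^{d}(\zzb)^{-}$ has indeterminate sign even when $y\le 0$, so only a one-sided bound is available. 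Using $\rll a^{+}-\rbb a^{-}\le \rbb a$ (which holds because $\rll\le\rbb$) and the exact cancellation $-x_{1}\rbb_{t}\Bbr_{t} + \rbb_{t}x_{1}\Bbr_{t}=0$, I obtain the uniform upper bound
\bde
g^{h}(t,x_{1},y,z) \le \widetilde{g}^{h,b}(t,y,z) := \sumik_{i=1}^{d}\beta^{i}_{t}\zzb - \sumik_{i=1}^{d}\ribb_{t}(\zzb)^{+} + \rbb_{t}y + \rbb_{t}\sumik_{i=1}^{d}(\zzb)^{-}.
\ede

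A direct calculation then gives, for every $y\le 0$,
\bde
\widetilde{g}^{h,b}(t,y,z) - \widetilde{g}^{c,l}(t,y,z) = (\rbb_{t}-\rll_{t})y + \sumik_{i=1}^{d}(\rll_{t}-\ribb_{t})(\zzb)^{+} + \sumik_{i=1}^{d}(\rbb_{t}-\ribb_{t})(\zzb)^{-} \le 0,
\ede
since $\rll\le\rbb$ and $y\le 0$ make the first term nonpositive, while $\rll\le\rbb\le\ribb$ (the last inequality being the standing hypothesis of the theorem) handles the remaining two. Evaluating the resulting chain $g^{h}(t,x_{1},y,z)\le \widetilde{g}^{h,b}(t,y,z)\le \widetilde{g}^{c,l}(t,y,z) = g^{c}(t,x_{2},y,z)$ at $(y,z) = (\widetilde{Y}^{h,x_{1}}_{t},\widetilde{Z}^{h,x_{1}}_{t})$, where $\widetilde{Y}^{h,x_{1}}_{t}\le 0$ by the first step, the one-sided comparison theorem for BSDEs driven by continuous martingales from \cite{NR3} yields $\widetilde{Y}^{h,x_{1}}\ge \widetilde{Y}^{c,x_{2}}$, which is equivalent to the desired inequality on unilateral prices.

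The main obstacle I anticipate is that, unlike the counterparty's side where the generator simplifies exactly on $\{y\le 0\}$, the hedger's generator $g^{h}$ must be handled via the inequality $\rll\le\rbb$ rather than by an exact identity; this is, however, precisely the dual of the one-sided step carried out on $g^{c,b}$ in the proof of Theorem \ref{special contract pricing}, and the three rate inequalities $\rll\le\rbb$, $\rll\le\ribb$, $\rbb\le\ribb$ together ensure that the chain closes in the correct direction.
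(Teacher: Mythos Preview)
Your proposal is correct and follows essentially the same route as the paper's proof: both first use comparison with the zero solution (via $g^{h}(t,x_{1},0,0)=g^{c}(t,x_{2},0,0)=0$ and $A^{C}$ increasing) to obtain $\widetilde{Y}^{h,x_{1}}\le 0$ and $\widetilde{Y}^{c,x_{2}}\le 0$, then simplify the counterparty's generator exactly on $\{y\le 0\}$ (using $x_{2}\ge 0$), bound the hedger's generator from above via $\rll a^{+}-\rbb a^{-}\le \rbb a$, and conclude by the one-sided comparison theorem after checking that the difference is nonpositive thanks to $\rll\le\rbb\le\ribb$. The only cosmetic difference is that the paper works directly with the specialized generators $g^{h,b}$ and $g^{c,l}$ (which are your $g^{h}$ and $g^{c}$ once the sign of $x_{1},x_{2}$ is fixed), whereas you introduce the intermediate auxiliary $\widetilde{g}^{h,b}$ explicitly; the computations and the logical structure are identical.
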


\begin{proof}
From Proposition \ref{general pricing proposition} and $x_{1}\leq0,\, x_{2}\ge0$, we know that
$P^{h} (x_{1},A,C)=\widetilde{Y}^{h,b,x_{1}}-C$ and $P^{c} (x_{2},-A,-C)=\widetilde{Y}^{c,l,x_{2}}-C$
where  $(\widetilde{Y}^{h,b,x_{1}},\widetilde{Z}^{h,b,x_{1}})$ is the unique solution of the following BSDE
\be\label{artifical BSDE for negative hedger}
\left\{
\begin{array}
[c]{ll}
d\widetilde{Y}^{h,b,x_{1}}_t =\widetilde{Z}^{h,b,x_{1},\ast}_td\wt S_t^{\textrm{cld}}+g^{h,b}(t, x_{1}, \widetilde{Y}^{h,b,x_{1}}_t,\widetilde{Z}^{h,b,x_{1}}_t)\,dt+dA^C_t,\medskip\\
\widetilde{Y}^{h,b,x_{1}}_T=0,
\end{array}
\right.
\ee
where
\bde
g^{h,b}(t,x,y,z):=\sumik_{i=1}^dz^{i}_t\beta^{i}_{t}S_{t}^{i}
-xr^{b}_{t}B_{t}^{b}+g(t,y+xB_{t}^{b},z).
\ede
and $(\widetilde{Y}^{c,l,x_{2}},\widetilde{Z}^{c,l,x_{2}})$ is the unique solution of the following BSDE
\be\label{artifical BSDE for positive counterparty}
\left\{
\begin{array}
[c]{ll}
d\widetilde{Y}^{c,l,x_{2}}_t =\widetilde{Z}^{c,l,x_{2},\ast}_td\wt S_t^{\textrm{cld}}+g^{c,l}(t, x_{2}, \widetilde{Y}^{c,l,x_{2}}_t,\widetilde{Z}^{c,l,x_{2}}_t)\,dt+dA^C_t,\medskip\\
\widetilde{Y}^{c,l,x_{2}}_T=0,
\end{array}
\right.
\ee
where
\bde
g^{c,l}(t,x,y,z):=\sumik_{i=1}^dz^{i}_t\beta^{i}_{t}S_{t}^{i}
+xr^{l}_{t}B_{t}^{l}-g(t,-y+xB_{t}^{l},-z).
\ede
Since
\bde
g^{h,b}(t,x_{1},0,0)=g^{c,l}(t,x_{2},0,0)=0,
\ede
and the process $A^{C}$ is assumed to be increasing, from Theorem 3.3 in \cite{NR3}, 
we obtain $\widetilde{Y}^{h,b,x_{1}}\leq 0$ and $\widetilde{Y}^{c,l,x_{2}}\leq0$. Therefore, since $x_{2}\ge0$, we see that $g^{c,l}(t,x,y,z)$ does not depend on $x$ and
\bde
g^{c,l}(t,x,y,z)=\tilde{g}^{c,l}(t,x,y,z):=\sumik_{i=1}^d\beta^{i}_{t} \zzb
+\sumik_{i=1}^d \ribb_t(- \zzb )^++\rll_t y-\rll_t \sumik_{i=1}^d (- \zzb )^-
\ede
where, as usual, we denote $\zzb = z^i S^i_t$. Furthermore, the function $g^{h,b}(t,x,y,z)$ satisfies
\bde
\begin{array}
[c]{ll}
g^{h,b}(t,x,y,z)&=\sum_{i=1}^d \beta^{i}_{t} \zzb
-\sum_{i=1}^d \ribb_t( \zzb )^+-xr^{b}_{t}B_{t}^{b}\medskip\\
&\quad \mbox{}+\rll_t \Big(y+xB_{t}^{b}+\sum_{i=1}^d (\zzb )^-\Big)^{+}-\rbb_t \Big(y+xB_{t}^{b}+\sum_{i=1}^d (\zzb )^- \Big)^{-}\medskip\\
&\leq \sum_{i=1}^dz^{i}_t\beta^{i}_{t}S_{t}^{i}
-\sum_{i=1}^d \ribb_t( \zzb  )^+-xr^{b}_{t}B_{t}^{b}+\rbb_t \Big(y+xB_{t}^{b}+\sum_{i=1}^d (\zzb )^- \Big)\medskip\\
&=\sum_{i=1}^d \beta^{i}_{t} \zzb
-\sum_{i=1}^d \ribb_t( \zzb )^++\rbb_t y + \rbb_t \sum_{i=1}^d (- \zzb )^-
\end{array}
\ede
and thus
\bde
\begin{array}
[c]{ll}
&g^{h,b}(t,x,\widetilde{Y}_t^{h,b,x_{1}},\widetilde{Z}_t^{h,b,x_{1}})
-\widetilde{g}^{c,l}(t,x,\widetilde{Y}_t^{h,b,x_{1}},\widetilde{Z}_t^{h,b,x_{1}})\medskip\\
&\leq(r_t^{b}-r_t^{l})\widetilde{Y}_t^{h,b,x_{1}}-\sum_{i=1}^d \ribb_t|\widetilde{Z}_t^{h,b,x_{1},i}S^i_t|
+\rll_t \sum_{i=1}^d (-\widetilde{Z}_t^{h,b,x_{1},i}S^i_t )^- +\rbb_t \sum_{i=1}^d (\widetilde{Z}_t^{h,b,x_{1},i}S^i_t )^-\medskip\\
&=(r_t^{b}-r_t^{l})\widetilde{Y}_t^{h,b,x_{1}}
+ \sum_{i=1}^d(\rll_t-\ribb_t)(-\widetilde{Z}_t^{h,b,x_{1},i}S^i_t )^-
+\sum_{i=1}^d (\rbb_t-\ribb_t)(\widetilde{Z}_t^{h,b,x_{1},i}S^i_t )^- \leq 0.
\end{array}
\ede
The comparison theorem for BSDEs gives $\widetilde{Y}^{h,b,x_{1}}\ge\widetilde{Y}^{c,l,x_{2}}$ and thus $P^{c}_t (x_{2},-A,-C)\leq P^{h}_t (x_{1},A,C)$ $ \PTb$-a.s., for every $t\in[0,T]$.
\end{proof}

\brem \label{remark for specail contract pricing}
Consider a contract $(A,C)$ such that $A^{C}$ is a decreasing process on $(0,T]$. If $x_{1}\ge0$ then, from the proof
of the proposition, we see that $P^{h}(x_{1},A,C)$ does not depend on the initial wealth $x_{1}$, that is, for every $x,y\in\mathbb{R}_{+}$ we have $P^{h}(x,A,C)=P^{h}(y,A,C)$.
This follows from the equality $P^{h}(x_{1},A,C)=\widetilde{Y}^{h,l,x_{1}}-C$,
where  $(\widetilde{Y}^{h,l,x_{1}},\widetilde{Z}^{h,l,x_{1}})$ is the unique solution of BSDE (\ref{artifical BSDE 1 for positive hedger}), which is independent of $x_{1}$. Note that the above relation hinges on the condition $x_{1}\ge0$. Indeed, when $x_{1}\leq0$, then $P^{h}(x_{1},A,C)$ does not enjoy the independence property. Furthermore, for any $x_{2}\in\mathbb{R}$,
the price $P^{c}(x_{2},-A,-C)$ does not have such property. Finally, for a contract $(A,C)$ such that $A^{C}$ is an increasing process on $(0,T]$, if $x_{2}\ge0$, then $P^{c}(x_{2},-A,-C)$ does not depend on the initial wealth $x_{2}$, but $P^{h}(x_{1},A,C)$ does not have this property.

The above-mentioned property is intuitively clear from its financial interpretation. In essence, the independence of the hedger's price of his non-negative positive wealth is a consequence of the last constraint in equation (\ref{portfolio choose}), which states that the hedger cannot use his initial endowment to buy shares for the purpose of hedging. Of course, when he sells shares
to replicate an option, as is the case for the put option, then, obviously, the fact that his initial endowment is positive
is also irrelevant.
\erem

\brem\label{remark for increasing contract}
Assume that $x_{1}> 0$ and $x_{2}< 0$. We claim that if $r^{l}$ and $r^{b}$ are deterministic and satisfy $r^{l}_{t}<r^{b}_{t}$ for all $t\in[0,T]$, then we can find a date $\widehat{t}\in[0,T]$ and a contract $(A,C)$ with an increasing process $A^{C}$ such that
\bde
P^{c}_{\widehat{t}} (x_{2},-A,-C)>P^{h}_{\widehat{t}} (x_{1},A,C),  \quad \PTb-\aass
\ede
To this end, it suffices consider a contract $(A,C)$ with $C=0$ and $A_t = p \, \I_{[0,T]}(t)+\alpha\I_{[t_{0},T]}(t)$
where $t_{0}\in(0,T)$, $r_{t}\in(r^{l}_{t},r^{b}_{t})$ for every $t\in[0,T]$ and $\alpha$ satisfies
\bde
0 < \alpha\leq\min\left\{ x_{1}\Blr_{t_0}, -x_{2}\Bbr_{t_0}\right\}.
\ede
We set $x=x_{1}-\alpha (\Blr_{t_0})^{-1} \ge0$ and we define the strategy $\phi = \big(\xi^1,\dots ,\xi^d, \psi^{l}, \psi^{b},\psi^{1,b},\dots ,\psi^{d,b}, \etab, \etal\big)$ where $\xi^i=\psi^{i,b}=\psi^{b}=\etab=\etal=0$
for all $i=1,2,\ldots,d$ and
\bde
\psi^l_t = x \I_{[0,t_0)} + (\Blr_{t_0})^{-1} \big(x \Blr_{t_0}+\cac\big) \I_{[t_0,T]}.
\ede
Then we have
\bde
V_T (x, \varphi , A,C)=  x \Blr_{T} + \cac  e^{\int_{t_{0}}^{T}r^l_{u}\,du}
=\big(x_{1} - \alpha  (\Blr_{t_0})^{-1}\big) \Blr_{T}
+\cac  e^{\int_{t_{0}}^{T}r^l_{u}\,du} =x_{1}  (\Blr_{T})=V_{T}^{0}(x_{1}).
\ede
Hence the hedger's self-financing strategy $(x, \phi , A, \pC )$ replicates the contract $(A,C)$ on $[0,T]$
and, in fact, this is the unique replicating strategy.
From Definition \ref{definition of ex-dividend price}, it follows that $P^{h}_{0} (x_{1},A,C)=x-x_{1}=-\alpha (\Blr_{t_0})^{-1}$.
Let us now consider the contract from the perspective of the counterparty. For $\tilde{x}=x_{2}+\alpha (\Bbr_{t_0})^{-1} \leq 0$,
we define the strategy $\tilde{\phi} = \big(\tilde{\xi}^1,\dots ,\tilde{\xi}^d, \tilde{\psi}^{l}, \tilde{\psi}^{b},\tilde{\psi}^{1,b},\dots ,\tilde{\psi}^{d,b}, \tilde{\eta}^{b}, \tilde{\eta}^{l}\big)$
where $\tilde{\xi}^i=\tilde{\psi}^{i,b}=\tilde{\psi}^{l}=\tilde{\eta}^b=\tilde{\eta}^l=0$
for all $i=1,2,\ldots,d$ and
\bde
\tilde{\psi}^b_t = \tilde{x} \I_{[0,t_0)} +(\Bbr_{t_0})^{-1}\big(\tilde{x} \Bbr_{t_0} +\cac \big)  \I_{[t_0,T]}.
\ede
Then we have
\bde
V_T (\tilde{x}, \tilde{\varphi} , A,C)
=  \tilde{x}  \Bbr_{T} +\cac  e^{\int_{t_{0}}^{T}r^b_{u}\,du}
=x_{2} \Bbr_{T} =V_{T}^{0}(x_{2}).
\ede
Therefore, the self-financing strategy $(\tilde{x},\tilde{ \phi} , -A, -\pC )$ is the unique replicating strategy for the contract $(-A,-C)$ on $[0,T]$ and, from Definition \ref{remark for counterparty's ex-dividend price}, it follows that
$P^{c}_{0} (x_{2},-A,-C)=x_{2}-\tilde{x}=-\alpha (\Bbr_{t_0})^{-1}$. Moreover, since $r^l <r^{b}$ and $\alpha>0$, we have that
\bde
P^{h}_{0} (x_{1},A,C) = -\alpha  (\Blr_{t_0})^{-1} <-\alpha  (\Bbr_{t_0})^{-1} = P^{c}_{0} (x_{2},-A,-C).
\ede
Consequently, under the assumption that $r^{l}<r^{b}$ we have found a contract $(A,C)$ and a date
$\widehat{t}=0$ such that $P^{c}_{0} (x_{2},-A,-C)> P^{h}_{0} (x_{1},A,C)$. This means that the range of bilaterally
profitable prices ${\cal R}^p_0(x_1,x_2)$ for $(A,C)$ is non-empty.
\erem

\subsection{Monotonicity of Prices with Respect to the Initial Endowment}     \label{sect5.4}

As shown in the preceding subsection, the initial endowment plays an important role in the pricing inequality.
In the following, we examine in more details the impact of the initial endowment on the ex-dividend price. In view of Remark \ref{remark for the proof under different assumptions for cumulative dividend price}, we only need to work under Assumption \ref{assumption for artifical cumulative dividend price}.

\bp \label{monotonicity proposition}
Let either Assumption \ref{assumption for artifical cumulative dividend price} or
Assumption \ref{changed assumption for artifical cumulative dividend price} be valid and
let a contract $(A,C)$ be admissible under $\PTb$. Then the hedger's price satisfies: \hfill \break
(i) if $\bar{x}\ge x\ge0$, then
\be \label{increasing of hedger ex-dividend price for positive wealth}
P^{h}_t (\bar{x},A,C)\leq P^{h}_t (x,A,C),
\ee
(ii) if $0\ge\bar{x}\ge x$, then
\be \label{increasing of hedger ex-dividend price for negative wealth}
P^{h}_t (\bar{x},A,C)\ge P^{h}_t (x,A,C),
\ee
and the counterparty's price satisfies: \hfill \break
(i) if $\bar{x}\ge x\ge0$, then
\be \label{increasing of counterparty ex-dividend price for positive wealth}
P^{c}_t (\bar{x},-A,-C)\ge P^{c}_t (x,-A,-C),
\ee
(ii) if $0\ge\bar{x}\ge x$, then
\be \label{increasing of counterparty ex-dividend price for negative wealth}
P^{c}_t (\bar{x},-A,-C)\leq P^{c}_t (x,-A,-C).
\ee
\ep

\begin{proof}
Let us denote
\bde
g^{l,h}(x):=-x\rll_t\Blr_t+g(t, y+x\Blr_t,z), \quad g^{b,h}(x):=-x\rbb_t\Bbr_t+g(t, y+x\Bbr_t,z),
\ede
and
\bde
g^{l,c}(x):= x\rll_t\Blr_t-g(t, -y+x\Blr_t,-z), \quad g^{b,c}(x):=x\rbb_t\Bbr_t-g(t, -y+x\Bbr_t,-z),
\ede
where (see (\ref{drift driver for positive and negative initial wealth}))
\bde
g(t,y,z)=-\sumik_{i=1}^d \ribb_t(z^{i}S^i_t )^++\rll_t \Big(y+ \sumik_{i=1}^d ( z^{i}S^i_t )^- \Big)^+
- \rbb_t \Big( y+ \sumik_{i=1}^d ( z^{i}S^i_t )^- \Big)^-.
\ede
If we denote $K:=y+ \sum_{i=1}^d ( z^{i}S^i_t)^{-}$ and $\tilde{K}:=-y+ \sum_{i=1}^d (-z^{i}S^i_t)^{-}$, then
\bde
\begin{array}
[c]{ll}
g^{l,h}(x)&=-x\rll_t\Blr_t+\rll_t(x\Blr_t +K)^+- \rbb_t(x\Blr_t +K)^-\medskip\\
&=-\rll_t(x\Blr_t+K)+\rll_t(x\Blr_t +K)^+- \rbb_t(x\Blr_t +K)^-+\rll_{t}K\medskip\\
&=\rll_t(x\Blr_t +K)^-- \rbb_t(x\Blr_t +K)^-+\rll_{t}K\medskip\\\
&=(\rll_t- \rbb_t)(x\Blr_t +K)^-+\rll_{t}K
\end{array}
\ede
and
\bde
\begin{array}
[c]{ll}
g^{b,h}(x)&=-x\rbb_t\Blr_t+\rll_t(x\Bbr_t +K)^+- \rbb_t(x\Bbr_t +K)^-\medskip\\
&=-\rbb_t(x\Bbr_t+K)+\rll_t(x\Bbr_t +K)^+- \rbb_t(x\Bbr_t +K)^-+\rbb_{t}K\medskip\\
&=(\rll_t- \rbb_t)(x\Blr_t +K)^++\rbb_{t}K.
\end{array}
\ede
Similarly, \bde
g^{l,c}(x)
=(\rbb_t- \rll_t)(x\Blr_t +\tilde{K})^--\rll_{t}\tilde{K}
\ede
and
\bde
g^{b,c}(x)
=(\rbb_t- \rll_t)(x\Blr_t +K)^+-\rbb_{t}K.
\ede
Therefore, the functions $\widetilde{g}^{l,h}(x)$ and $\widetilde{g}^{b,c}(x)$ are increasing with respect to $x$, whereas
the functions $\widetilde{g}^{b,h}(x)$ and $\widetilde{g}^{l,c}(x)$ are decreasing with respect to $x$.
Consequently, from the comparison theorem for BSDEs, if $\bar{x}\ge x\ge0$, then $\widetilde{Y}^{h,l,x}\leq\widetilde{Y}^{h,l,\bar{x}}$ where $(\widetilde{Y}^{h,l,x},\widetilde{Z}^{h,l,x})$ is the unique solution of BSDE (\ref{artifical BSDE for positive hedger}). Moreover,  $\widetilde{Y}^{c,l,x}\ge\widetilde{Y}^{c,l,\bar{x}}$ where $(\widetilde{Y}^{c,l,x},\widetilde{Z}^{c,l,x})$ is the unique solution of BSDE
(\ref{artifical BSDE for positive counterparty}).  Then from Remark \ref{remark for the proof under different assumptions for cumulative dividend price}, we deduce that (\ref{increasing of hedger ex-dividend price for positive wealth}) and (\ref{increasing of counterparty ex-dividend price for positive wealth}) hold.
For $0\ge\bar{x}\ge x$ one can show, using similar arguments, that (\ref{increasing of hedger ex-dividend price for negative wealth}) and (\ref{increasing of counterparty ex-dividend price for negative wealth}) are valid.
\end{proof}

By combining Propositions \ref{inequality proposition for both negative initial wealth}--\ref{monotonicity proposition}, we
obtain the following result, which summarizes the properties of unilateral prices.

\begin{theorem} \label{main theorem}
Let either Assumption \ref{assumption for artifical cumulative dividend price} or
Assumption \ref{changed assumption for artifical cumulative dividend price} be valid.
Then for any contract $(A,C)$ admissible under $\PTb$ the following statements are valid:  \hfill \break
(i) if $\bar{x}\ge x\ge0$, then for all $t\in[0,T]$
\be\label{monotonicity ex-dividend price for positive wealth}
P^{c}_t (x,-A,-C)\leq P^{c}_t (\bar{x},-A,-C)\leq P^{h}_t (\bar{x},A,C)\leq P^{h}_t (x,A,C).
\ee
(ii) if $0\ge\bar{x}\ge x$, then for all $t\in[0,T]$
\be\label{monotonicity ex-dividend price for negative wealth}
P^{h}_t (\bar{x},A,C)\ge P^{h}_t (x,A,C)\ge P^{c}_t (x,-A,-C)\ge P^{c}_t (\bar{x},-A,-C).
\ee
Moreover, if $r^{l}$ and $r^{b}$ are deterministic and satisfy $r^{l}_{t}<r^{b}_{t}$ for all $t\in[0,T]$, then for $\bar{x}>0>x$, there exists $(\wh{t},A,C)$ such that
\bde
P^{c}_{\wh{t}} (x,-A,-C)>P^{h}_{\wh{t}} (\bar{x},A,C)\ge P^{c}_{\wh{t}} (\bar{x},-A,-C)
\ede
and there also exists $(\wh{t},A,C)$ such that
\bde
P^{h}_{\wh{t}} (\bar{x},A,C)\ge P^{c}_{\wh{t}} (\bar{x},-A,-C)> P^{h}_{\wh{t}} (x,A,C).
\ede
\end{theorem}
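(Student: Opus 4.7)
The plan is to glue together three ingredients already in place: the endowment-monotonicity of unilateral prices (Proposition \ref{monotonicity proposition}), the equal-sign pricing inequalities (Propositions \ref{inequality proposition for both positive initial wealth} and \ref{inequality proposition for both negative initial wealth}, transported to the present set-up by Remark \ref{remark for the proof under different assumptions for cumulative dividend price}), and the explicit contract constructed in Remark \ref{remark for increasing contract}. Parts (i) and (ii) will essentially be a bookkeeping exercise, while the moreover-displays will require two separate witnessing contracts.

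For the chain (i) with $\bar x\ge x\ge 0$, I would read off the outer inequalities $P^{h}_{t}(\bar x,A,C)\le P^{h}_{t}(x,A,C)$ and $P^{c}_{t}(x,-A,-C)\le P^{c}_{t}(\bar x,-A,-C)$ directly from Proposition \ref{monotonicity proposition} in the non-negative regime, and I would obtain the middle inequality $P^{c}_{t}(\bar x,-A,-C)\le P^{h}_{t}(\bar x,A,C)$ by applying Proposition \ref{inequality proposition for both positive initial wealth} to the equal endowments $x_{1}=x_{2}=\bar x$. The chain (ii) is entirely symmetric: I would use the non-positive regime of Proposition \ref{monotonicity proposition} together with Proposition \ref{inequality proposition for both negative initial wealth} at $x_{1}=x_{2}=\bar x\le 0$. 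The auxiliary condition $r^{b}\le r^{i,b}$ required by the latter is automatic under either Assumption \ref{assumption for artifical cumulative dividend price} or Assumption \ref{changed assumption for artifical cumulative dividend price}, since both already postulate the chain $r^{b}\le\beta^{i}\le r^{i,b}$.

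For the two moreover-displays under $\bar x>0>x$ and $r^{l}<r^{b}$ deterministic, I would produce two explicit witnessing pairs $(\hat t,A,C)$. The first display reduces, via its middle inequality $P^{h}_{\hat t}(\bar x,A,C)\ge P^{c}_{\hat t}(\bar x,-A,-C)$ (which is part (i) applied at the single endowment $\bar x$), to the strict inequality $P^{c}_{\hat t}(x,-A,-C)>P^{h}_{\hat t}(\bar x,A,C)$; this is delivered verbatim by Remark \ref{remark for increasing contract} after the identification $(x_{1},x_{2})=(\bar x,x)$, taking $\hat t=0$, $C=0$, and $A_{t}=p\,\I_{[0,T]}(t)+\alpha\,\I_{[t_{0},T]}(t)$ for any sufficiently small $\alpha>0$. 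For the second display I would run the mirror construction with $(x_{1},x_{2})=(x,\bar x)$: either rerun the argument of Remark \ref{remark for increasing contract} with the roles of hedger and counterparty interchanged (so that $A^{C}$ is now decreasing on $(0,T]$ and the positive-endowment counterparty, rather than the positive-endowment hedger, exploits the wedge $r^{b}-r^{l}$), or simply invoke the ``only if'' direction of Proposition \ref{inequality proposition for negative positive initial wealth}(ii), which asserts the existence of a contract with $P^{c}_{\hat t}(\bar x,-A,-C)>P^{h}_{\hat t}(x,A,C)$. The middle inequality there is again part (i) at endowment $\bar x$.

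The main obstacle will be essentially conceptual rather than technical: the two moreover-displays may a priori require different contracts and different times $\hat t$, which the existential phrasing of the theorem clearly allows; I would therefore be careful to state that no single contract need witness both strict inequalities at once. The only genuinely new input beyond what is already proved in the paper is the mirrored version of Remark \ref{remark for increasing contract}, and even this can be bypassed by quoting Proposition \ref{inequality proposition for negative positive initial wealth}(ii) directly.
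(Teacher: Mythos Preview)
Your proposal is correct and follows exactly the route the paper takes: the theorem is stated as a direct combination of the monotonicity proposition with the equal-sign pricing inequalities (transported to the present martingale measure via Remark \ref{remark for the proof under different assumptions for cumulative dividend price}), together with the explicit witnessing contracts from Remark \ref{remark for increasing contract} and Proposition \ref{inequality proposition for negative positive initial wealth}(ii) for the two moreover-displays.

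One small slip: in part (ii) the middle link of the chain is $P^{h}_{t}(x,A,C)\ge P^{c}_{t}(x,-A,-C)$, so the equal-sign inequality must be invoked at $x_{1}=x_{2}=x\le 0$, not at $\bar x$. Applying it at $\bar x$ only yields $P^{h}_{t}(\bar x,A,C)\ge P^{c}_{t}(\bar x,-A,-C)$, which together with the monotonicity inequalities does not recover the middle link. The fix is trivial, since Remark \ref{remark for the proof under different assumptions for cumulative dividend price} applies at any non-positive endowment.
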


\bcor \label{corollary of price bound}
Under the assumptions of Proposition \ref{monotonicity proposition}, for any contract $(A,C)$ and any date $t\in[0,T]$
\be\label{price bound}
P^{c}_t (0,-A,-C)\leq P^{c}_t (x,-A,-C)\leq P^{h}_t (x,A,C)\leq P^{h}_t (0,A,C),
\ee
so that ${\cal R}^f_t (x,x) \subset {\cal R}^f_t (0,0)$.
\ecor

The above corollary shows that an investor with either a positive or a negative initial endowment
has a potential advantage over an investor with null initial wealth to enter any contract $(A,C)$ at any time $t$.
This conclusion is plausible, since the borrowing rate is higher than the lending rate.

 Indeed, for the same strategy, when an
investor who has zero initial endowment needs to borrow money in order to hedge a contract, an investor with a positive initial endowment may use money from his initial wealth for the same purpose.  Similarly, when an investor with null initial endowment
needs to lend money in order to implement his hedging strategy, an investor with a negative initial
endowment can use instead a surplus of cash to repay his debt. These features create a comparative advantage.

Using Corollary \ref{corollary of price bound} and Proposition \ref{monotonicity proposition},
we can examine the asymptotic properties of $P^{h}_{t}(x,A,C)$ and $P^{c}_{t}(x,-A,-C)$ when the initial
endowment $x$ tends to either $\infty $ or $-\infty$.

\bp
Let the assumptions of Proposition \ref{monotonicity proposition} be valid. For any contract $(A,C)$ and any date $t\in[0,T]$,
there exist $\gg$-adapted processes, denoted  by $P^{h,A,C,+}_{t}$, $P^{h,A,C,-}_{t}$, $P^{c,-A,-C,+}_{t}$ and $P^{c,-A,-C,-}_{t}$,
such that
\bde
P^{h,A,C,+}_{t},\, P^{h,A,C,-}_{t},\, P^{c,-A,-C,+}_{t},\, P^{c,-A,-C,-}_{t} \in [P^{c}_t (0,-A,-C) ,  P^{h}_t (0,A,C)]
 = {\cal R}^f_0 (0,0)
\ede
and 
\bde
\lim\limits_{x\rightarrow +\infty}P^{h}_t (x,A,C)=P^{h,A,C,+}_{t}\ge P^{c,-A,-C,+}_{t}=\lim\limits_{x\rightarrow +\infty}P^{c}_t (x,-A,-C),
\ede
\bde
\lim\limits_{x\rightarrow -\infty}P^{h}_t (x,A,C)=P^{h,A,C,-}_{t}\ge P^{c,-A,-C,-}_{t}=\lim\limits_{x\rightarrow -\infty}P^{c}_t (x,-A,-C).
\ede
\ep

\begin{proof}
The statement easily follows  from Proposition \ref{monotonicity proposition} and Corollary \ref{corollary of price bound}.
\end{proof}

We can only have $P^{h,A,C,+}_{t}\ge P^{c,A,C,+}_{t}$ and $P^{h,A,C,-}_{t}\ge P^{c,A,C,-}_{t}$. Other comparison results between these four processes are still unclear. Indeed, if $r^{l}$ and $r^{b}$ are deterministic and such that $r^{l}_{t}<r^{b}_{t}$ for every $t\in[0,T]$, then there exists $(\wh{t},A,C)$ such that
\bde
P^{h,A,C,+}_{\wh{t}}\ge P^{c,A,C,+}_{\wh{t}}>P^{h,A,C,-}_{\wh{t}}\ge P^{c,A,C,-}_{\wh{t}},
\ede
as well as there exists $(\wh{t},A,C)$ such that
\bde
P^{h,A,C,-}_{\wh{t}}\ge P^{c,A,C,-}_{\wh{t}}>P^{h,A,C,+}_{\wh{t}}\ge P^{c,A,C,+}_{\wh{t}}.
\ede
Now, we consider a special case of a contract $(A,C)$ and $t\in[0,T]$ such that
\bde
P^{h,A,C}_{t}:=\min\left\{P^{h,A,C,+}_{t},P^{h,A,C,-}_{t}\right\}
\ge\max\left\{P^{c,-A,-C,+}_{t},P^{c,-A,-C,-}_{t}\right\} =: P^{c,-A,-C}_{t}.
\ede
Then $\big[P_t^{c,-A,-C},P_t^{h,A,C}\big]$ is the bilateral fair pricing range for all investors with identical, but otherwise
arbitrary, initial endowment, meaning that
\bde
\big[P^{c,-A,-C}_{t},P^{h,A,C}_{t}\big]=\bigcap_{x\in\mathbb{R}}[P^{c}_{t}(x,-A,-C),P^{h}_{t}(x,A,C)]
 = \bigcap_{x\in\mathbb{R}} {\cal R}^f_t (x,x).
\ede

The following stability of unilateral ex-dividend prices with respect to the initial endowment can also be
established using Proposition 3.1 in \cite{NR3}. 
For the reader's convenience, we recall two alternative versions of the Lipschitz condition, which were employed in \cite{NR3}
(see Definitions 2.1 and 3.1 in \cite{NR3}).
Let $h:\Omega \times[0,T] \times \rr \times \rr^{d} \rightarrow \rr$ be a ${\cal G}\otimes\mathcal{B}([0,T])\otimes\mathcal{B}(\rr)\otimes\mathcal{B}(\rr^d)$-measurable function such that $h(\cdot,\cdot,y,z)$ is a $\gg$-adapted process for any fixed $(y,z)\in\rr \times \rr^d$,
and let $m$ be the process introduced in either Assumption \ref{assumption for artifical cumulative dividend price}
or Assumption \ref{changed assumption for artifical cumulative dividend price}.

\bd \label{definition uniformly Lipschitz}
We say that $h$ satisfies the {\it uniform Lipschitz condition} if there exists a constant $L$ such that, for all $t\in[0,T]$ and $y_{1},y_{2}\in\mathbb{R},\, z_{1},z_{2}\in\mathbb{R}^d$,
\be  \label{uniformly Lipschitz for the driver}
|h(t,y_{1},z_{1})-h(t,y_{2},z_{2})|\leq L\left(|y_{1}-y_{2}|+\|z_{1}-z_{2}\|\right), \quad \P-\aass
\ee
We say that $h$ satisfies the {\it $m$-Lipschitz condition} if there exist two strictly positive and $\gg$-adapted processes $\rho $ and $\theta$ such that, for all $t\in[0,T]$ and $y_{1},y_{2}\in\mathbb{R},\, z_{1},z_{2}\in\mathbb{R}^d$,
\be \label{Lipschitz for the driver}
|h(t,y_{1},z_{1})-h(t,y_{2},z_{2})|\leq \rho_{t}|y_{1}-y_{2}|+\theta_{t}\| m_{t}^{\ast}(z_{1}-z_{2})\|.
\ee
\ed

\begin{theorem}\label{stability property of price}
Let either Assumption \ref{assumption for artifical cumulative dividend price} or
Assumption \ref{changed assumption for artifical cumulative dividend price} be valid.
Then for any contract $(A,C)$ admissible under $\PTb$, there exists a constant $K_{0}$ such that
\bde
\EP \left[\sup_{t\in[0,T]}|P^{h}_{t} (x_{1},A,C)-P^{h}_{t} (x_{2},A,C)|+\sup_{t\in[0,T]}|P^{c}_{t} (x_{1},-A,-C)-P^{c}_{t} (x_{2},-A,-C)|\right]\leq K_{0}|x_{1}-x_{2}|.
\ede
\end{theorem}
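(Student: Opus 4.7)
The plan is to combine the BSDE representations of $P^{h}$ and $P^{c}$ from Proposition \ref{general pricing proposition} with the stability estimate for BSDEs driven by continuous martingales, namely Proposition 3.1 of \cite{NR3}. The crucial observation is that
\[
P^{h}_t(x_1,A,C) - P^{h}_t(x_2,A,C) = \widetilde{Y}^{h,x_1}_t - \widetilde{Y}^{h,x_2}_t,\qquad
P^{c}_t(x_1,-A,-C) - P^{c}_t(x_2,-A,-C) = \widetilde{Y}^{c,x_1}_t - \widetilde{Y}^{c,x_2}_t,
\]
the $-C$ terms cancelling; both pairs solve BSDEs with identical terminal value $0$ and identical driver $A^{C}$, so the entire dependence on the initial endowment enters through the generators $g^{h}(\cdot,x,\cdot,\cdot)$ and $g^{c}(\cdot,x,\cdot,\cdot)$ defined before Proposition \ref{general pricing proposition}.

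The first step is to show that $g^{h}$ and $g^{c}$ are globally Lipschitz in $x$ uniformly in $(\omega,t,y,z)$. On $[0,\infty)$, $x\mapsto g^{h}(t,x,y,z)=\sum_{i} z^{i}\beta^{i}_t S^{i}_t - x r^l_t B^l_t + g(t,y+xB^l_t,z)$ is Lipschitz with constant bounded by $r^l_t B^l_t + \max(r^l_t,r^b_t)\,B^l_t$, since $(\cdot)^{\pm}$ are $1$-Lipschitz and hence $g(t,\cdot,z)$ is $\max(r^l,r^b)$-Lipschitz in $y$. An identical argument on $(-\infty,0]$ uses $r^b B^b$, and the two branches of $g^h$ agree at $x=0$ (both reducing to $\sum_{i} z^{i}\beta^{i}_t S^{i}_t + g(t,y,z)$), so $g^{h}(t,\cdot,y,z)$ extends to a globally Lipschitz map on $\mathbb{R}$. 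Boundedness of $r^l,r^b,B^l,B^b$ then yields a uniform constant $K$ such that
\[
|g^{h}(t,x_{1},y,z)-g^{h}(t,x_{2},y,z)|+|g^{c}(t,x_{1},y,z)-g^{c}(t,x_{2},y,z)| \leq K|x_{1}-x_{2}|.
\]

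The second step is to verify that $g^{h},g^{c}$ satisfy whichever Lipschitz-in-$(y,z)$ condition Proposition 3.1 of \cite{NR3} requires. Under Assumption \ref{assumption for artifical cumulative dividend price} the $S^{i}$ are bounded and all rates are bounded, so the uniform Lipschitz condition \eqref{uniformly Lipschitz for the driver} is immediate (the $z$-dependence only enters through $S^{i}_t z^{i}$). Under Assumption \ref{changed assumption for artifical cumulative dividend price} the ellipticity of $\gamma$ combined with $m m^{\ast}=\mathbb{S}\gamma\gamma^{\ast}\mathbb{S}$ gives $\|m_t^{\ast}(z_1-z_2)\|^{2}\ge \Lambda\sum_{i}(S^{i}_t)^{2}(z^{i}_1-z^{i}_2)^{2}$, which dominates the $z$-variation of $g^{h}$ and $g^{c}$ and thus yields the $m$-Lipschitz condition \eqref{Lipschitz for the driver} with bounded processes $\rho,\theta$.

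With both ingredients in place, applying Proposition 3.1 of \cite{NR3} to the pair $(\widetilde{Y}^{h,x_1},\widetilde{Z}^{h,x_1}),(\widetilde{Y}^{h,x_2},\widetilde{Z}^{h,x_2})$, whose terminal conditions and drivers agree, gives
\[
\EP\Bigl[\sup_{t\in[0,T]}|\widetilde{Y}^{h,x_1}_t-\widetilde{Y}^{h,x_2}_t|^{2}\Bigr]\leq K_{2}\,\EP\Bigl[\int_{0}^{T}\!\!\bigl|g^{h}(t,x_{1},\widetilde{Y}^{h,x_{2}}_t,\widetilde{Z}^{h,x_{2}}_t)-g^{h}(t,x_{2},\widetilde{Y}^{h,x_{2}}_t,\widetilde{Z}^{h,x_{2}}_t)\bigr|^{2}\,dt\Bigr]\leq K_{2}K^{2}T|x_{1}-x_{2}|^{2},
\]
and analogously for the counterparty; Jensen's inequality then yields the claimed $L^{1}$ bound (the equivalence of $\PTb$ and $\P$ being absorbed into the constant $K_0$). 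I anticipate that the main obstacle is the verification of the $(y,z)$-Lipschitz condition under Assumption \ref{changed assumption for artifical cumulative dividend price}: the $z$-Lipschitz coefficient of $g^{h},g^{c}$ naturally involves $S^{i}_t$, which is not a priori bounded in that setting, and must be absorbed into the $\theta_t\|m_t^{\ast}(\cdot)\|$ form of \eqref{Lipschitz for the driver} through the ellipticity of $\gamma$.
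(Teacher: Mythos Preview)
Your proposal is correct and follows essentially the same route as the paper: represent the price differences via $\widetilde{Y}^{h,x_1}-\widetilde{Y}^{h,x_2}$ and $\widetilde{Y}^{c,x_1}-\widetilde{Y}^{c,x_2}$, establish that $g^{h},g^{c}$ are uniformly Lipschitz in $x$ (splitting into $x\ge0$ and $x\le0$ and matching at $0$), verify the $(y,z)$-Lipschitz condition (uniform under Assumption~\ref{assumption for artifical cumulative dividend price}, $m$-Lipschitz via ellipticity under Assumption~\ref{changed assumption for artifical cumulative dividend price}), and then invoke the stability estimate from \cite{NR3}. The paper cites Proposition~3.2 of \cite{NR3} rather than~3.1, but the structure and the key computations are the same, and your identification of the $m$-Lipschitz verification as the main technical point matches the paper's treatment.
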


\begin{proof}
From Remark \ref{remark for the proof under different assumptions for cumulative dividend price}, we have
$P^{h}_t (x_{i},A,C) =\widetilde{Y}^{h,x_{i}}_t-C_t$ for every $t \in [0,T)$,
where $(\widetilde{Y}^{h,x_{i}},\widetilde{Z}^{h,x_{i}})$ is the solution of the following BSDE
\be\label{additional BSDE for hedger with positive wealth}
\left\{
\begin{array}
[c]{ll}
d\widetilde{Y}^{h,x_{i}}_t =\widetilde{Z}^{h,x_{i},\ast}_td\wt S_t^{\textrm{cld}}+g^{h}(t,x_{i},\widetilde{Y}^{h,x_{i}}_t,\widetilde{Z}^{h,x_{i},\ast}_t)\, dt +dA^C_t,\medskip\\
\widetilde{Y}^{h,x_{i}}_T=0,
\end{array}
\right.
\ee
where
\bde
g^{h}(t,x,y,z):=\sumik_{i=1}^d \beta^{i}_{t} z^{i}_t S_{t}^{i}+\big( g(t,y+x\Blr_t,z)-x\rll_t\Blr_t \big)\I_{\{x\ge0\}}
+\big( g(t, y+x\Bbr_t,z)-x\rbb_t\Bbr_t \big) \I_{\{x\leq0\}}.
\ede
It is not hard to check that if $x_{1}x_{2}\ge0$, then there exists a constant $K$, which only depends on
the bound for $\rll$ and $\rbb$, such that
\bde
|g^{h}(t,x_{1},y,z)-g^{h}(t,x_{2},y,z)|\leq K|x_{1}-x_{2}|.
\ede
Consequently, if $x_{1}x_{2}<0$, then
\bde
\begin{array}
[c]{ll}
|g^{h}(t,x_{1},y,z)-g^{h}(t,x_{2},y,z)|&\leq |g^{h}(t,x_{1},y,z)-g^{h}(t,0,y,z)|+|g^{h}(t,0,y,z)-g^{h}(t,x_{2},y,z)|\medskip\\
&\leq K|x_{1}|+K|x_{2}|=K|x_{1}-x_{2}|.
\end{array}
\ede
We conclude that there exists a constant $K$, which depends only on the bound for $\rll$ and $\rbb$, such that
\bde
|g^{h}(t,x_{1},y,z)-g^{h}(t,x_{2},y,z)|\leq K|x_{1}-x_{2}|, \text{ for all } x_{1},x_{2}\in\mathbb{R}.
\ede
Under Assumption \ref{assumption for artifical cumulative dividend price}
(resp., Assumption \ref{changed assumption for artifical cumulative dividend price}), for a fixed $x\in\mathbb{R}$, $g^{h}(t,x,y,z)$ satisfies (\ref{Lipschitz for the driver}) with $\rho=\theta= \wh{L}$, where a constant $\wh{L}$ depends on the bound for $r^{l},r^{b},r^{i,b}$ and $S^{i}$ for $i=1,2,\ldots,d$, as well as the lower bound for $|m|$ (resp., a constant $\wh{L}$ depends on the bound $r^{l},r^{b}$ and $r^{i,b}$).  Consequently, there always exists a constant $\wh{L}$, such that the driver satisfies
the Lipschitz condition (\ref{Lipschitz for the driver}) with processes $\rho=\theta= \wh{L}$.
Consequently, as in Section 3.2 in \cite{NR3},  we deduce that the spaces $\wHlamo$ and $ \wHzero$, (resp., the spaces $\widehat{L}^{2}_{\lambda}$ and $\widehat{L}^{2}_{0}$) may be identified, since the related norms are equivalent.
Moreover, one can check $\alpha^{-1}g^{h}(t,x,0,0)\in\wHzero$.

By an application of Proposition 3.2 in \cite{NR3}, there exists a constant $K_{0}$ such that
\begin{align*}
&\EP \bigg[\sup_{t\in[0,T]}|P^{h}_{t}(x_{1},A,C)-P^{h}_{t}(x_{2},A,C)|^{2}\bigg] = \EP \bigg[\sup_{t\in[0,T]}|\widetilde{Y}^{h,x_{1}}_{t}-\widetilde{Y}^{h,x_{2}}_{t}|^{2}\bigg]\\
& \leq  K_{0}\left|\alpha^{-1}g^{h}(t,x_{1},\widetilde{Y}^{h,x_{2}}_{t}-A_{t}^{C},\widetilde{Z}^{h,x_{2}}_{t})
-\alpha^{-1}g^{h}(t,x_{2},\widetilde{Y}^{h,x_{2}}_{t}-A_{t}^{C},\widetilde{Z}^{h,x_{2}}_{t})
\right|_{\wHzero}^{2}\\
&\leq K_{0}|x_{1}-x_{2}|^{2}.
\end{align*}
Similarly, one can check that the same inequality holds for the counterparty's price. \end{proof}


\subsection{Price Independence of the Initial Endowment}

We will now show that for a certain class of contracts the price is independent of the initial endowment.
It is worth noting that an analogous result does not hold in Bergman's model studied in \cite{NR4}.


\bp \lab{pro_new1}
Let $x_{1}\ge0$ and either  Assumption \ref{assumption for artifical cumulative dividend price} or Assumption  \ref{changed assumption for artifical cumulative dividend price} be valid. Consider an arbitrary  contract $(A,C)$ admissible under $\PTb$. If the process $A^{C}-A^C_0$ is decreasing, then the price $P^{h}_{t} (x_{1},A,C)$ is independent of $x_1$, so that
$P^{h}_t (x_{1},A,C)=P^{h}_t (0,A,C)$ for all $x_1 \ge0$.
\ep

\begin{proof}
Since $x_{1}\ge0$, it follows from Proposition \ref{general pricing proposition} that the hedger's price
of any contract $(A,C)$ admissible under $\PTb$ satisfies $P^{h}_{t}(x_{1},A,C)=\widetilde{Y}^{h,l,x_{1}}_{t}-C_{t}$
where  $(\widetilde{Y}^{h,l,x_{1}},\widetilde{Z}^{h,l,x_{1}})$ is the unique solution of the following BSDE
\be \label{aifical BSDE for positive hedger}
\left\{
\begin{array}
[c]{ll}
d\widetilde{Y}^{h,l,x_{1}}_t =\widetilde{Z}^{h,l,x_{1},\ast}_t\, d\wt S_t^{\textrm{cld}}+g^{h,l}(t, x_{1}, \widetilde{Y}^{h,l,x_{1}}_t,\widetilde{Z}^{h,l,x_{1}}_t)\,dt+dA^C_t,\medskip\\
\widetilde{Y}^{h,l,x_{1}}_T=0,
\end{array}
\right.
\ee
where
\bde
\begin{array}
[c]{rl}
g^{h,l}(t,x_1,y,z):=&\sumik_{i=1}^d \beta^{i}_{t} z^iS_{t}^{i}-x_1\rll_t\Blr_t-\sumik_{i=1}^d \ribb_t(z^{i}S^i_t )^+\medskip\\
&\mbox{}+\rll_t \Big(y+x_1B_t^l+\sumik_{i=1}^d ( z^{i}S^i_t )^- \Big)^+
- \rbb_t \Big( y+ x_1B_t^l+\sumik_{i=1}^d ( z^{i}S^i_t )^- \Big)^-.
\end{array}
\ede
Since $g^{h,l}(t,x_{1},0,0)=0$ and the process $A^{C}-A^C_{0}$ is decreasing, we deduce from the comparison theorem for BSDEs
(see, for instance, Theorem 3.3 in \cite{NR3} with $U^1=A^{C}-A^C_{0}$ and $U^2=0$) that $\widetilde{Y}^{h,l,x_{1}}\ge 0$. Since $x_{1}\ge0$,  BSDE \eqref{aifical BSDE for positive hedger} can thus be represented as follows
\be\label{aifical BSDE 1 for positive hedger}
\left\{
\begin{array}
[c]{ll}
d\widetilde{Y}^{h,l,x_{1}}_t =\widetilde{Z}^{h,l,x_{1},\ast}_td\wt S_t^{\textrm{cld}}+\widetilde{g}^{h,l}(t, x_{1}, \widetilde{Y}^{h,l,x_{1}}_t,\widetilde{Z}^{h,l,x_{1}}_t)\,dt+dA^C_t,\medskip\\
\widetilde{Y}^{h,l,x_{1}}_T=0,
\end{array}
\right.
\ee
where the generator $\tilde{g}^{h,l}(t,x_1,y,z)$ is independent of $x_1$ and equals (recall that $\zzb = z^i S^i_t$)
\bde
\tilde{g}^{h,l}(t,x,y,z):=\sumik_{i=1}^d \beta^{i}_{t} \zzb
-\sumik_{i=1}^d \ribb_t( \zzb )^++\rll_ty+ \rll_t\sumik_{i=1}^d ( \zzb )^- .
\ede
 Obviously, the unique solution to BSDE (\ref{aifical BSDE 1 for positive hedger}) is independent of $x_1$ and thus the price $P^{h}_{t}(x_{1},A,C)=\widetilde{Y}^{h,l,x_{1}}_{t}-C_{t}$ enjoys the same property.
\end{proof}

%
%

\subsection{Positive Homogeneity of the Hedger's Price}

We consider once again the hedger's price and we show that it is positively homogeneous with respect to
the size of the contract and the non-negative initial endowment. Observe that this property is no longer true if only
the size of the contract (but not the initial endowment) is scaled by a non-negative number $\lambda $ (of course, unless
the price is independent of the initial endowment, as in Proposition \ref{pro_new1}).

\bp \lab{pro_new2}
Let $x_{1}\ge0$ and either  Assumption \ref{assumption for artifical cumulative dividend price} or Assumption  \ref{changed assumption for artifical cumulative dividend price} be valid. Consider an arbitrary contract $(A,C)$ admissible under $\PTb$. If the process $C\in\widehat{\mathcal{H}}_0^2$, then for all $\lambda\in\mathbb{R}_+$
\be\label{homogeneous 1}
P^{h}_t (\lambda x_{1},\lambda A,\lambda C)=\lambda P^{h}_t (x_1,A,C).
\ee
\ep

\begin{proof}
It is obvious that (\ref{homogeneous 1}) holds for $\lambda=0$. Suppose that $\lambda > 0$.
Once again, from Proposition \ref{general pricing proposition},  we know that $P^{h}(x_{1},A,C)=\widetilde{Y}^{h,l,x_{1}}-C$ where  $(\widetilde{Y}^{h,l,x_{1}},\widetilde{Z}^{h,l,x_{1}})$ is the unique solution to \eqref{aifical BSDE for positive hedger}. Moreover, $P^{h}(\lambda x_{1},\lambda A,\lambda C)=\widetilde{Y}^{h,l, \lambda x_{1}}-\lambda C$ where  $(\widetilde{Y}^{h,l,\lambda x_{1}},\widetilde{Z}^{h,l,\lambda x_{1}})$ is the unique solution of the following BSDE
\bde
\left\{
\begin{array}
[c]{ll}
d\widetilde{Y}^{h,l,\lambda x_{1}}_t =\widetilde{Z}^{h,l,\lambda x_{1},\ast}_t\, d\wt S_t^{\textrm{cld}}+g^{h,l}(t, \lambda x_{1}, \widetilde{Y}^{h,l,\lambda x_{1}}_t,\widetilde{Z}^{h,l,\lambda x_{1}}_t)\,dt+\lambda \, dA^C_t,\medskip\\
\widetilde{Y}^{h,l,\lambda x_{1}}_T=0 .
\end{array}
\right.
\ede
Recall that $A^C=A+C+F^C$ where  $F^{C}_t := -\int_{0}^t  r^c_u \pC_u \, du$.  Then $P^{h}(x_{1},A,C)=Y^1$
where  $(Y^1,Z^1)$ is the unique solution of the following BSDE (since  $(A,C)$ is admissible and $C\in\widehat{\mathcal{H}}_0^2$,  the well-posedness of this BSDE is easy to check)
\bde
\left\{
\begin{array}
[c]{ll}
dY^1_t =Z^{1,\ast}_t\, d\wt S_t^{\textrm{cld}}+g^{h,l}(t, x_{1}, Y^{1}_t+C_t,Z^{1}_t)\,dt+d(A_t+F_t^C),\medskip\\
Y^{1}_T=0.
\end{array}
\right.
\ede
Similarly, $P^{h}(\lambda x_{1},\lambda A,\lambda C)=Y^{2}$ where  $(Y^{2},Z^{2})$ is the unique solution of the following BSDE
\be \label{BSDE 1}
\left\{
\begin{array}
[c]{ll}
dY^{2}_t =Z^{2,\ast}_t\, d\wt S_t^{\textrm{cld}}+g^{h,l}(t, \lambda x_{1},Y^{2}_t+\lambda C_t,Z^{2}_t)\,dt+\lambda \, d(A_t+F_t^C),\medskip\\
Y^{2}_T=0.
\end{array}
\right.
\ee
For $Y:=\lambda Y^1$ and $Z=\lambda Z^1$, we have
\be \label{BSDE 2}
\left\{
\begin{array}
[c]{ll}
dY_t =Z^{\ast}_t\, d\wt S_t^{\textrm{cld}}+\lambda g^{h,l}(t, x_{1}, \lambda^{-1}Y_t+C_t,\lambda^{-1}Z_t)\,dt+\lambda \, d(A_t+F_t^C),\medskip\\
Y_T=0.
\end{array}
\right.
\ee
Hence to complete the proof, it suffices to observe that the equality
\[
\lambda g^{h,l}(t, x_{1}, \lambda^{-1}y+C_t,\lambda^{-1}z)=g^{h,l}(t, \lambda x_{1},y+\lambda C_t,z)
\]
is satisfied for all $\lambda >0$.
\end{proof}
%
%
%

\subsection{European Claims and Related Pricing PDEs}   \label{sect5.5}

For simplicity of presentation,  we assume that $d=1$, so that there is only one risky asset $S=S^1$. It is clear,
however, that the results obtained in this subsection can be easily extended to a multi-asset case. Moreover, we postulate that the interest rates $r^{l}$ and $r^{b}$ are deterministic. We examine valuation and hedging of an uncollateralized European contingent claim starting from a fixed time $t\in[0,T]$, that is, we set $C=0$. A generic path-independent claim of European style pays a single cash flow $H(S_{T})$ on the expiration date $T>0$, so that
\bde
A_t - A_0 = -H(S_{T})\I_{[T,T]}(t).
\ede
For any fixed $t <T$, the risky asset $S$ has the ex-dividend price dynamics under $\P$ given by the following expression, for $u \in [t,T]$,
\be\label{Partial netting model stock price}
dS_u = \mu(u,S_{u})\, du +\sigma(u,S_{u})\, dW_u , \quad S_t=\ssx \in \mathcal{O},
\ee
where $W$ is a one-dimensional Brownian motion and $\mathcal{O}$ is the domain of real values that are attainable by the diffusion process $S$ (usually $\mathcal{O}=\mathbb{R}_{+}$). Moreover, the  coefficients $\mu$ and $\sigma$ are such that SDE (\ref{Partial netting model stock price}) has a unique strong solution. We also assume that the volatility coefficient $\sigma$ is bounded and bounded away from zero. Finally, the dividend process equals $\pA^1_t = \int_0^t \kappa( u, S_u) \, du $.

Our first goal is to derive the hedger's pricing PDE for a path-independent European claim. We observe that
\bde
d\wt S^{\textrm{cld}}_u =dS_u + d\pA^1_u -\beta(u, S_{u})\, du=
\big( \mu(u, S_{u})+\kappa(u, S_{u})-\beta(u, S_{u}) \big) du + \sigma (u, S_{u})\, dW_u .
\ede
From the Girsanov theorem, if we denote
\bde
a_{u}:=(\sigma(u, S_{u}))^{-1}\big( \mu(u, S_{u})+\kappa(u, S_{u})-\beta(u, S_{u})\big)
\ede
and define the probability measure $\PT$ as
\bde
\frac{d\PT}{d\P}=\exp\bigg\{-\int_{t}^{T}a_{u}\, dW_{u}
-\frac{1}{2}\int_{t}^{T}|a_{u}|^{2}\, du \bigg\},
\ede
then $\PT $ is equivalent to $\P$ and the process $\widetilde{W}$ is the Brownian motion under $\PT$, where
$d\widetilde{W}_{u}:=dW_{u}+a_{u}\, du$. It is easy to see that
\bde
d\wt S^{\textrm{cld}}_{u}=\sigma (u, S_{u})\, d\widetilde{W}_u
\ede
and thus we conclude that $\wt S^{\textrm{cld}}$ is a $(\PT , \gg)$-martingale and $\langle \wt S^{\textrm{cld}}\rangle_{u}=\int_{t}^{u}|\sigma(v, S_{v})|^{2}\, dv$. Therefore, either Assumption \ref{assumption for artifical cumulative dividend price} or Assumption \ref{changed assumption for artifical cumulative dividend price} holds, provided
that we assume that the Brownian motion $\widetilde{W}$ has the PRP under $(\gg,\PT)$. Of course, the latter assumption is not restrictive in the present set-up.

Since $A$ has only a single cash flow at time $T$ and $C=0$, we deduce from Proposition \ref{general pricing proposition} that,
for any initial endowment $x_{1}\in\mathbb{R}$, the hedger's prices satisfies $P^{h}(x_{1},A,C)=\widetilde{Y}^{h,x_{1}}$,
where  $(\widetilde{Y}^{h,x_{1}},\widetilde{Z}^{h,x_{1}})$ is the unique solution of following BSDE driven by the Brownian motion $\widetilde{W}$
\be\label{Partial netting model Brownian BSDE for hedger}
\left\{
\begin{array}
[c]{ll}
d\widetilde{Y}^{h,x_{1}}_u =\widetilde{Z}^{h,x_{1}}_u\sigma (u, S_{u})\, d\widetilde{W}_u+g^{h}(u,x_{1},S_{u}, \widetilde{Y}^{h,x_{1}}_u,\widetilde{Z}^{h,x_{1}}_u)\, du ,\medskip\\
\widetilde{Y}^{h,x_{1}}_T=H(S_{T}),
\end{array}
\right.
\ee
where for $x_{1}\ge0$
\bde
g^{h}(u,x_{1},\ssy,y,z):=z\beta(u,\ssy)-x_{1}\rll_u\Blr_t-r^{1,b}_{u}(z\ssy)^{+}+\rll_u \Big(y+x_{1}\Blr_u+(z\ssy)^{-}\Big)^+- \rbb_u \Big( y+x_{1}\Blr_u+(z\ssy)^{-}\Big)^-
\ede
and for $x_{1}\leq0$
\bde
g^{h}(u,x_{1},\ssy,y,z):=z\beta(u,\ssy)-x_{1}\rbb_u\Bbr_u-r^{1,b}_{u}(z\ssy)^{+}+\rll_u \Big(y+x_{1}\Bbr_u+(z\ssy)^{-}\Big)^+- \rbb_u \Big( y+x_{1}\Bbr_u+(z\ssy)^{-}\Big)^-.
\ede
The well-posedness of BSDE \eqref{Partial netting model Brownian BSDE for hedger} is well known under mild assumptions, since we assumed that $\widetilde{W}$ has the PRP under $(\gg,\PT)$. The unique replicating strategy for the hedger equals $\phi = \big(\xi, \psi^{l}, \psi^{b},\psi^{1,b}\big)$ where $\xi_{u}= \widetilde{Z}^{h,x_{1}}_{u},\, \psi^{1,b}_{u}=-(B^{1,b}_{u})^{-1}(\xi_u S_u)^{+}$
and
\bde
\begin{array}
[c]{ll}
\psi^{l}_u &= (\Blr_u)^{-1} \Big( \widetilde{Y}^{h,x_{1}}_{u}+x_{1}\Blr_u\I_{\{x_{1}\ge0\}}+x_{1}\Bbr_u\I_{\{x_{1}\leq0\}}+(\xi_u S_u)^{-} \Big)^+, \medskip\\
\psi^{b}_u &= - (\Bbr_u)^{-1} \Big( \widetilde{Y}^{h,x_{1}}_{u}+x_{1}\Blr_u\I_{\{x_{1}\ge0\}}+x_{1}\Bbr_u\I_{\{x_{1}\leq0\}}+(\xi_u S_u)^{-} \Big)^-.
\end{array}
\ede

In the next step, we fix a date $t \in [0,T)$ and we assume that $\Sst_t=\ssx\in \mathcal{O}$. Note that under $\PT$, for
all $u \in [t,T]$,
\bde
d\Sst_u = ( \beta(u, \Sst_{u})-\kappa(u,\Sst_u))\, du+\sigma(u,\Sst_{u})\, d\widetilde{W}_u .
\ede
It is clear that the solution $(\widetilde{Y}^{h,x_{1}},\widetilde{Z}^{h,x_{1}})$ will now depend on the initial value $\ssx$ at time $t$ of the stock price; to emphasize this feature, we write $(\widetilde{Y}^{h,x_{1},\ssx},\widetilde{Z}^{h,x_{1},\ssx})$. Furthermore, if we set $(Y^{h,x_{1},\ssx}_{u},Z^{h,x_{1},\ssx}_{u}):=(\widetilde{Y}^{h,x_{1},\ssx}_{u},\widetilde{Z}^{h,x_{1},\ssx}_{u}\sigma(u,\Sst_{u}))$ and
\bde
\overline{g}^{h}(u,x_{1},\ssx, y,z)=g^{h}(u,x_{1},\ssx, y,z\sigma^{-1}(u, \ssx)),
\ede
then BSDE (\ref{Partial netting model Brownian BSDE for hedger}) yields
\be\label{Partial netting model Brownian BSDE 2 for hedger}
\left\{
\begin{array}
[c]{ll}
dY^{h,x_{1},\ssx}_u =Z^{h,x_{1},\ssx}_u\, d\widetilde{W}_u+\overline{g}^{h}(u,x_{1},\Sst_{u}, Y^{h,x_{1},\ssx}_u,Z^{h,x_{1},\ssx}_u)\, du ,\medskip\\
Y^{h,x_{1},\ssx}_T=H(\Sst_{T}).
\end{array}
\right.
\ee

Using the non-linear Feynman-Kac formula (see \cite{PP-1992,P-1991}), we argue that under suitable smoothness conditions imposed on the coefficients $\mu,\sigma,\kappa$ and $\beta$, the {\it hedger's pricing function} $v(t,\ssx):=Y_{t}^{h,x_{1},\ssx}$ belongs to the class $C^{1,2}([0,T]\times\mathcal{O})$ and solves the following {\it pricing PDE}
\be\label{Partial netting model hedger PDE 1}
\left\{
\begin{array}
[c]{ll}
\frac{\partial v}{\partial t}(t,\ssx)+\mathcal{L}v(t,\ssx)=\overline{g}^{h}\big(t,x_{1},\ssx,v(t,\ssx),\sigma(t,\ssx)\frac{\partial v}{\partial \ssx}\big),\quad (t,\ssx)\in[0,T]\times\mathcal{O},\medskip\\
v(T,\ssx)=H(\ssx), \quad \ssx\in\mathcal{O},
\end{array}
\right.
\ee
where the differential operator $\mathcal{L}$ is given by the following expression
\bde
\mathcal{L}:=\frac{1}{2}\sigma^{2}(t,\ssx)\frac{\partial^{2} }{\partial \ssx^{2}}+(\beta-\kappa)(t,\ssx)\frac{\partial}{\partial \ssx}.
\ede
In view of the definition of $\overline{g}^{h}$, it is clear that PDE (\ref{Partial netting model hedger PDE 1}) is in turn equivalent to
\be\label{Partial netting model hedger PDE}
\left\{
\begin{array}
[c]{ll}
\frac{\partial v}{\partial t}(t,\ssx)+\frac{1}{2}\sigma^{2}(t,\ssx)\frac{\partial^{2}v}{\partial \ssx^{2}}(t,\ssx)=
\kappa(t,\ssx) \frac{\partial v}{\partial \ssx}(t,\ssx)-x_{1}\rll_t\Blr_t\I_{\{x_{1}\ge0\}}-x_{1}\rbb_t\Bbr_t\I_{\{x_{1}\leq0\}}-r^{1,b}_{t}\big(\ssx \frac{\partial v}{\partial \ssx}(t,\ssx)\big)^{+}\medskip\\
\quad\mbox{} +\rll_t \Big(v(t,\ssx)+x_{1}\Blr_t\I_{\{x_{1}\ge0\}}+x_{1}\Bbr_t\I_{\{x_{1}\leq0\}}+\big(\ssx \frac{\partial v}{\partial \ssx}(t,\ssx)\big)^{-}\Big)^+\medskip\\
\quad\mbox{}- \rbb_t \Big(v(t,\ssx)+x_{1}\Blr_t\I_{\{x_{1}\ge0\}}+x_{1}\Bbr_t\I_{\{x_{1}\leq0\}}+\big(\ssx \frac{\partial v}{\partial \ssx}(t,\ssx)\big)^{-}\Big)^-, \quad (t,\ssx)\in[0,T]\times\mathcal{O},\medskip\\
v(T,\ssx)=H(\ssx), \quad \ssx\in\mathcal{O}.
\end{array}
\right.
\ee

\brem
It is worth stressing that the coefficient $\beta$ does not appear in the pricing PDE (\ref{Partial netting model hedger PDE}). Therefore, in order to derive the PDE, $\beta$ can be chosen arbitrarily, except for constraint ensuring that the model is arbitrage-free (see Proposition \ref{remark for non-arbitrage model}). Consequently, without changing the probability measure (i.e., by choosing $\beta$ such $a_t =0$ for all $t \in[0,T]$),  we can still derive PDE (\ref{Partial netting model hedger PDE}).
\erem

Conversely, if $v\in C^{1,2}([0,T]\times\mathcal{O})$ solves PDE (\ref{Partial netting model hedger PDE}), then the pair $(v(u,S_{u}),\sigma(u,S_{u})\frac{\partial v}{\partial \ssx}(u,S_{u}))$ solves BSDE (\ref{Partial netting model Brownian BSDE 2 for hedger}) on $u\in[t,T]$ where, for brevity, we write $S = \Sst$. From the above discussions,  $(v(u,S_{u}),\frac{\partial v}{\partial \ssx}(u,S_{u}))$ is also a solution to BSDE (\ref{Partial netting model Brownian BSDE for hedger}) on $u\in[t,T]$ for an arbitrary initial stock price $S_t=s$. Consequently, the unique replicating strategy for the hedger equals $\phi = \big(\xi, \psi^{l}, \psi^{b},\psi^{1,b}\big)$ where, for all $u\in[t,T]$,
\be\label{Partial netting model replicating strategy for the hedger}
\begin{array}
[c]{ll}
\xi_{u} =\frac{\partial v}{\partial \ssx}(u,S_{u}),\quad \psi^{1,b}_{t}=-(B^{1,b}_{u})^{-1}\big(S_u\frac{\partial v}{\partial \ssx}(u,S_{u})  \big)^{+},\medskip\\
\psi^{l}_u = (\Blr_u)^{-1} \Big(v(u,S_{u})+x_{1}\Blr_u\I_{\{x_{1}\ge0\}}+x_{1}\Bbr_u\I_{\{x_{1}\leq0\}}+\big(S_u\frac{\partial v}{\partial \ssx}(u,S_{u})\big)^{-} \Big)^+, \medskip\\
\psi^{b}_u = - (\Bbr_u)^{-1} \Big(v( u,S_{u})+x_{1}\Blr_u\I_{\{x_{1}\ge0\}}+x_{1}\Bbr_u\I_{\{x_{1}\leq0\}}+\big(S_u\frac{\partial v}{\partial \ssx}(u,S_{u})\big)^{-} \Big)^-.
\end{array}
\ee

Let us now focus on the  pricing PDE for the counterparty. Recall that
$P^{c}(x_{2},-A,-C)=\widetilde{Y}^{c,x_{2}}$, where  $(\widetilde{Y}^{c,x_{2}},\widetilde{Z}^{c,x_{2}})$
is the unique solution to the following BSDE
\be\label{Partial netting model Brownian BSDE for counterparty}
\left\{
\begin{array}
[c]{ll}
d\widetilde{Y}^{c,x_{2}}_u =\widetilde{Z}^{c,x_{2}}_u\sigma (u, S_{u})\, d\widetilde{W}_u
+g^{c}(u,x_{2}, S_{u}, \widetilde{Y}^{c,x_{2}}_u,\widetilde{Z}^{c,x_{2}}_u)\, du,\medskip\\
\widetilde{Y}^{c,x_{2}}_T=H( \Sst_{T}),
\end{array}
\right.
\ee
where for $x_{1}\ge0$,
\bde
g^{c}(u,x_{2},\ssy,y,z):=z\beta(u,\ssy)+x_{2}\rll_u\Blr_u+r^{1,b}_{u}(z\ssy)^{+}-\rll_u \Big(-y+x_{2}\Blr_u+(-z\ssy)^{-}\Big)^++\rbb_u \Big(-y+x_{2}\Blr_u+(-z\ssy)^{-}\Big)^-
\ede
and for $x_{1}\leq0$
\bde
g^{c}(u,x_{2},\ssy,y,z):=z\beta(u,\ssy)+x_{2}\rbb_u\Bbr_u+r^{1,b}_{u}(z\ssy)^{+}-\rll_u \Big(-y+x_{2}\Bbr_u+(-z\ssy)^{-}\Big)^++\rbb_u \Big(-y+x_{2}\Bbr_u+(-z\ssy)^{-}\Big)^-.
\ede
The unique replicating strategy for the counterparty equals $\phi = \big(\xi, \psi^{l}, \psi^{b},\psi^{1,b}\big)$
where $\xi_{u}= -\widetilde{Z}^{c,x_{2}}_{u}$, $\psi^{1,b}_{u}=-(B^{1,b}_{u})^{-1}(\xi_u S_u)^{+}$
and
\bde
\begin{array}
[c]{ll}
\psi^{l}_u = (\Blr_u)^{-1} \Big( -\widetilde{Y}^{h,x_{2}}_{u}+x_{2}\Blr_u\I_{\{x_{2}\ge0\}}+x_{2}\Bbr_u\I_{\{x_{2}\leq0\}}+(\xi_u S_u)^{-} \Big)^+, \medskip\\
\psi^{b}_u = - (\Bbr_u)^{-1} \Big( -\widetilde{Y}^{h,x_{2}}_{u}+x_{2}\Blr_u\I_{\{x_{2}\ge0\}}+x_{2}\Bbr_u\I_{\{x_{2}\leq0\}}+(\xi_u S_u)^{-} \Big)^-.
\end{array}
\ede

For a fixed $(t,s) \in [0,T) \times \mathcal{O}$, we denote $(Y^{c,x_{2},\ssx}_{u},Z^{c,x_{2},\ssx}_{u}):=(\widetilde{Y}^{c,x_{2},\ssx}_{u},\widetilde{Z}^{c,x_{2},\ssx}_{u}\sigma(u,\Sst_{u}))$ and
\bde
\overline{g}^{c}(u,x_{2},\ssx, y,z)=g^{c}(u,x_{2},\ssx, y,z\sigma^{-1}(u, \ssx)).
\ede
Then BSDE (\ref{Partial netting model Brownian BSDE for counterparty}) becomes
\be \label{Partial netting model Brownian BSDE 2 for counterparty}
\left\{
\begin{array}
[c]{ll}
dY^{c,x_{2},\ssx}_u =Z^{c,x_{2},\ssx}_u\, d\widetilde{W}_u +\overline{g}^{c}(u,x_{2}, \Sst_{u},Y^{c,x_{2},\ssx}_u ,Z^{c,x_{2},\ssx}_u)\, du ,\medskip\\
Y^{c,x_{2},\ssx}_T=H( \Sst_{T}).
\end{array}
\right.
\ee
Using the same argument as for the hedger, we deduce that the pricing function $v(t,\ssx):=Y_{t}^{c,x_{2},\ssx}$ belongs to $C^{1,2}([0,T]\times\mathcal{O})$ and solves the following PDE
\be \label{Partial netting model counterparty PDE 1}
\left\{
\begin{array}
[c]{ll}
\frac{\partial v}{\partial t}(t,\ssx)+\mathcal{L}v(t,\ssx)=\overline{g}^{c}\big(t,x_{2},\ssx,v(t,\ssx),\sigma(t,\ssx)\frac{\partial v}{\partial \ssx}\big),\quad (t,\ssx)\in[0,T]\times\mathcal{O},\medskip\\
v(T,\ssx)=H(\ssx), \quad \ssx\in\mathcal{O},
\end{array}
\right.
\ee
or, more explicitly,
\be\label{Partial netting model counterparty PDE}
\left\{
\begin{array}
[c]{ll}
\frac{\partial v}{\partial t}(t,\ssx)+\frac{1}{2}\sigma^{2}(t,\ssx)\frac{\partial^{2}v}{\partial \ssx^{2}}(t,\ssx)=\kappa(t,\ssx)\frac{\partial v}{\partial \ssx}(t,\ssx)+x_{2}\rll_t\Blr_t\I_{\{x_{2}\ge0\}}+x_{2}\rbb_t\Bbr_t\I_{\{x_{2}\leq0\}}+r^{1,b}_{t}\big( \ssx \frac{\partial v}{\partial \ssx}(t,\ssx)\big)^{+}\medskip\\
\quad\mbox{} -\rll_t \Big(-v(t,\ssx)+x_{2}\Blr_t\I_{\{x_{2}\ge0\}}+x_{2}\Bbr_t\I_{\{x_{2}\leq0\}}+\big(- \ssx \frac{\partial v}{\partial \ssx}(t,\ssx) \big)^{-}\Big)^+\medskip\\
\quad\mbox{}+\rbb_t \Big(-v(t,\ssx)+x_{2}\Blr_t\I_{\{x_{2}\ge0\}}+x_{2}\Bbr_t\I_{\{x_{2}\leq0\}}+\big(- \ssx \frac{\partial v}{\partial \ssx}(t,\ssx)\big)^{-}\Big)^-,\quad (t,\ssx)\in[0,T]\times\mathcal{O},\medskip\\
v(T,\ssx)=H(\ssx), \quad \ssx\in\mathcal{O}.
\end{array}
\right.
\ee

Conversely, if a function $v\in C^{1,2}([0,T]\times\mathcal{O})$ solves PDE (\ref{Partial netting model counterparty PDE}), then $(v(u,S_{u}),\sigma(u,S_{u})\frac{\partial v}{\partial \ssx}(u,S_{u}))$ solves BSDE (\ref{Partial netting model Brownian BSDE 2 for counterparty}) on $u\in[t,T]$ where we write $S = \Sst$. Consequently, the pair  $(v(u,S_{u}),\frac{\partial v}{\partial \ssx}(u,S_{u}))$ solves BSDE (\ref{Partial netting model Brownian BSDE for counterparty}). Consequently, the unique replicating strategy for the hedger equals $\phi = \big(\xi, \psi^{l}, \psi^{b},\psi^{1,b}\big)$ where, for every $u\in[t,T]$,
\be\label{Partial netting model replicating strategy for the counterparty}
\begin{array}
[c]{ll}
\xi_{u}=-\frac{\partial v}{\partial \ssx}(u,S_{u}),\quad \psi^{1,b}_{u}=-(B^{1,b}_{u})^{-1}(- S_u \frac{\partial v}{\partial \ssx}(u,S_{u}))^{+},\medskip\\
\psi^{l}_u = (\Blr_u)^{-1} \Big(-v(u,S_{u})+x_{2}\Blr_u\I_{\{x_{2}\ge0\}}+x_{2}\Bbr_u\I_{\{x_{2}\leq0\}}+ \big(-S_u\frac{\partial v}{\partial \ssx}(u,S_{u}) \big)^{-}\Big)^+, \medskip\\
\psi^{b}_u = - (\Bbr_u)^{-1} \Big(-v(u,S_{u})+x_{2}\Blr_u\I_{\{x_{2}\ge0\}}+x_{2}\Bbr_u\I_{\{x_{2}\leq0\}}+\big(-S_u\frac{\partial v}{\partial \ssx}(u,S_{u})\big)^{-}\Big)^-.
\end{array}
\ee

In summary, we are in a position to formulate the following proposition.

\bp \label{Partial netting model pricing using PDE}
Let $v(t,\ssx)\in  C^{1,2}([0,T]\times\mathcal{O})$ be the solution of quasi-linear PDE (\ref{Partial netting model hedger PDE}).
Then the hedger's ex-dividend price of the European contingent claim $H(S_T)$ is given by $v(t,S_{t})$ and the unique replicating strategy $\phi = \big(\xi, \psi^{l}, \psi^{b},\psi^{1,b}\big)$ for the hedger is given by (\ref{Partial netting model replicating strategy for the hedger}). Similarly, if $v(t,\ssx)\in C^{1,2}([0,T]\times\mathcal{O})$ is the solution of quasi-linear PDE (\ref{Partial netting model counterparty PDE}), then the counterparty's ex-dividend price of the European contingent claim $H(S_T)$ is given by $v(t,S_{t})$ and the unique replicating strategy $\phi = \big(\xi, \psi^{l}, \psi^{b},\psi^{1,b}\big)$ for the counterparty is given by (\ref{Partial netting model replicating strategy for the counterparty}).
\ep

If smoothness of model coefficients is not postulated then, from Theorem 4.3 in Peng \cite{PP-1992},  the function $v(t,\ssx):=Y_{t}^{h,x_{1},\ssx}$ (resp., $v(t,\ssx):=Y_{t}^{c,x_{2},\ssx}$) is known to be the unique viscosity solution of PDE (\ref{Partial netting model hedger PDE}) (resp., (\ref{Partial netting model counterparty PDE})).

We notice that PDE (\ref{Partial netting model hedger PDE}) depends on the initial endowment $x_{1}$. In the special case where $r^{l}=r^{b}=r$, equation (\ref{Partial netting model hedger PDE}) reduces to the following PDE independent of $x_{1}$
\be\label{classical Partial netting model hedger PDE}
\left\{
\begin{array}
[c]{ll}
\frac{\partial v}{\partial t}(t,\ssx)+\frac{1}{2}\sigma^{2}(t,\ssx)\frac{\partial^{2}v}{\partial \ssx^{2}}(t,\ssx)=\kappa(t,\ssx)\frac{\partial v}{\partial \ssx}(t,\ssx)-r^{1,b}_{t}\big(\ssx \frac{\partial v}{\partial \ssx}(t,\ssx)\big)^{+}\medskip\\
\qquad\qquad \mbox{} +r_t \Big(v(t,\ssx)+\big(\ssx \frac{\partial v}{\partial \ssx}(t,\ssx)\big)^{-}\Big), \quad (t,\ssx)\in[0,T]\times\mathcal{O},\medskip\\
v(T,\ssx)=H(\ssx), \quad \ssx\in\mathcal{O}.
\end{array}
\right.
\ee
Note that PDE (\ref{classical Partial netting model hedger PDE}) can characterize the price and the strategy for the European contingent claim in the case where the borrowing rate and the lending rates are equal.

If we assume, in addition,
that $r^{i,b}=r$, then PDE (\ref{classical Partial netting model hedger PDE}) becomes
\be\label{classical model hedger PDE}
\left\{
\begin{array}
[c]{ll}
\frac{\partial v}{\partial t}(t,\ssx)+\frac{1}{2}\sigma^{2}(t,\ssx)\frac{\partial^{2}v}{\partial \ssx^{2}}(t,\ssx)=
\kappa(t,\ssx)\frac{\partial v}{\partial \ssx}(t,\ssx)\medskip\\
\qquad\qquad \mbox{}+r_t \Big(v(t,\ssx)- \ssx \frac{\partial v}{\partial \ssx}(t,\ssx)\Big),\quad (t,\ssx)\in[0,T]\times\mathcal{O},\medskip\\
v(T,\ssx)=H(\ssx), \quad \ssx\in\mathcal{O}.
\end{array}
\right.
\ee
We observe that PDE (\ref{classical model hedger PDE}) is nothing else but the classic Black and Scholes PDE. We mentioned in Example \ref{European call option} that the market model partial netting does not cover the standard case of different borrowing and lending rates when $r^{i,b}=r^b > r^l$ and trading is assumed to be unrestricted. However, when the equalities $r^{i,b}=r^b=r^l$ are postulated, then the related PDEs for the European contingent claim are identical so, as expected, the prices and hedging strategies coincide as well.

Without using the BSDEs method, one can still obtain Proposition \ref{Partial netting model pricing using PDE} by applying the classical arguments, as was done, for instance, in \cite{B-1995}. Both methods essentially hinge on the same tool,
the non-linear Feynman-Kac formula. We mention that when the solution of the related PDE is not smooth, then
the BSDE approach gives a probabilistic representation for the viscosity solution of the PDE.

\brem
In the related paper \cite{NR4}, we also revisit the market model studied by Bergman \cite{B-1995} and we extend his analysis by considering a general contract $(A,C)$, rather than path-independent European claims, and investors with non-zero initial endowments. In this model, the funding accounts for risky assets are not introduced and thus the last constraint in (\ref{portfolio choose}) is relaxed. Hence the hedger can use his initial endowment to buy shares for the purpose of hedging.  Consequently, for each particular set-up, the properties of prices will be quite different, but most of them can be deduced from the general results for the auxiliary BSDEs. We also derive the pricing PDE for path-independent European claims in a Markovian framework.
\erem

\vskip 10 pt


\noindent {\bf Acknowledgement.}
The research of Tianyang Nie and Marek Rutkowski was supported under Australian Research Council's
Discovery Projects funding scheme (DP120100895).



\section{Appendix: Proofs of Propositions 
 \ref{inequality proposition for both negative initial wealth}
and \ref{inequality proposition for positive negative initial wealth}} \label{sect7}
\subsection{Proof of Proposition  \ref{inequality proposition for both negative initial wealth}}

\noindent{\it Proof of  Proposition \ref{inequality proposition for both negative initial wealth}.}$\, $
We assume that $x_{1}\leq0$ and $x_{2}\leq0$. From Propositions \ref{hedger ex-dividend price} and \ref{counterparty ex-dividend price}, we know that $P^{h}(x_{1},A,C) =  \Bbr (Y^{h,b,x_{1}} - x_{1})  - C$ where $(Y^{h,b,x_{1}}, Z^{h,b,x_{1}})$ is the unique solution of BSDE (\ref{BSDE with negative x for hedger}) and $P^{c} (x_{2},-A,-C) =-(\Bbr (Y^{c,b,x_{2}} - x_{2})+C)$ where $(Y^{c,b,x_{2}}, Z^{c,b,x_{2}})$ is the unique solution of BSDE (\ref{BSDE with negative x for counterparty}).
As in the proof of Proposition \ref{inequality proposition for both positive initial wealth}, to establish \eqref{eqq2},
it suffices to check that
$\bar{Y}^{c,b,x_{2}}\leq\bar{Y}^{h,b,x_{1}}$ where $(\bar{Y}^{h,b,x_{1}},\bar{Z}^{h,b,x_{1}})$ is
the unique solution to the following BSDE
\bde 
\left\{
\begin{array}
[c]{l}
d\bar{Y}^{h,b,x_{1}}_t = \bar{Z}^{h,b,x_{1},\ast}_t \, d \wt S^{b,{\textrm{cld}}}_t
+\wt{f}_b \big(t, \bar{Y}^{h,b,x_{1}}_t+x_{1}, \bar{Z}^{h,b,x_{1}}_t \big)\, dt +(\Bbr_t)^{-1} \, dA^C_t, \medskip\\
\bar{Y}^{h,b,x_{1}}_T=0,
\end{array}
\right.
\ede
and $(\bar{Y}^{c,b,x_{2}}, \bar{Z}^{c,b,x_{2}})$ is the unique solution to the  BSDE
\bde 
\left\{
\begin{array}
[c]{l}
d\bar{Y}^{c,b,x_{2}}_t = \bar{Z}^{c,b,x_{2},\ast}_t \, d \wt S^{b,{\textrm{cld}}}_t
-\wt{f}_b \big(t, -\bar{Y}^{c,b,x_{2}}_t+x_{2}, -\bar{Z}^{c,b,x_{2}}_t \big)\, dt +(\Bbr_t)^{-1} \, dA^C_t, \medskip\\
\bar{Y}^{c,b,x_{2}}_T=0.
\end{array}
\right.
\ede
To apply Theorem 3.3 in \cite{NR3}, we need to prove that either
\bde 
-\wt{f}_b \big(t, \bar{Y}^{h,b,x_{1}}_t+x_{1}, \bar{Z}^{h,b,x_{1}}_t \big)
\ge\wt{f}_b \big(t, -\bar{Y}^{h,b,x_{1}}_t+x_{2}, -\bar{Z}^{h,b,x_{1}}_t \big), \quad \PT^b\otimes \Leb-\aaee
\ede
or
\bde 
-\wt{f}_b \big(t, \bar{Y}^{c,b,x_{2}}_t+x_{1}, \bar{Z}^{c,b,x_{2}}_t \big)
\ge\wt{f}_b \big(t, -\bar{Y}^{c,b,x_{2}}_t+x_{2}, -\bar{Z}^{c,b,x_{2}}_t \big), \quad \PT^b\otimes \Leb-\aaee
\ede
To establish these inequalities, it suffices to use Lemma \ref{lemnew2}. We conclude that inequality \eqref{eqq2} is valid.
\endproof

\bl \label{lemnew2}
Assume that $x_{1}\leq 0$ and $x_{2} \leq 0$. Then the mapping $\wt{f}_b : \Omega \times [0,T] \times \mathbb{R}\times\mathbb{R}^{d} \to \mathbb{R}$ given by equation \eqref{drift function borrowing} satisfies
\be\label{inequality for the borrowing drivers 3}
-\wt{f}_b \big(t, y+x_{1}, z \big)
\ge \wt{f}_b \big(t, -y+x_{2}, -z \big), \ \text{ for all } (y,z)\in \mathbb{R}\times\mathbb{R}^{d}, \quad \PT^b \otimes \Leb-\aaee
\ee
\el

\proof  Recall that, for all $(y,z)\in \mathbb{R}\times\mathbb{R}^{d}$,
\bde
\wt{f}_b( t, y ,z ): = (\Bbr_t)^{-1} f_b(t,\Bbr_t y ,z ) -  \rbb_t y
\ede
where
\bde
f_b(t, y,z)  :=   \sumik_{i=1}^d \rbb_t z^i S^i_t
- \sumik_{i=1}^d \ribb_t( z^i S^i_t )^+
+   \rll_t \Big( y  + \sumik_{i=1}^d ( \zzhi )^- \Big)^+
 - \rbb_t \Big( y + \sumik_{i=1}^d ( \zzhi )^- \Big)^-  .
\ede
We now denote $\zzhi = (\Bbr_t)^{-1} z^i S^i_t$. Then
\be
\begin{array}
[c]{ll}
\delta &:=\wt{f}_b \big(t, y+x_{1}, z \big)+\wt{f}_b \big(t, -y+x_{2}, -z \big)\medskip\\
&=- \rbb_t (y+x_{1})+(\Bbr_t)^{-1} f_b(t,\Bbr_t (y+x_{1}), z)- \rbb_t (-y+x_{z})+(\Bbr_t)^{-1} f_b(t,\Bbr_t (-y+x_{2}),-z)\medskip\\
&=- \rbb_t (x_{1}+x_{2})- \sum_{i=1}^d \ribb_t|\zzhi|+\rll_t (\delta_{1}^{+}+\delta_{2}^{+})-\rbb_t (\delta_{1}^{-}+\delta_{2}^{-})  \nonumber
\end{array}
\ee
where
\be
\delta_{1}:= y+ x_{1}+\sumik_{i=1}^d (\zzhi)^{-}, \quad
\delta_{2}:=- y+ x_{2}+\sumik_{i=1}^d (-\zzhi)^{-}.\nonumber
\ee
Since $r^{l}\leq r^{b}$, we have
\bde
\begin{array}
[c]{ll}
\delta &=- \rbb_t (x_{1}+x_{2})- \sum_{i=1}^d \ribb_t|\zzhi|+\rll_t (\delta_{1}^{+}+\delta_{2}^{+})-\rbb_t (\delta_{1}^{-}+\delta_{2}^{-})  \medskip\\
&\leq - \rbb_t (x_{1}+x_{2})- \sum_{i=1}^d \ribb_t|\zzhi|+\rbb_t (\delta_{1}+\delta_{2})\medskip\\
&=- \rbb_t (x_{1}+x_{2})- \sum_{i=1}^d \ribb_t|\zzhi|+\rbb_t (x_{1}+x_{2}+\sumik_{i=1}^d |\zzhi|)\medskip\\
&=\sumik_{i=1}^d (\rbb_t-\ribb_t)|\zzhi|\leq0.
\end{array}
\ede
Thus inequality (\ref{inequality for the borrowing drivers 3}) holds.
\endproof

\subsection{Proof of Proposition \ref{inequality proposition for positive negative initial wealth}}

\noindent{\it Proof of part (i) in Proposition \ref{inequality proposition for positive negative initial wealth}.}$\, $
We first prove that if the initial endowments satisfy $x_{1}x_{2}=0$, then the inequality $P^{c}_t (x_{2},-A,-C)\leq P^{h}_t (x_{1},A,C)$ holds for any contract $(A,C)$. From Proposition \ref{general pricing proposition}, if $x_{1} \ge0,\, x_{2}\leq0$, then for any contract $(A,C)$ admissible under $\PT$ we have $P^{h}(x_{1},A,C)=\widetilde{Y}^{h,l,x_{1}}-C$ and $P^{c}(x_{2},-A,-C)=\widetilde{Y}^{c,b,x_{2}}-C$
where  $(\widetilde{Y}^{h,l,x_{1}},\widetilde{Z}^{h,l,x_{1}})$ is the unique solution of the following BSDE
\be \label{artifical BSDE for positive hedger}
\left\{
\begin{array}
[c]{ll}
d\widetilde{Y}^{h,l,x_{1}}_t =\widetilde{Z}^{h,l,x_{1},\ast}_t\, d\wt S_t^{\textrm{cld}}+g^{h,l}(t, x_{1}, \widetilde{Y}^{h,l,x_{1}}_t,\widetilde{Z}^{h,l,x_{1}}_t)\,dt+dA^C_t,\medskip\\
\widetilde{Y}^{h,l,x_{1}}_T=0,
\end{array}
\right.
\ee
and $(\widetilde{Y}^{c,b,x_{2}},\widetilde{Z}^{c,b,x_{2}})$ is the unique solution of the following BSDE
\be \label{artifical BSDE for negative counterparty}
\left\{
\begin{array}
[c]{ll}
d\widetilde{Y}^{c,b,x_{2}}_t=\widetilde{Z}^{c,b,x_{2},\ast}_t\, d\wt S_t^{\textrm{cld}}+g^{c,b}(t, x_{2}, \widetilde{Y}^{c,b,x_{2}}_t,\widetilde{Z}^{c,b,x_{2}}_t)\,dt+dA^C_t,\medskip\\
\widetilde{Y}^{c,b,x_{2}}_T=0,
\end{array}
\right.
\ee
where the generators are given by
\bde
g^{h,l}(t,x,y,z):=\sumik_{i=1}^d \beta^{i}_{t} z^iS_{t}^{i}-x\rll_t\Blr_t+g(t,y+x\Blr_t,z)
\ede
and
\bde
g^{c,b}(t,x,y,z):=\sumik_{i=1}^d \beta^{i}_{t} z^i S_{t}^{i}+x\rbb_t\Bbr_t-g(t, -y+x\Bbr_t,-z)
\ede
where in turn (see \eqref{drift driver for positive and negative initial wealth}))
\be  \label{drift driver for positive and negative initial wealthxx}
g(t,y,z)=-\sumik_{i=1}^d \ribb_t(z^{i}S^i_t )^++\rll_t \Big(y+ \sumik_{i=1}^d ( z^{i}S^i_t )^- \Big)^+
- \rbb_t \Big( y+ \sumik_{i=1}^d ( z^{i}S^i_t )^- \Big)^-.
\ee
Let us denote $\zzb = z^i S^i_t$. To apply Theorem 3.3 in \cite{NR3}, it suffices to show that
\bde
-\sumik_{i=1}^{d}\beta^{i}_{t} \zzb +x_{1}r_{t}^{l}B_{t}^{l}-g(t,y+x_{1}B_{t}^{l},z)
\ge -\sumik_{i=1}^{d}\beta^{i}_{t} \zzb -x_{2}r_{t}^{b}B_{t}^{b}+g(t,-y+x_{2}B_{t}^{b},-z),
\ede
which is equivalent to
\bde
\delta:=g(t,y+x_{1}B_{t}^{l},z)+g(t,-y+x_{2}B_{t}^{b},-z)-x_{1}r_{t}^{l}B_{t}^{l}-x_{2}r_{t}^{b}B_{t}^{b}\leq0.
\ede
In view of Lemma \ref{lemmnew3}, we conclude that the inequality $P^{c}_t (x_{2},-A,-C)\leq P^{h}_t (x_{1},A,C)$ holds
for every $t\in[0,T]$.
\endproof

\vskip 5 pt \noindent{\it Proof of part (ii) in Proposition \ref{inequality proposition for positive negative initial wealth}.}$\, $  We now assume that $r^{l}$ and $r^{b}$ are deterministic and satisfy $r^{l}_{t}<r^{b}_{t}$ for all $t\in[0,T]$. Assume that $x_{1}x_{2}\neq0$, that is, $x_{1}>0$ and $x_{2}<0$. Our goal is to find a contract $(A,C)$ and a
date $\widehat{t}\in[0,T]$  such that the inequality $P^{c}_{\widehat{t}} (x_{2},-A,-C)\leq
P^{h}_{\widehat{t}} (x_{1},A,C)$ fails to hold. To this end, we consider a contract with $C=0$ and
\bde
A_t = p \, \I_{[0,T]}(t)-\alpha\I_{[t_{0},T]}(t)+\alpha e^{\int_{t_{0}}^{T}r_{u}\,du}\I_{[T]}(t),
\ede
where $t_{0}\in(0,T)$ and the function $r$ satisfies $r_{u}\in(r^{l}_{u},r^{b}_{u})$ for all $u\in[0,T]$.
Moreover, a constant $\alpha>0$ is such that
\begin{align*}
& x_{1}\Blr_{t_0}-\alpha e^{\int_{t_{0}}^{T}(r_{u}-r^l_{u})\,du}\ge0, \quad
x_{1}+\alpha  (\Blr_{t_0})^{-1}-\alpha  (\Blr_{T})^{-1} e^{\int_{t_0}^{T}r_{u}\,du}\ge0,\\
&x_{2}\Bbr_{t_0}+\alpha e^{\int_{t_{0}}^{T}(r_{u}-r^b_{u})\,du}\leq0,\quad
x_{2}-\alpha (\Bbr_{t_0})^{-1} +\alpha  (\Bbr_{T})^{-1} e^{\int_{t_0}^{T}r_{u}\,du}\leq0,
\end{align*}
which in turn is equivalent to:  $\alpha>0$ and
\be \label{alphax}
x_{2}\kappa_{2}^{-1}\leq\alpha\leq\min\left\{ x_{1}\Blr_{t_0}e^{-\int_{t_{0}}^{T}(r_{u}-r^l_{u})\,du},\,
-x_{2}\Bbr_{t_0}e^{-\int_{t_{0}}^{T}(r_{u}-r^b_{u})\,du},\, x_{1}\kappa_{1}^{-1}\right\}
\ee
where
\bde
\kappa_{1}:=- (\Blr_{t_0})^{-1} +  (\Blr_{T})^{-1} e^{\int_{t_0}^{T}r_{u}\,du },
\quad \kappa_{2}:= (\Bbr_{t_0})^{-1}+  (\Bbr_{T})^{-1} e^{\int_{t_0}^{T}r_{u}\,du}.
\ede
Note that $\kappa_{1}>0$ and  $\kappa_{2}>0$ since from $r^{l}<r<r^{b}$, we obtain
\bde
-\int_{0}^{t_{0}}r^l_{u}\,du-\Big(\int_{t_0}^{T}r_{u}\,du-\int_{0}^{T}r^{l}_{u}\,du\Big)=-\int_{t_{0}}^{T}(r_{u}-r^l_{u})\,du<0.
\ede
and
\bde
-\int_{0}^{t_{0}}r^b_{u}\,du-\Big(\int_{t_0}^{T}r_{u}\,du-\int_{0}^{T}r^{b}_{u}\,du\Big)=-\int_{t_{0}}^{T}(r_{u}-r^b_{u})\,du>0,
\ede
Therefore, a constant $\alpha >0$ satisfying \eqref{alphax} exists and for
\bde
x := x_{1}+\alpha  (\Blr_{t_0})^{-1} - \alpha (\Blr_{T})^{-1}  e^{\int_{t_0}^{T}r_{u}\,du }\ge0,
\ede
we obtain
\bde
x \Blr_{t_0} - \cac=x_{1}\Blr_{t_0}-\alpha e^{\int_{t_{0}}^{T}(r_{u}-r^l_{u})\,du}\ge0.
\ede
Now we define the strategy $\phi = \big(\xi^1,\dots ,\xi^d, \psi^{l}, \psi^{b},\psi^{1,b},\dots ,\psi^{d,b}, \etab, \etal\big)$
where $\xi^i=\psi^{i,b}=\psi^{b}=\etab=\etal=0$ for $i=1,2,\ldots,d$ and
\bde
\psi^l_t = x \I_{[0,t_0)} + (\Blr_{t_0})^{-1}\big(x \Blr_{t_0} - \cac \big)  \I_{[t_0,T)} +  (\Blr_{T})^{-1} \Big(x \Blr_{T}
- \cac e^{\int_{t_{0}}^{T}r^l_{u}\,du} + \cac  e^{\int_{t_{0}}^{T}r_{u}\,du} \Big)\I_{[T,T]}.
\ede
The wealth process satisfies
\bde
\begin{array}
[c]{ll}
V_T (x, \varphi ,A,C) &= \big( x \Blr_{t_0} - \cac \big)  e^{\int_{t_{0}}^{T}r^l_{u}\,du} + \cac e^{\int_{t_{0}}^{T}r_{u}\,du}=  x \Blr_{T} - \cac  e^{\int_{t_{0}}^{T}r^l_{u}\,du} + \cac   e^{\int_{t_{0}}^{T}r_{u}\,du}\medskip\\
&=\Big(x_{1}+\alpha  (\Blr_{t_0})^{-1} -\alpha  (\Blr_{T})^{-1} e^{\int_{t_0}^{T}r_{u}\,du}\Big) \Blr_{T}
- \cac  e^{\int_{t_{0}}^{T}r^l_{u}\,du} + \cac   e^{\int_{t_{0}}^{T}r_{u}\,du}\medskip\\
&=x_{1} \Blr_{T} =V_{T}^{0}(x_{1}).
\end{array}
\ede
Therefore, the self-financing strategy $(x, \phi , A, \pC )$ replicates the contract $(A,C)$ on $[0,T]$. Moreover from the uniqueness of the related pricing BSDE, we know that this is the unique strategy.  From Definition \ref{definition of ex-dividend price}, it follows that
\bde
P^{h}_{0} (x_{1},A,C)=x-x_{1}=\alpha  (\Blr_{t_0})^{-1} -\alpha  (\Blr_{T})^{-1} e^{\int_{t_0}^{T}r_{u}\,du}=-\alpha\kappa_{1}
<0.
\ede
Let us now focus on the counterparty.  If we set
\bde
\widetilde{x}=x_{2}-\alpha (\Bbr_{t_{0}})^{-1}+\alpha (\Bbr_{T})^{-1} e^{\int_{t_0}^{T}r_{u}\,du}\leq0,
\ede
then we obtain
\bde
\widetilde{x} \Bbr_{t_0}+\cac=x_{2}\Bbr_{t_0}+\alpha e^{\int_{t_{0}}^{T}(r_{u}-r^b_{u})\,du}\leq0.
\ede
We define the strategy $\tilde{\phi} = \big(\tilde{\xi}^1,\dots ,\tilde{\xi}^d, \tilde{\psi}^{l}, \tilde{\psi}^{b},\tilde{\psi}^{1,b},\dots ,\tilde{\psi}^{d,b}, \tilde{\eta}^{b}, \tilde{\eta}^{l}\big)$
where $\tilde{\xi}^i=\tilde{\psi}^{i,b}=\tilde{\psi}^{l}=\tilde{\eta}^b=\tilde{\eta}^l=0$ for $i=1,2,\ldots,d$ and
\bde
\tilde{\psi}^b_t = \tilde{x} \I_{[0,t_0)} + (\Bbr_{t_0})^{-1}\big(\tilde{x} \Bbr_{t_0} +\cac \big)  \I_{[t_0,T)} +  (\Bbr_{T})^{-1} \Big( \tilde{x } \Bbr_{T}+ \cac e^{\int_{t_{0}}^{T}r^b_{u}\,du} - \cac  e^{\int_{t_{0}}^{T}r_{u}\,du} \Big)\I_{[T,T]}.
\ede
Then we have
\bde
\begin{array}
[c]{ll}
V_T (\tilde{x}, \tilde{\varphi} , -A,-C) &= \big( \tilde{x} \Bbr_{t_0}+\cac \big)  e^{\int_{t_{0}}^{T}r^b_{u}\,du} - \cac e^{\int_{t_{0}}^{T}r_{u}\,du} =  \tilde{x} \Bbr_{T} +\cac  e^{\int_{t_{0}}^{T}r^b_{u}\,du} -\cac   e^{\int_{t_{0}}^{T}r_{u}\,du}\medskip\\
&=\Big(x_{2}-\alpha  (\Bbr_{t_0})^{-1} +\alpha  (\Bbr_{T})^{-1} e^{\int_{t_0}^{T}r_{u}\,du }\Big) \Bbr_{T}
+ \cac  e^{\int_{t_{0}}^{T}r^b_{u}\,du}-\cac   e^{\int_{t_{0}}^{T}r_{u}\,du}\medskip\\
&=x_{2} \Bbr_{T}=V_{T}^{0}(x_{2}).
\end{array}
\ede
Hence $(\tilde{x},\tilde{ \phi} , -A, -\pC )$ is the unique self-financing strategy replicating the contract $(-A,-C)$ on $[0,T]$. From Definition \ref{remark for counterparty's ex-dividend price}, it follows that
\bde
P^{c}_{0} (x_{2},-A,-C)=x_{2}-\tilde{x}=\alpha  (\Bbr_{t_0})^{-1} -\alpha  (\Bbr_{T})^{-1} e^{\int_{t_0}^{T}r_{u}\,du }=\alpha\kappa_{2}>0.
\ede
We have thus found a date $\widehat{t}=0$ and a contract $(A,C)$ such that
\bde
P^{c}_{0} (x_{2},-A,-C)> \alpha\kappa_{2} >  0 > -\alpha\kappa_{1} = P^{h}_{0} (x_{1},A,C),
\ede
so that the range of bilaterally profitable prices ${\cal R}^p_0(x_1,x_2)$ is non-empty.
Note that here both parties are willing to pay a strictly positive
amount to the other party for the right to enter the contract. This completes the proof of part (ii).
\endproof

\bl \label{lemmnew3}
Assume that $x_1 x_2 =0$. If the function $g$ is given by (\ref{drift driver for positive and negative initial wealthxx}),
 then following inequality holds
\bde
\delta:=g(t,y+x_{1}B_{t}^{l},z)+g(t,-y+x_{2}B_{t}^{b},-z)-x_{1}r_{t}^{l}B_{t}^{l}-x_{2}r_{t}^{b}B_{t}^{b} \leq 0.
\ede
\el

\proof
In view of (\ref{drift driver for positive and negative initial wealthxx}), we have
\bde
\delta =-x_{1}r_{t}^{l}B_{t}^{l}-x_{2}r_{t}^{b}B_{t}^{b}-\sumik_{i=1}^d \ribb_t|\zzb|+\rll_t (\delta_{1}^{+}+\delta_{2}^{+})-\rbb_t (\delta_{1}^{-}+\delta_{2}^{-}),
\ede
where
\bde
\delta_{1}=y+x_{1}B_{t}^{l}+\sumik_{i=1}^{d}(\zzb )^{-},\quad
\delta_{2}=-y+x_{2}B_{t}^{b}+\sumik_{i=1}^{d}(- \zzb )^{-}.
\ede
We claim that
\bde
\delta\leq\min \left\{(\rll_t- \rbb_t )x_{2}B_{t}^{b}
+\sumik_{i=1}^d (\rll_t-\ribb_t)|\zzb|,\, (\rbb_t- \rll_t)x_{1}B_{t}^{l}+\sumik_{i=1}^d (\rbb_t-\ribb_t)|\zzb|\right\}.
\ede
Indeed, from $r^{l}\leq r^b$, we have
\bde
\begin{array}[c]{rl}
& \rll_t (\delta_{1}^{+}+\delta_{2}^{+})-\rbb_t (\delta_{1}^{-}+\delta_{2}^{-})\medskip\\
\leq & \min\{\rll_t (\delta_{1}+\delta_{2}),\, \rbb_t (\delta_{1}^{-}+\delta_{2}^{-})\}\medskip\\
=& \min\left\{\rll_t (x_{1}B_{t}^{l}+x_{2}B_{t}^{b}+\sumik_{i=1}^{d}|\zzb |),\, \rbb_t (x_{1}B_{t}^{l}+x_{2}B_{t}^{b}+\sumik_{i=1}^{d}|\zzb |)\right\}.
\end{array}
\ede
Thus
\bde
\begin{array}[c]{rl}
\delta &=-x_{1}r_{t}^{l}B_{t}^{l}-x_{2}r_{t}^{b}B_{t}^{b}-\sumik_{i=1}^d \ribb_t|\zzb|+\rll_t (\delta_{1}^{+}+\delta_{2}^{+})-\rbb_t (\delta_{1}^{-}+\delta_{2}^{-})\medskip\\
&\leq -x_{1}r_{t}^{l}B_{t}^{l}-x_{2}r_{t}^{b}B_{t}^{b}-\sumik_{i=1}^d \ribb_t|\zzb|\medskip\\
&\qquad+\min\left\{\rll_t (x_{1}B_{t}^{l}+x_{2}B_{t}^{b}+\sumik_{i=1}^{d}|\zzb |),\, \rbb_t (x_{1}B_{t}^{l}+x_{2}B_{t}^{b}+\sumik_{i=1}^{d}|\zzb |)\right\}\medskip\\
&= \min \left\{(\rll_t- \rbb_t )x_{2}B_{t}^{b}
+\sumik_{i=1}^d (\rll_t-\ribb_t)|\zzb|,\, (\rbb_t- \rll_t)x_{1}B_{t}^{l}+\sumik_{i=1}^d (\rbb_t-\ribb_t)|\zzb|\right\}.
\end{array}
\ede
If $x_{1}x_{2}=0$, then the right-hand side in the above inequality is non-positive.
We conclude that $\delta\leq0$. \endproof

\end{document}